\pgfplotsset{compat=1.18} 
\setlist{nolistsep,leftmargin=*}
\newtheorem{proposition}{Proposition}
\newtheorem{theorem}{Theorem}
\newtheorem{lemma}{Lemma}
\newtheorem{corollary}{Corollary}
\theoremstyle{definition}
\newtheorem{axiom}{Axiom}
\newtheorem{example}{Example}
\newtheorem{definition}{Definition}
\newtheorem{notation}{Notation}
\begin{document}

\title{\Huge Complexity Aversion\thanks{I am deeply grateful to Hans Haller, Eric Bahel, Matt Kovach, and Sudipta Sarangi for their guidance and encouragement throughout this project. I would also like to thank Yaojun Liu for his suggestions. All errors in this paper are solely the responsibility of the author.}}
\author{ Yuan Gu \thanks{Fuyao Institute for Advanced Study, Fuzhou, China. (E-mail: gyuan@fyust.org.cn).} \quad \quad \quad   Chao Hung Chan \thanks{Department of Economics, Virginia Tech, Blacksburg, USA (e-mail: chanhungc@vt.edu).}}
\date{}

\maketitle 
\begin{abstract}
\noindent ABSTRACT. This paper proposes a model of decision-making under uncertainty in which an agent is constrained in her cognitive ability to consider complex acts. We identify the complexity of an act according to the corresponding partition of state space. The agent ranks acts according to the expected utility net of complexity cost. A key feature of this model is that the agent is able to update her complexity cost function after the arrival of new information. The main result characterizes axiomatically an updating rule for complexity cost function, the Minimal Complexity Aversion representation. According to this rule, the agent measures the complexity cost of an act conditional on the new information by using the cost of another act that gives exactly the same partition of the event but with the lowest ex-ante cost.\\
\textbf{Keywords}: Complexity Aversion, Minimal Complexity Updating
\end{abstract}

\newpage
\section{Introduction} \label{sec1}

\noindent
\subsection{Motivation} \label{sec1.1}

An underlying assumption of classical expected utility theory, that is not explicitly formed as an axiom in the literature (e.g., \cite{savage1972foundations}, \cite{anscombe1963definition}), is that there is no contemplation cost when an agent considers an act. However, in reality, there is often a trade-off between choosing a more complex act that leads to more distinct consequences and saving on the cognitive resources  (or physical resources) of doing so. To account for this complexity aversion, I propose an axiomatic model that incorporates the  complexity costs of acts into the decision-making process.

In this paper, I model a decision maker making a choice under uncertainty who is constrained in her cognitive ability to consider complex acts. I identify $(\romannumeral1)$ to what extent the agent is constrained; $(\romannumeral2)$ how the agent measures the cost of acts after the reduction of uncertainty (due to the fact that after the arrival of new information, she can discard all irrelevant states). 

To illustrates how the complexity costs of acts can play a critical role in decision-making, especially when the consequences of different actions are uncertain or difficult to predict, let's consider a more detailed example of individual investment. Consider an investor who wants to decide her investment plan. A state of the world indicates the state of the economy. Each investment plan is an act, which attaches a payoff to each state. However, there are substantive economic indicators for both microeconomics and macroeconomics. It is very difficult for individual investors to figure out how the economy will perform. As a result, when the investor considers an investment plan, she may be limited in her ability to consider every possible outcome associated with each state of the world that the act gives. Even if a sophisticated investment plan offers different outcomes for each state, the investor may be unwilling to choose it because of the high complexity cost of this plan. Consequently, she may choose a simpler investment plan, such as putting money in the bank. 

How does an agent like the investor I describe above choose acts? Naturally, when the agent chooses an act, she  identifies the complexity of an act $f$ by $\sigma(S^{f})$ which is the $\sigma$-algebra generated by act $f$, where $S^{f}$ is the partition induced by $f$. $f$ is more complex than $g$ if $\sigma(f)$ is finer than $\sigma(g)$. In words, the agent considers $f$ is more complex than $g$ if she has to consider more distinct outcomes that $f$ gives than $g$ does. Thus, if $f$ is too complex compared with $g$, she may choose $g$ even when $f$ yields higher expected utility than $g$.

\begin{example}
            \hypertarget{exp1}{To} fix the idea, suppose uncertainty is modeled by the state space $\Omega=\{\omega_{1}, \omega_{2}, \omega_{3}\}$, where $\omega$ indicates the state of the economy.  There are two investment plans available which can be represented as vectors of state-contingent payoffs as follows: 
            \end{example} 
	        \begin{table}[ht]
            \centering
            \renewcommand{\arraystretch}{2.0}{
	        \setlength{\tabcolsep}{4mm}{
		        \begin{tabular}{|c|c|c|c|l|}
			    \hline
			    $\Omega$&  $\omega_{1}$ & $\omega_{2}$  &  $\omega_{3}$ & $S$\\
			    \hline
			    $f$ & $1$ & $3$ & $4$ & $S^{f}=\big\{\{\omega_{1}\},\{\omega_{2}\},\{\omega_{3}\}\big\}$\\
			    \hline
			    $g$ & $2$ & $2$ & $2$ & $S^{h}=\{\Omega\}$\\
			    \hline
	            \end{tabular}}}
            \caption{Investment Example.}
            \label{tab:table2.1}
            \end{table}

 \noindent Her prior is $\mu=(\frac{1}{3},\frac{1}{3},\frac{1}{3})$. It is easy to see that $f$ is more complex than $g$ and $f$ gives higher expected utility than $g$. However, suppose we observe that the investor chooses $g$, which is not the choice predicted by SEU. We can infer that the investor chooses $g$ because she cannot bear the complexity cost of $f$.
 
 I propose an axiomatic model of an agent who is constrained in her cognitive resources to consider complex acts, like the investor above. The primitive is a collection of preferences relations on acts. One of the key aspects of the model is that the agent chooses an act that maximizes expected utility net of complexity cost.  
 
 I introduce a new axiom, called \hyperlink{ac}{Aversion to Complexity}, that depicts the agent's aversion to complex acts. Suppose for any acts $f,g$ and constant outcome $x$, we observe the following preference
            \begin{equation*}
 	        \alpha x + (1-\alpha)f\sim \beta x + (1-\beta) f,
            \end{equation*}
            for any $\alpha,\beta\in(0,1)$ with $\alpha>\beta$. That is the agent is indifferent to acts $x$ and $f$ if the agent does not have to consider the complexity cost of $f$. Since $x$ is the simplest act, if the agent is indeed complexity averse, we anticipate that she will prefer $\lambda x + (1-\lambda)g$ to the randomization over $f$ and $g$ if $\lambda f + (1-\lambda) g$ is more complex than $g$, i.e., her preference should be
            \begin{equation*}
 	        \lambda x + (1-\lambda)g \succsim \lambda f + (1-\lambda)g
            \end{equation*}
            for any $\lambda\in (0,1)$. In other words, since $\lambda f + (1-\lambda)g$ causes at least as high complexity cost as $\lambda x + (1-\lambda)g$, the agent has to devote more cognitive efforts to process it. However, an SEU agent is indifferent between $\lambda x + (1-\lambda)g $ and  $\lambda f + (1-\lambda)g$. 
 
The above axiom, together with other modified classical SEU axioms characterizes the representation 
            \begin{equation*}
 	        V(f) = \int_{\omega \in \Omega} u(f(\omega))\mu(d\omega)-\mathcal{C}\big(\sigma(f)\big)
            \end{equation*}
 
\noindent where $\mathcal{C}$ is the function used to measure the complexity cost of any act $f$,\footnote{See \hyperref[sec3.2]{Section 3.2} for more discuss of properties of the complexity cost function.} $u$ is a utility index and $\mu$ is a probability measure. The representation suggests that the agent ranks acts according to expected utility net of complexity cost. Since the agent is self-aware of her cognitive constraint, she compares acts by considering the complexity cost which can be identified by the $\sigma$-algebra generated by acts. The formal definition of cost function is given in \hyperlink{def1}{Definition 1}. One key feature of this representation is that I do not assume any specific forms of cost function except for monotonicity.
 
 Furthermore,  suppose that the investor will receive a report about the state of the economy which indicates that the true state must lie in $E = \{\omega_{2}\}$.\footnote{Note that we assume the analyst (or we call observer, consulting manger) and the agent (the investor) have the same and correct understanding of the report.} Now, after the arrival of this new information, we observe that $f$ is chosen. Therefore, we observe preference reversal which cannot be explained by SEU. It is obvious that $S^{f|E} = S^{g|E} = \{\omega_{2}\}$, where $S^{f|E}$ is the partition on $E$ induced by $f$. Thus, we can infer that the investor uses a new cost function to measure the complexity cost of acts after receiving the new information. That is, with the reduction of uncertainty, she is able to process some complex acts that she could not deal with before.  \vskip 0.1in
 
 To model the investor's ex post choice, I propose a possible updating procedure to deal with incoming information. The agent has to decide how to measure the complexity cost of acts after receiving the new information. In my setting, the agent precisely knows that all states outside of an event $E$ can be discarded after the occurrence of $E$.\footnote{This is stated as an axiom in the standard expected utility theory, which is well known as \hyperlink{conse}{Consequentialism}.} Then, it is intuitive to reason that the agent does not have to consider the complexity of an act $f$ outside of $E$. From this point of view, given $E\in\Sigma^{\prime}$ and $\mu\in\Delta(\Omega)$, we suggest the following conditional complexity cost function: 
            \begin{equation*}
 	        \mathcal{C}_{E,\mu}\big(\sigma(f)\big)=min\big\{\mathcal{C}\big(\sigma(h)\big)/ \mu(E): h\in \mathcal{F} , \text{ and }\sigma(h|E)=\sigma(f|E)\big\},
            \end{equation*}
        where $\mathcal{F}$ is the set of all acts and $\sigma(f|E)$ is the $\sigma$-algebra generated by act $f\in \mathcal{F}$ on $E$. $\mathcal{C}$ is the complexity cost function that is defined in \hyperlink{def1}{Definition 1}. If an agent's conditional complexity cost function is formed as above, I say that she is a minimal complexity updater. The minimal complexity updating rule indicates that the agent measures the complexity cost of an act $f$ conditional on $E$ by using the cost of act $h$ that gives exactly the same partition as $f$ on $E$, but has lowest cost on $\Omega$.   
 
 Then, the conditional preference $\succsim_{E}$ is represented by
            \begin{equation*} 
 	        V(f|E) = \int_{\omega \in \Omega} u(f(\omega))\mu_{E}(d\omega)-\mathcal{C}_{E,\mu}\big(\sigma(f)\big),
            \end{equation*}  
 
\noindent where  $\mu_{E}$ is the Bayesian update of $\mu$ conditional on $E$ and $\mathcal{C}_{E,\mu}$  is the conditional cost function we discussed above. The agent constructs her conditional preferences  by two updating rules. First, she updates her prior beliefs by Bayes' rule. Second, she updates her measure of complexity costs of acts by minimal complexity updating rule. 
 
To identify the behavior of a complexity averse agent after the arrival of new information, I introduce the following key axiom, called \hyperlink{mcu}{Minimal Complexity Updating}. Suppose for any non-null event $E$, $f\in\mathcal{F}$ and $x,z\in\bar{\mathcal{F}}$, we observe the following preference
            \begin{equation*} 
 	        fEz\succsim_{E}x,
            \end{equation*}  
            That is the agent prefers $fEz$ to $x$ conditional on $E$. Then given another act $fEz^{\prime}$, where $\sigma(fEz)\subset \sigma(fEz^{\prime})$. If the agent is a minimal complexity updater, we anticipate that her preference between $fEz^{\prime}$ and $x$ should be
              \begin{equation*} 
 	        fEz^{\prime}\succsim_{E}x.
            \end{equation*} 
The two acts $fEz$ and $fEz^{\prime}$ are the same on $E$, but $fEz^{\prime}$ is more complex than $fEz$ on $\Omega$. If the agent uses a minimal complexity cost function to measure the complexity cost of an act after the arrival of new information, the above two acts should be assigned the same complexity cost.

It is well known that \hyperlink{conse}{Consequentialism} and \hyperlink{dc}{Dynamic Consistency} imply Bayes' rule. However, the minimal complexity updating rule suggests that the agent's behavior will depart from \hyperlink{dc}{Dynamic Consistency}. Thus, instead of \hyperlink{dc}{Dynamic Consistency}, we suggest a novel axiom, called \hyperlink{dca}{Dynamic Complexity Aversion}. 
 
Given any act $g$, we can construct $h=gEx$ such that $\mathcal{C}(h)\leq \mathcal{C}(gEx^{\prime})$ for any constant outcome $x^{\prime}$. Suppose we observe the agent's preference to be
            \begin{equation*}
 	        fEx\succsim gEx,
            \end{equation*}
where $gEx$ is the act $g$ outside of $E$ but yields $x$ for all $\omega\in E$. \hyperlink{dca}{Dynamic Complexity Aversion} requires that she would not prefer $g$ to $f$ if $E$ actually happens, i.e. her conditional preference should be
            \begin{equation*}
 	        f\succsim_{E} g.
            \end{equation*}
Therefore,  \hyperlink{dca}{Dynamic Complexity Aversion} allows for violations of \hyperlink{dc}{Dynamic Consistency}.
 
A noteworthy remark is that, I can impose different structures on the conditional complexity cost function. By doing this, appropriate behavioral properties (axioms) should be formalized. 

In the above example, I discuss how my model can be used to describe those kind of contexts in which agents are aware of better choices but constrained in cognitive ability. This model does not require the observer to know an agent's complexity cost function --- which can be identified by the act-choice data. Moreover, the example shows the importance of new information that enables an agent remeasures the complexity of an act by means of a conditional complexity cost function. 

I extend the existing studies in two ways. First, I propose a framework of choice under uncertainty that models a decision maker who faces constraints on her cognitive ability of choosing an act. The agent is able to identify the complexity of an act according to the corresponding partition of state space. In the literature, a partition of the state space is described as an information structure\footnote{It is widely discussed in the rational inattention literature. This paper also motivated by studies of rational inattention.} (e.g., \cite{ellis2018foundations}), or an agent's understanding of uncertainty (e.g.,\cite{ahn2010framing}, \cite{minardi2019subjective}). Here, I call it complexity. The complexity costs can be identified from act-choice data. Second, I propose a novel conditional complexity cost function and provide axiomatic foundations for this updating rule of the complexity cost function. In this model, preference reversal is not explained by belief updating but instead by the agent's updated complexity cost function. When the agent is told that the true state must belong to a particular event, she needs not consider the complexity of an act outside of this event. Thus, if the complexity of two acts differs only outside of an event, then the complexity costs are the same for the two acts after the arrival of new information. If the agent discards all irrelevant states after receiving the new information, but still uses the same level of cognitive efforts to process acts, her choice under this circumstance can be viewed as arising from cost-measuring biases.

\indent I also show that how rational inattention can be embedded into our framework. The main challenge here is that, in the rational inattention setting, the agent cannot observe the new information clearly, she has to decide which state is possible. Moreover, I show that with the conditional cognition cost function, an inattentive agent can reallocate her attention after the occurrence of an event. Also, I develop an application to a problem of optimal contract design. Moreover, I suggests a new competitive equilibrium concept in which consumers are complexity averse.\footnote{See \hyperref[appD]{Appendix D} for more details.}  

\subsection{Experimental and Empirical Evidence} \label{sec1.2}

In this subsection, I provide evidence for complexity aversion. \hyperref[sec2]{Section 2} introduces the formal setup. \hyperref[sec3]{Section 3} presents the general model: the axioms that define preference relations and the representation theorem. \hyperref[sec4]{Section 4} includes the discussion of the comparison between different agents. \hyperref[sec5]{Section 5} develops two applications to optimal contract design and attention reallocation. \hyperref[sec6]{Section 6} discusses related theoretical literature.

Indeed, there are lots of experimental evidence suggesting that agents are complexity averse. Early studies, focusing on testing SEU theories, show that if lotteries have the same number of outcomes, then the experimental data can be well explained by SEU. If lotteries have different numbers of outcomes, most of the choices violate SEU (e.g., \cite{conlisk1989three}, \cite{harless1991utility}, and \cite{sopher1993test}). \cite{moffatt2015heterogeneity} document the evidence that agents do exhibit complexity aversion, but the degree of complexity aversion is decreasing with more rounds of experiment, where complexity is defined in terms of the number of different outcomes in the lottery.\footnote{There is also other definition of complexity. For example, \cite{mador2000complexity} and \cite{puri2018preference} define complexity as the size of a lottery's support.} In this paper, I introduce an updating rule of complexity cost function which can be used to explain the reduction of complexity aversion. Other similar studies also use the above definition of complexity, such as \cite{huck1999risk}, \cite{sonsino2002complexity}, and \cite{sonsino2001preference}.  Instead of lotteries, our model studies preferences over acts.

Regarding evidence out of the lab. \cite{garrod2008competition} find evidence that customers'  loyalty to one brand may be induced by complexity aversion to various products. \cite{anagol2017understanding} find that life insurance agents tend to suggest customers buy a dominated product such as whole life insurance instead of a combination of investments. \cite{egan2019brokers} finds similar evidence in the bond market. They show that brokers advise consumers to purchase dominated bonds. 

\section{Setup} \label{sec2}

\indent There is a (nonempty) finite set $\Omega$ of states of the world, and elements $E,E^{\prime}$ in $\Sigma=2^{\Omega}\backslash\{\emptyset\}$, referred to as \textsl{events}. I denote by $\Delta(\Omega)$ the set of all probability measures on $\Omega$. The $\mu \in \Delta(\Omega)$ are called \textsl{beliefs}. Let $X$ denote the set of \textsl{consequences}, which is assumed to be a convex subset of a vector space (see \cite{maccheroni2006ambiguity}). For example, this is the case if $X$ is the set of all lotteries on a set of prizes (this is the classical setting of \cite{maccheroni2006ambiguity}). 

\indent The following notations are about information partition. Let $\mathbb{P}$ denote the set of partitions of $\Omega$, denoting $S\in \mathbb{P}$ for a partition of $\Omega$, i.e. $S=\{s_{1},s_{2},\cdots,s_{L}\}$ with $\cup s_{l}= \Omega$, $s_{l}\neq \emptyset$ for all $l\in \{1,\cdots,L\}$, and $s_{i}\cap s_{j}=\emptyset$ for all $i\neq j \in \{1,\cdots,L\}$. For any partitions $S,S^{\prime}\in \mathbb{P}$, I denote by $\sigma(S)$ the algebra generated by partition $S\in\mathbb{P}$, and I say partition $S$ is finer than $S^{\prime}$ if $\sigma(S^{\prime})\subset \sigma(S)$. 

\indent Let $\mathcal{F}$ denote the set of functions $f: \Omega \rightarrow X$, which are referred to as \textsl{acts}. That is, an act is a function attaching a consequence to each state of the world, e.g., an amount of money. Let $\mathcal{F}^{S}$ denote the set of acts that respect the partition $S$, i.e, $\sigma(f)=\sigma\big(\{f^{-1}(x): x\in f(\Omega)\}\big)=\sigma(S^{f})$ where $S^{f}$ is the partition corresponding to act $f$.\footnote{With $\sigma$-algebra, the model can be extended to a more generalized setting where the state space is infinite.} That is, for any $f\in \mathcal{F}^{S}$, and any $\omega, \omega^{\prime}\in E\in S^{f}$, we have $f(\omega)=f(\omega^{\prime})$. So we obtain $\mathcal{F}=\bigcup_{S\in\mathbb{P}}\mathcal{F}^{S}$. Furthermore, for any $f,g\in\mathcal{F}$, if $\sigma(g)\subset \sigma(f)$, I say $f$ is more complex than $g$. Similarly, for any $f\in \mathcal{F}$, and any $E\in \Sigma$, I use $\sigma(f|E)$ to denote the $\sigma$-algebra that is generated by act $f$ conditional on the event $E$, and denote by $S^{f|E}$ the corresponding partition of act $f$ conditional on $E$. Following a standard abuse of notation, I denote by $x\in \mathcal{F}$ the \textsl{constant act} yielding $x\in X$ in every state. $\overline{\mathcal{F}}$ denotes the set of all constant acts. For any two acts $f, g \in \mathcal{F}$, and for any event $E \in \Sigma$, let $fEg$ denote the act that returns $f(\omega)$ for every $\omega \in E$, and returns $g(\omega)$ for every $\omega \in \Omega \backslash E$. The linear structure of $X$ allows mixtures to be defined as following: for any $f,g\in\mathcal{F}$, and $\alpha\in[0,1]$, a state-wise mixture of two acts $f,g \in \mathcal{F}$ is $\alpha f + (1-\alpha) g$  which is identified as $(\alpha f + (1-\alpha) g)(\omega):=\alpha f(\omega) + (1-\alpha) g(\omega)$. Additionally, for any $f\in\mathcal{F}$ and $E\in \Sigma$, let $f(E):=\{x: f(\omega)=x \text{ and } \omega\in E\}$. 

\indent  The primitive is a class of preference relations $\{\succsim_{E}\}_{E\in\Sigma}$ on $\mathcal{F}$. The agent observes the realization of event $E\in\Sigma$, then chooses an act that maximizes the conditional expected utility. Correspondingly, $\succsim_\Omega$ (the ex ante preference relation) is the case when the agent receives no information, or before he receives it. 

\section{Complexity Aversion} \label{sec3}

\subsection{Foundations} \label{sec3.1}

            I start by presenting axioms for the preference order defined on the set $\mathcal{F}$. Conditional preference $\succsim_{E}$ has the same characteristics. 

            \begin{axiom}[\hypertarget{wo}{Weak order}]
            $\succsim$ \textsl{is reflexive, transitive and complete}. 
            \end{axiom}

            In order to represent the agent's preferences with a utility function, it is essential that her preferences exhibit transitivity and completeness. These properties do not necessarily require the agent to have a full understanding of acts as functions that attach outcomes to states. The Weak Order axiom simply requires that the agent will make a certain choice and will not randomly make such decisions.

            \begin{axiom}[\hypertarget{continuity}{Continuity}] 
            \textsl{For any} $f, g, h\in \mathcal{F}$, $\alpha\in[0,1]$, \textsl{and} $\sigma(\alpha f+(1-\alpha) g)=\sigma(f)$, \textsl{the following sets are closed}: 
            \begin{equation*}
	        \{\alpha \in[0,1]: \alpha f+(1-\alpha) g \succsim h\} \quad \textsl{ and } \quad\{\alpha \in[0,1]: h \succsim \alpha f+(1-\alpha) g\}.
            \end{equation*} 
            \end{axiom}

            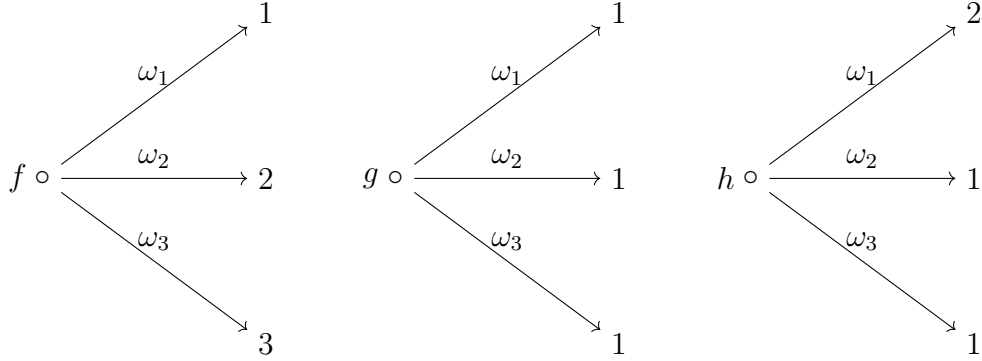
\begin{figure}[ht]
            \centering
            \begin{minipage}{0.3\textwidth} 
            \begin{tikzpicture}[scale=1.1]

            \node (a) at (0,0) {$f$};    
            \node (b) at (0.3,0) {$\circ$};
            \node (z1) at (3,2) {$1$};    
            \node (z2) at (3,0) {$2$};    
            \node (z3) at (3,-2) {$3$};

            \draw [->] (b) -- (z1) node[midway,above] {$\omega_{1}$};     
            \draw [->] (b) -- (z2) node[midway,above] {$\omega_{2}$};     
            \draw [->] (b) -- (z3) node[midway,above] {$\omega_{3}$}; 

            \end{tikzpicture}
            \end{minipage}
            \begin{minipage}{0.3\textwidth} 
            \begin{tikzpicture}[scale=1.1]

            \node (a) at (0,0) {$g$};    
            \node (b) at (0.3,0) {$\circ$};
            \node (z1) at (3,2) {$1$};    
            \node (z2) at (3,0) {$1$};    
            \node (z3) at (3,-2) {$1$};

            \draw [->] (b) -- (z1) node[midway,above] {$\omega_{1}$};     
            \draw [->] (b) -- (z2) node[midway,above] {$\omega_{2}$};     
            \draw [->] (b) -- (z3) node[midway,above] {$\omega_{3}$}; 

            \end{tikzpicture}
            \end{minipage}
            \begin{minipage}{0.3\textwidth} 
            \begin{tikzpicture}[scale=1.1]

            \node (a) at (0,0) {$h$};    
            \node (b) at (0.3,0) {$\circ$};
            \node (z1) at (3,2) {$2$};    
            \node (z2) at (3,0) {$1$};    
            \node (z3) at (3,-2) {$1$};

            \draw [->] (b) -- (z1) node[midway,above] {$\omega_{1}$};    
            \draw [->] (b) -- (z2) node[midway,above] {$\omega_{2}$};     
            \draw [->] (b) -- (z3) node[midway,above] {$\omega_{3}$}; 

            \end{tikzpicture}
            \end{minipage}
            \caption{Three acts $f$, $g$, and $h$.}
            \label{fig:figure1}
            \end{figure}

            Full continuity is too strong in my setting. For example, consider three acts in \hyperref[fig:figure1]{Figure 1}. Let's construct two acts $f_{n} = \frac{1}{n}h + (1-\frac{1}{n})f$ and $g_{n} = \frac{1}{n}h + (1-\frac{1}{n})g$ such that $f_{n}\succsim g_{n}$ for all $n\in\mathbb{N}$.\footnote{With three states, the complexity cost is $\mathcal{C}=\{\mathcal{C}(\{\Omega\}),\mathcal{C}(\{\{\omega_{1},\omega_{2}\},\{\omega_{3}\}\}),\mathcal{C}(\{\{\omega_{1}\},\{\omega_{2}\},\{\omega_{3}\}\})\}$. Then the costs of acts are defined trivially.} It is easy to see that $f_{n} \rightarrow f$ and $g_{n} \rightarrow g$ as $n\rightarrow \infty$. If we don't restrict the complexity level of $f_{n}$ and $g_{n}$, it is possible that the agent prefers $g$ to $f$ when $\mathcal{C}(f)\gg \mathcal{C}(g)$. The reason behind this is that the complexity level of the mixture of two acts is ambiguous. In this case, her choice violates continuity.
\ \\
            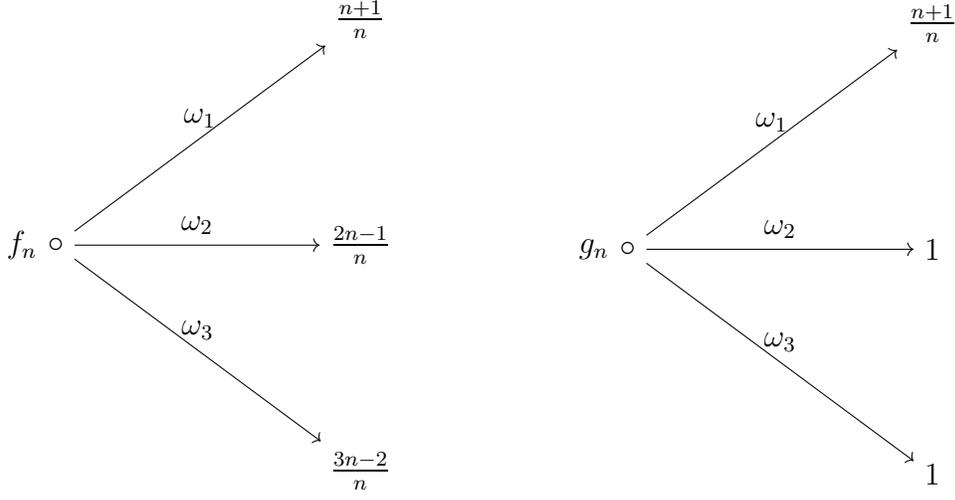
\begin{figure}[ht]
            \centering
            \begin{minipage}{0.49\textwidth} 
            \begin{tikzpicture}[scale=1.5]

            \node (a) at (0,0) {$f_{n}$};  
            \node (b) at (0.3,0) {$\circ$};
            \node (z1) at (3,2) {$\frac{n+1}{n}$};    
            \node (z2) at (3,0) {$\frac{2n-1}{n}$};    
            \node (z3) at (3,-2) {$\frac{3n-2}{n}$};

            \draw [->] (b) -- (z1) node[midway,above] {$\omega_{1}$};     
            \draw [->] (b) -- (z2) node[midway,above] {$\omega_{2}$};     
            \draw [->] (b) -- (z3) node[midway,above] {$\omega_{3}$}; 

            \end{tikzpicture}
            \end{minipage}
            \begin{minipage}{0.49\textwidth} 
            \begin{tikzpicture}[scale=1.5]

            \node (a) at (0,0) {$g_{n}$};    
            \node (b) at (0.3,0) {$\circ$};
            \node (z1) at (3,2) {$\frac{n+1}{n}$};    
            \node (z2) at (3,0) {$1$};    
            \node (z3) at (3,-2) {$1$};

            \draw [->] (b) -- (z1) node[midway,above] {$\omega_{1}$};     
            \draw [->] (b) -- (z2) node[midway,above] {$\omega_{2}$};     
            \draw [->] (b) -- (z3) node[midway,above] {$\omega_{3}$}; 

            \end{tikzpicture}
            \end{minipage}
            \caption{Two acts $f_{n}$ and $g_{n}$.}
            \label{fig:figure2}
            \end{figure}

            \begin{axiom}[\hypertarget{wci}{Weak Certainty Independence}] 
            \textsl{For any} $f,g\in \mathcal{F}$, $x,x^{\prime}\in \overline{\mathcal{F}}$, \textsl{and} $\alpha\in (0,1)$,
            \begin{equation*}
            \alpha f + (1-\alpha) x \succsim \alpha g+(1-\alpha) x \Longleftrightarrow \alpha f+(1-\alpha) x^{\prime} \succsim \alpha g+(1-\alpha) x^{\prime}.
            \end{equation*} 
            \end{axiom}

            \indent See \cite{maccheroni2006ambiguity}.\footnote{\cite{de2017rationally} and \cite{ergin2010unique} also discuss this axiom for preferences defined on menus.} Before illustrating this axiom, I introduce the standard independence axiom: 
            
            \noindent \textbf{Axiom I} (\hypertarget{indep}{Independence}). \textsl{For any} $f,g,h\in \mathcal{F}$, \textsl{and} $\alpha\in (0,1)$,
            \begin{equation*}
	        f \succsim g \Longleftrightarrow \alpha f+(1-\alpha) h \succsim \alpha g+(1-\alpha) h.
            \end{equation*} 

            \indent \hyperlink{indep}{Axiom I} is too restrictive in my setting. To see this, let's revisit the \hyperlink{exp1}{investment example}. Except for 
            $f$ and $g$, we have another act $h=(3,1,0)$. Suppose we observe $g\succsim f$. By Axiom I, we expect to observe $\alpha g+(1-\alpha) h \succsim \alpha f+(1-\alpha) h$. However, this is not the case in my model. Let $\alpha = \frac{1}{2}$, then the corresponding partition of $\alpha g+(1-\alpha) h$ is $\{\{\omega_{1}\},\{\omega_{2}\},\{\omega_{3}\}\}$. Similarly, the corresponding partition of $\alpha f+(1-\alpha) h$ is $\{\Omega\}$. It is easy to see that $\alpha f+(1-\alpha) h$ gives higher expected utility than $\alpha g+(1-\alpha) h$ and is less complex than $\alpha g+(1-\alpha) h$. Hence, the agent prefers $\alpha f+(1-\alpha) h$ to $\alpha g+(1-\alpha) h$, which violates Independence. \hyperlink{wci}{Weak Certainty Independence} is more compatible with the agent's behavior in my setting. By the definition of $\sigma(f)$, $\sigma( \alpha f+(1-\alpha) x)$ and $\sigma(f)$ have the same level of complexity for any $f\in\mathcal{F}$ and any $\alpha\in(0,1)$.

            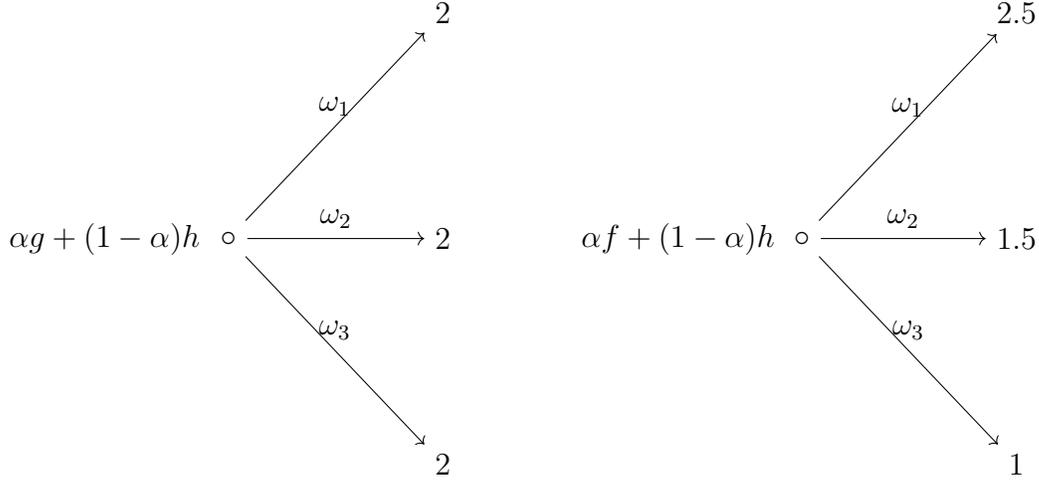
\begin{figure}[ht]
            \centering
            \begin{minipage}{0.49\textwidth} 
            \begin{tikzpicture}[scale=1.5]

            \node (a) at (0,0) {$\alpha g + (1-\alpha) h$};    
            \node (b) at (1.1,0) {$\circ$};
            \node (z1) at (3,2) {$2$};    
            \node (z2) at (3,0) {$2$};    
            \node (z3) at (3,-2) {$2$};

            \draw [->] (b) -- (z1) node[midway,above] {$\omega_{1}$};     
            \draw [->] (b) -- (z2) node[midway,above] {$\omega_{2}$};     
            \draw [->] (b) -- (z3) node[midway,above] {$\omega_{3}$}; 

            \end{tikzpicture}
            \end{minipage}
            \begin{minipage}{0.40\textwidth} 
            \begin{tikzpicture}[scale=1.5]

            \node (a) at (0,0) {$\alpha f + (1-\alpha) h$};  
            \node (b) at (1.1,0) {$\circ$};
            \node (z1) at (3,2) {$2.5$};    
            \node (z2) at (3,0) {$1.5$};    
            \node (z3) at (3,-2) {$1$};

            \draw [->] (b) -- (z1) node[midway,above] {$\omega_{1}$};     
            \draw [->] (b) -- (z2) node[midway,above] {$\omega_{2}$};     
            \draw [->] (b) -- (z3) node[midway,above] {$\omega_{3}$}; 

            \end{tikzpicture}
            \end{minipage}
            \caption{Violation of Independence.}
            \label{fig:figure3}
            \end{figure}
 
            \begin{axiom}[\hypertarget{wm}{Weak Monotonicity}] 
            \textsl{For any} $f,g\in \mathcal{F}$, \textsl{if} $\sigma(f)\subset \sigma(g)\textsl{ and }f(\omega)\succsim g(\omega)$ \textsl{for all} $\omega\in \Omega$, \textsl{then} $f\succsim g$.
            \end{axiom}

            \indent For the monotonicity axiom, I put an additional restriction compared to the traditional definition of monotonicity. To illustrate the idea, consider the following case.  If $f$ is more complex than $g$, the cost of $f$ is greater than $g$, thus, $f(\omega)\succsim g(\omega)$ cannot make sure $f\succsim g$. Thus, I require weak monotonicity.  First, if $f$ and $g$ have the same level of complexity, then $f(\omega)\succsim g(\omega)$ for all $\omega\in \Omega$ directly implies $f\succsim g$.  Second, if $\sigma(f)\subset\sigma(g)$, that is $g$ is more complex than $f$, the cost of $f$ is less than $g$, thus we can apply the weak monotonicity to this case. The violation of monotonicity is documented in many studies such as \cite{birnbaum2008new}, and \cite{gneezy2006uncertainty}). They show that agents prefer a dominated lottery. This phenomenon cannot be explained by standard expected utility theory. In contrast, the behavior of a complexity averse agent may violate monotonicity and can be characterized by this model. 

            \begin{notation} 
            \hypertarget{notation1}{\textsl{For}} \textsl{any act} $g\in\mathcal{F}$, \textsl{let} $\mathcal{F}^{c}(g) = \{f: \sigma(g)\subset \sigma(\lambda f + (1-\lambda)g) \textsl{ for any } \lambda\in(0,1)\}$.  
            \end{notation}

            \begin{axiom}[\hypertarget{ac}{Aversion to Complexity}] 
            \textsl{For any} $\alpha,\beta,\lambda\in(0,1)$ \textsl{with} $\alpha>\beta$, \textsl{any} $x\in \overline{\mathcal{F}}$, $g\in \mathcal{F}$, \textsl{and} $f\in\mathcal{F}^{c}(g)$ \textsl{we have}
            \begin{equation*}
	        \alpha x + (1-\alpha) f \sim \beta x + (1-\beta) f \ \ \Longrightarrow\ \  \lambda x+(1-\lambda) g \succsim \lambda f+(1-\lambda) g.
            \end{equation*} 
            \end{axiom}
            
            The challenge of characterizing the agent's attitude to complex acts is to use appropriate axiom to depict her considerations of the complexity cost of acts. In my framework, since the agent has limited cognitive ability to process complex acts, she has to balance the desire for complex acts and the cost of complexity. For any $\alpha,\beta\in(0,1)$ with $\alpha>\beta$, $\alpha x + (1-\alpha) f \sim \beta x + (1-\beta) f$ implies that the agent is indifferent between $x$ and $f$ if she does not have to consider the complexity cost of $f$. Then, if she has to choose an act between $\lambda x+(1-\lambda) g$ and $\lambda f+(1-\lambda) g$, she will choose the former one since $\lambda f+(1-\lambda) g$ is more complex than $\lambda x+(1-\lambda) g$. However, an SEU agent would be indifferent between these two acts, if she is indifferent between $x$ and $f$. 
            
            To illustrate this axiom, consider the following modified investment example in \hyperref[fig:figure4]{Figure 4}. Now let $f = (1,2,3)$ and $g = (2,2,2)$. It is obvious that acts $f$ and $g$ have the same expected utility given $\mu=(\frac{1}{3},\frac{1}{3},\frac{1}{3})$. In other words, the agent is indifferent between $f$ and $g$ if she does not have to consider the complexity cost of $f$. Suppose she has two more options: $\lambda f + (1-\lambda) h$ and $\lambda g + (1-\lambda) h$, where $h = (4,4,6)$. Again, we  know that $\lambda f + (1-\lambda) h$ has the same expected utility as $\lambda g + (1-\lambda) h$. Moreover, it is easy to see that $\sigma(\lambda f + (1-\lambda) h)$ is finer than $\sigma(g)$ for all $\lambda\in(0,1)$. Thus, if the agent is complexity averse, she prefers $\lambda g + (1-\lambda) h$ to $\lambda f + (1-\lambda) h$.\footnote{By the definition of $\sigma(f)$, $\sigma( \alpha f+(1-\alpha) x)=\sigma(f)$ for any $f\in\mathcal{F}$ and any $\alpha\in(0,1)$.}
\ \\
            \begin{figure}[ht]
            \centering
            \begin{minipage}{0.49\textwidth} 
            \begin{tikzpicture}[scale=1.5]

            \node (a) at (0,0) {$\alpha g + (1-\alpha) h$}; 
            \node (b) at (1.1,0) {$\circ$};
            \node (z1) at (3,2) {$4-2\lambda$};    
            \node (z2) at (3,0) {$4-2\lambda$};    
            \node (z3) at (3,-2) {$6-4\lambda$};

            \draw [->] (b) -- (z1) node[midway,above] {$\omega_{1}$};     
            \draw [->] (b) -- (z2) node[midway,above] {$\omega_{2}$};     
            \draw [->] (b) -- (z3) node[midway,above] {$\omega_{3}$}; 

            \end{tikzpicture}
            \end{minipage}
            \begin{minipage}{0.49\textwidth} 
            \begin{tikzpicture}[scale=1.5]

            \node (a) at (0,0) {$\alpha f + (1-\alpha) h$};   
            \node (b) at (1.1,0) {$\circ$};
            \node (z1) at (3,2) {$4-3\lambda$};    
            \node (z2) at (3,0) {$4-2\lambda$};    
            \node (z3) at (3,-2) {$6-3\lambda$};

            \draw [->] (b) -- (z1) node[midway,above] {$\omega_{1}$};     
            \draw [->] (b) -- (z2) node[midway,above] {$\omega_{2}$};     
            \draw [->] (b) -- (z3) node[midway,above] {$\omega_{3}$}; 

            \end{tikzpicture}
            \end{minipage}
            \caption{An Example illustrating Aversion to Complexity.}
            \label{fig:figure4}
            \end{figure}
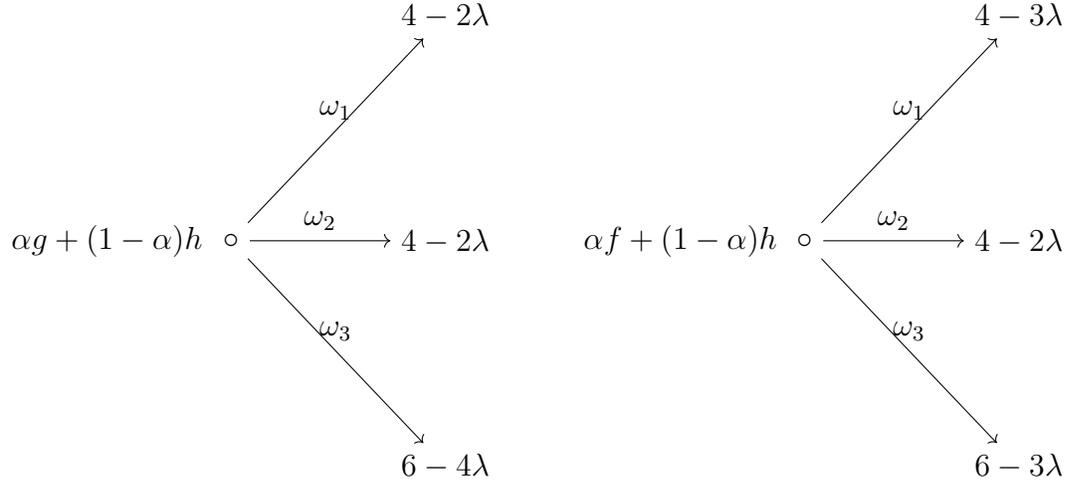

            \begin{axiom}[\hypertarget{unb}{Unboundedness}] 
            \textsl{There exist outcomes} $x$ \textsl{and} $y$, \textsl{with} $x\succ y$, \textsl{such that for any} $\alpha\in (0,1)$, \textsl{there exists an outcome} $z$ \textsl{such that either} $y\succ \alpha z+(1-\alpha)x$ \textsl{or} $\alpha z+(1-\alpha)y\succ x$.
            \end{axiom}

            \indent \hyperlink{unb}{Unboundedness} axiom implies that preferences over outcomes are unbounded (the range of $u(\cdot)$ is $\mathbb{R}$). $y\succ \alpha z+(1-\alpha)x$ implies $u(y)\geq u( \alpha z+(1-\alpha)x)=\alpha u(z)+(1-\alpha)u(x)$, and $\alpha z+(1-\alpha)y\succ x$ implies $u( \alpha z+(1-\alpha)y)=\alpha u(z)+(1-\alpha)u(y)\geq u(x)$. When $\alpha$ is very close to zero, to keep $x\succ y$, $u(z)$ should be $-\infty$ or $\infty$.  It is used to prove \hyperlink{lem2}{Lemma 2}.

            \hyperlink{continuity}{Axiom 2-4} are the assumptions of continuity, independence, and monotonicity, which are the classic conditions paving the way to the subjective expected utility (see \cite{anscombe1963definition}). Note that \hyperlink{wm}{Axiom 4} and \hyperlink{ac}{5} require that the agent is able to have a full understanding of acts as functions that attach outcomes to states. With this assumption, the agent is able to identify the complexity of all acts. With \hyperlink{wo}{Axiom 1-6}, we define the preference on acts in this model. 

\newpage
\subsection{The Representation} \label{sec3.2}

            \begin{definition}[\hypertarget{def1}{Complexity Aversion Representation}]  
            \textsl{An agent admits a Complexity Aversion Representation if there exist}
            \begin{itemize}
	        \item{\textsl{an unbounded affine utility function} $u: X \rightarrow \mathbb{R}$};
	        \item{\textsl{a probability measure} $\mu\in \Delta(\Omega)$}; \textsl{and}
	        \item{\textsl{a complexity cost function} $\mathcal{C}:\{\sigma(S): S\in\mathbb{P}\} \rightarrow [0, \infty)$, $\sigma(S)\subset \sigma(S^{\prime})$ \textsl{implies} $\mathcal{C}(\sigma(S))\leq \mathcal{C}(\sigma(S^{\prime})$}, \textsl{and} $\mathcal{C}(\{\Omega,\emptyset\})=0$.
            \end{itemize}
            \noindent \textsl{Such that}: $\succsim$ \textsl{is represented by} 
            \begin{equation*}
            V(f) = \int_{\omega \in \Omega} u(f(\omega))\mu(d\omega)-\mathcal{C}(\sigma(f)).
            \end{equation*}
            \end{definition}

            I refer to the \hyperlink{def1}{Complexity Aversion Representation} as a tuple $\langle u, \mu, \mathcal{C}\rangle$. The utility function $u$ and belief $\mu$ have the same interpretations as in the model of SEU. The complexity cost function $\mathcal{C}$ maps each act to an extended real number measuring the cost that the agent has to bear if she chooses that act. If the $\sigma$-algebra generated by the corresponding partition of act $f$ is finer than that of act $g$, then $f$ has higher complexity cost than $g$.

            \indent I am now ready to state the main result : the representation theorem of complexity aversion.

            \begin{theorem} 
            \hypertarget{thm1}{A} preference relation $\succsim$ satisfies \hyperlink{wo}{Axioms 1-6} if and only if  it admits a \hyperlink{def1}{Complexity Aversion Representation}. 
            \end{theorem}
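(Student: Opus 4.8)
The plan is to prove both implications, treating sufficiency as direct verification and concentrating effort on necessity, which I would organize along the Anscombe--Aumann and \cite{maccheroni2006ambiguity} lines, adapted to the complexity-cost structure. For sufficiency I would fix a tuple $\langle u,\mu,\mathcal{C}\rangle$ and check each axiom in turn: Weak Order is immediate because $V$ is real-valued; Continuity holds because on any segment with $\sigma(\alpha f+(1-\alpha)g)=\sigma(f)$ the cost term is constant and $V$ is affine in $\alpha$; Weak Certainty Independence holds because mixing with a constant $x$ leaves $\sigma(f)$ unchanged, so the term $(1-\alpha)u(x)$ cancels on both sides of the comparison; Weak Monotonicity follows from monotonicity of $\mathcal{C}$ together with monotonicity of the integral; Unboundedness follows from $u(X)=\mathbb{R}$; and Aversion to Complexity follows because its premise forces $u(x)=\int u(f)\,d\mu$, after which the conclusion reduces to $\mathcal{C}(\sigma(g))\le\mathcal{C}(\sigma(\lambda f+(1-\lambda)g))$, which is exactly monotonicity of $\mathcal{C}$ applied to $\sigma(g)\subset\sigma(\lambda f+(1-\lambda)g)$ for $f\in\mathcal{F}^{c}(g)$.

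For necessity I would first restrict $\succsim$ to the constant acts $\overline{\mathcal{F}}$. Since mixtures of constants are constants, Continuity applies without restriction and Weak Certainty Independence becomes full independence on $\overline{\mathcal{F}}$; the mixture-space theorem then yields an affine $u:X\to\mathbb{R}$, and Unboundedness (via \hyperlink{lem2}{Lemma 2}) gives $u(X)=\mathbb{R}$ and a constant-act certainty equivalent for every act, so that $V(f):=u(x_{f})$ with $f\sim x_{f}$ represents $\succsim$. The device I would then exploit is that for a constant $x$ and any $f$ the complexity of $\alpha f+(1-\alpha)x$ equals $\sigma(f)$ for every $\alpha\in(0,1)$ but collapses to the trivial algebra at $\alpha=0$. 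Using Weak Certainty Independence, restricted Continuity and Weak Order, I would show that $\alpha\mapsto V(\alpha f+(1-\alpha)x)$ is affine on $(0,1)$; its slope identifies the cost-free value $E(f)$, while the gap between its left limit at $\alpha=0$ and $V(x)=u(x)$ identifies $\mathcal{C}(\sigma(f))$. Weak Certainty Independence is precisely what guarantees that this identified cost does not depend on the reference constant $x$.

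The substantive step is to show that $E$ is affine in the act, so that $E(f)=\int u(f)\,d\mu$ for a single probability measure $\mu\in\Delta(\Omega)$, rather than merely a monotone, translation-invariant niveloid (which is all that Weak Order, Weak Certainty Independence, Continuity and Weak Monotonicity deliver, and which would only give a variational, multiple-prior functional). Here Aversion to Complexity does the work: because $u$ is onto, for every $f$ I can realize its premise by choosing $x$ with $u(x)=E(f)$, so the conclusion $\lambda x+(1-\lambda)g\succsim\lambda f+(1-\lambda)g$ holds for all admissible $g$ and $\lambda$. A strictly concave (multiple-prior) $E$ would make the mixture value exceed the affine combination by a term that, for a suitable intermediate partition and a suitably dispersed $f$, outweighs the finite cost gap $\mathcal{C}(\sigma(\lambda f+(1-\lambda)g))-\mathcal{C}(\sigma(g))$, contradicting the conclusion; this forces $E$ to be affine. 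A monotone, normalized, affine functional on utility profiles is integration against a probability measure, giving $\mu$, and the same $\mu$ serves every complexity class because affinity is global. Finally I would read off the properties of $\mathcal{C}$: that it depends only on $\sigma(f)$ (acts generating the same algebra receive equal cost, which I would verify from Weak Monotonicity, translation invariance of the $\alpha=0$ jump, and the affinity of $E$), that $\sigma(S)\subset\sigma(S')$ implies $\mathcal{C}(\sigma(S))\le\mathcal{C}(\sigma(S'))$ (from Aversion to Complexity combined with Weak Monotonicity), and that $\mathcal{C}(\{\Omega,\emptyset\})=0$ (constants have no cost jump).

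I expect the main obstacle to be exactly the affinity step: extracting a single prior from Aversion to Complexity, since this axiom is stated as a conditional implication and one must first arrange its premise for every act and then convert the resulting family of inequalities into genuine linearity of $E$ rather than mere concavity. A secondary difficulty is the bookkeeping needed to show that the cost is a well-defined function of the algebra alone and is consistent across complexity classes, given that full Continuity fails and only the restricted, complexity-preserving version is available; particular care is required at the $\alpha=0$ boundary, where the complexity, and hence the value, jumps.
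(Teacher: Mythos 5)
Your overall architecture tracks the paper's proof quite closely: certainty equivalents from Weak Monotonicity plus the restricted Continuity axiom (the paper's Lemma 1), an affine unbounded $u$ via a mixture-space argument on constants together with Unboundedness (Lemma 2), the cost read off as the gap between the expected utility and the certainty-equivalent value, and monotonicity of $\mathcal{C}$ from Aversion to Complexity --- indeed the paper's Lemma 6 is exactly your computation: the AC premise pins down $u(x)=\int u(f)\,d\mu$, and the conclusion then yields $\mathcal{C}(\sigma(g))\leq \mathcal{C}\big(\sigma(\lambda f+(1-\lambda)g)\big)$. Your identification device for the cost (affinity of $\alpha\mapsto V(\alpha f+(1-\alpha)x)$ on $(0,1)$, with the slope giving the cost-free value and the jump at $\alpha=0$ giving $\mathcal{C}(\sigma(f))$) differs from the paper's, which instead doubles utilities: it picks $f'$ with $u\circ f'=2\,u\circ f$, so $\sigma(f')=\sigma(f)$, obtains $f\sim \frac{1}{2}g_{S}+\frac{1}{2}f'\sim \frac{1}{2}x_{S}+\frac{1}{2}f'$, and defines $\mathcal{C}(\sigma(S))=-\frac{1}{2}U(x_{S})$ (Lemmas 3--5, with Lemma 4 ensuring every algebra is attained by some act admitting a certainty equivalent). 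Both devices exploit the same invariance --- mixing with constants preserves the generated algebra --- and yours is equally workable for that step.

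The genuine gap is in your affinity step, which is also the one place you depart substantively from the paper. To launch your refutation of a strictly concave (variational) $E$ you must first \emph{realize} the AC premise, i.e., make $\alpha\mapsto V(\alpha x+(1-\alpha)f)$ constant on $(0,1)$. Since the cost term is constant along that segment, writing $\xi=u\circ f$, constancy requires $t\mapsto E(t\xi)$ to be affine on $(0,1)$ with slope $u(x)$. For a strictly concave niveloid this map is strictly concave precisely at the dispersed acts you propose to use, so \emph{no} choice of $x$ realizes the premise --- in particular $u(x)=E(f)$ does not, contrary to your claim that surjectivity of $u$ suffices --- and AC is then vacuous for such $f$: no inequality is generated and no contradiction can be extracted. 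So AC does not force affinity by the route you describe; you would need the additional (missing) argument that non-affinity of $E$ persists on acts where the premise \emph{is} realizable. The paper avoids this by using AC two-sidedly on indifferent constants: from $x\sim y$, WCI plus two applications of AC (swapping the roles of $x$ and $y$, and noting constants lie in $\mathcal{F}^{c}(g)$ because $\sigma(\lambda x+(1-\lambda)g)=\sigma(g)$) deliver the substitution property $\alpha x+(1-\alpha)z\sim\alpha y+(1-\alpha)z$, which restores the Herstein--Milnor independence condition on $\overline{\mathcal{F}}$; the mixture-space theorem and the Anscombe--Aumann monotonicity argument then produce the single prior. Relatedly, your assertion that on constants WCI ``becomes full independence'' is an overclaim --- WCI only allows exchanging the mixed-in constant, and the substitution of indifferent constants itself requires AC, exactly as in the paper's Lemma 2.
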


            \hyperlink{thm1}{Theorem 1} provides a behavioral foundation for the Complexity Aversion representation. A noteworthy remark on \hyperlink{thm1}{Theorem 1} before proceeding. I do not assume any other properties about the  complexity cost function except for monotonicity. Some other properties can be imposed on the complexity cost function like convexity. By doing this, appropriate behavioral axioms should be characterized. 

            \indent The proof is in \hyperref[appA.1]{Appendix A.1}. Here I only present a sketch and some discussions of the proof. First, it is obvious that if the expected utility of act $f$ is smaller than that all constant acts, then the agent definitely will not prefer $f$ to $x$ because $x$ is the simplest act that she can choose. Thus, the agent only considers acts that might be preferred to some constant acts. Only for those kind of acts, complexity cost is meaningful. By this observation, I can prove that there exists certainty equivalence for such acts. Then \hyperlink{wci}{Axiom 3} and \hyperlink{ac}{Axiom 5} shows that there exist an affine utility function $uR$ with unbounded range and a prior probability measure $\mu$ over $\Omega$ such that $U(f)=\int_{\Omega} u (f(\omega)) \mu(d\omega)$. 

            \indent Second, I prove the existence of such complexity cost function $\mathcal{C}$ by construction. We can observe that $f\sim g$ does not mean that the two acts have the same expected utility. Then we can construct an act $g$ by mixing a constant act and act $f^{\prime}$, where $u(f^{\prime}(\omega))=2u(f(\omega))$. By the definition of $\sigma$-algebra, we observe that $\sigma(f)=\sigma(g)$. Therefore, by rearranging the expected utility of $g$, we construct such complexity cost function $\mathcal{C}$ for act $f$.

            \indent The last step is to show that the monotonicity of complexity cost function $\mathcal{C}$. Suppose $\alpha x + (1-\alpha) f \sim \beta x + (1-\beta)f$ for any $\alpha,\beta\in(0,1)$. It is easy to see that $x$ and $f$ have the same expected utility. Then we observe that $\lambda x + (1-\lambda)g$ and $\lambda f + (1-\lambda)g$ also have the same expected utility. By Axiom 2.5, $\lambda x + (1-\lambda)g\succsim\lambda f + (1-\lambda)g$ if $\sigma(g)\subset \sigma(\lambda f + (1-\lambda)g)$ implies $\mathcal{C}(\sigma(g))\leq \mathcal{C}(\sigma(\lambda f + (1-\lambda)g))$. 

\subsection{Identification} \label{sec3.3}

            \noindent In this subsection, I discuss the uniqueness of a \hyperlink{def1}{Complexity Aversion Representation}. Moreover, I give an example to illustrate the identification of complexity cost function $\mathcal{C}$.

            \indent At a glance, the cost function $\mathcal{C}$ seems not be unique. For example, one may say that $\mathcal{C}+c$ ($c\in \mathbb{R}$ is a constant) might represent the same preference relations if it does not change the ordinal ranking over acts. Actually, if $\mathcal{C}$ is defined as in \hyperlink{def1}{Definition 1}, it is unique. To see this, for any act $\bar{f}\in\mathcal{F}$, consider a modified act $f^{\prime}$ such that $f^{\prime}(\omega)=\bar{f}(\omega)-\epsilon$ with a very small but positive $\epsilon$ for all $\omega\in\Omega$. Suppose we can find another complexity cost function $\mathcal{C}^{\prime}$ that represents the same preference. The only difference between $\mathcal{C}$ and $\mathcal{C}^{\prime}$ is that $\mathcal{C}^{\prime}\big(\sigma(f)\big)>\mathcal{C}\big(\sigma(f)\big)$ for any act $f\in\mathcal{F}^{S^{\bar{f}}}$. Since $\sigma(\bar{f})=\sigma(f^{\prime})$, we have $\mathcal{C}\big(\sigma(\bar{f})\big)=\mathcal{C}\big(\sigma(f^{\prime})\big)$. Moreover, suppose $f^{\prime}\sim g$ with complexity cost function $\mathcal{C}$. It is easy to see that  $\bar{f}\succ g$ with complexity cost function $\mathcal{C}$. Then it is impossible to have  $f^{\prime}\sim g$ with complexity cost function $\mathcal{C}^{\prime}$. In this case, two complexity cost functions are only different on partition $S^{f}$, If two complexity cost functions vary considerably, the preference relations will also vary considerably. Therefore, we have the following uniqueness corollary of complexity cost function $\mathcal{C}$.

            \begin{corollary} 
            \hypertarget{coro1}{Let} $\succsim$ be a complexity aversion preference represented by $\langle u,\mu, \mathcal{C}\rangle$. Then the complexity cost function, defined in \hyperlink{def1}{Definition 1}, is unique.
            \end{corollary}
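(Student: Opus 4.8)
The plan is to show that any two tuples $\langle u,\mu,\mathcal{C}\rangle$ and $\langle u',\mu',\mathcal{C}'\rangle$ that both represent the same $\succsim$ must satisfy $\mathcal{C}=\mathcal{C}'$ once the utility scale is normalized. First I would recover $u$ and $\mu$ from $\succsim$ alone. Restricting attention to the constant acts $\overline{\mathcal{F}}$, the cost term vanishes because $\mathcal{C}(\{\Omega,\emptyset\})=0$, so on $\overline{\mathcal{F}}$ both representations reduce to $V(x)=u(x)$ and $V'(x)=u'(x)$; by the standard von Neumann--Morgenstern uniqueness, $u'=au+b$ for some $a>0$. Next, fixing any partition $S\in\mathbb{P}$ and restricting to $\mathcal{F}^{S}$, the term $\mathcal{C}(\sigma(S))$ is a constant on this subclass, so $\succsim$ restricted to $\mathcal{F}^{S}$ is represented by the pure expected-utility functional $\int_{\Omega}u(f(\omega))\,\mu(d\omega)$; the usual uniqueness of subjective expected utility (with $u$ fixed up to the affine transformation just identified) then forces $\mu'=\mu$. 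Choosing the affine normalization so that $u'=u$, it remains only to compare $\mathcal{C}$ and $\mathcal{C}'$.

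For the core step I would argue by contradiction. Suppose $\mathcal{C}\neq\mathcal{C}'$; then there is a partition $S$ with, say, $\mathcal{C}'(\sigma(S))>\mathcal{C}(\sigma(S))$, and I set $\delta=\mathcal{C}'(\sigma(S))-\mathcal{C}(\sigma(S))>0$. Pick any $\bar f\in\mathcal{F}^{S}$ and consider $V(\bar f)=\int_{\Omega}u(\bar f(\omega))\,\mu(d\omega)-\mathcal{C}(\sigma(S))$. Because \hyperlink{unb}{Unboundedness} makes the range of $u$ all of $\mathbb{R}$, there is a constant act $x\in\overline{\mathcal{F}}$ with $u(x)=V(\bar f)$, hence $\bar f\sim x$ under the first representation. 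But constant acts carry zero cost under both tuples (the normalization $\mathcal{C}(\{\Omega,\emptyset\})=\mathcal{C}'(\{\Omega,\emptyset\})=0$), so $V'(x)=u(x)=V(\bar f)=V'(\bar f)+\delta>V'(\bar f)$, which gives $x\succ\bar f$ under the second representation --- contradicting $\bar f\sim x$. Hence no such $S$ exists and $\mathcal{C}=\mathcal{C}'$.

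I expect the main obstacle to be the existence of the exactly-indifferent comparison act. The argument hinges on realizing $V(\bar f)$ as $u(x)$ for some constant act $x$ lying in a different complexity class, whose cost is therefore pinned across the two representations; this is precisely what \hyperlink{unb}{Unboundedness} buys, and it is the feature that rules out the ordinal-shift loophole flagged in the discussion (the author's alternative route perturbs $\bar f$ to $\bar f-\epsilon$ within the same partition class and locates an indifferent $g$, which is the same idea in a different guise). A secondary point requiring care is the preliminary reduction: one must verify that within each partition class the representation genuinely collapses to subjective expected utility --- so that $u$ and $\mu$ are common to both tuples before $\mathcal{C}$ is compared --- and that the normalization $\mathcal{C}(\{\Omega,\emptyset\})=0$, anchored at the zero-cost constant acts, is exactly what removes the freedom to add a constant to $\mathcal{C}$ across all partitions.
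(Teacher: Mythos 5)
Your proof is correct, and it takes a genuinely different route from the paper's. The paper argues by contradiction on a \emph{specific} deviation: it supposes $\mathcal{C}'$ exceeds $\mathcal{C}$ on exactly one partition class and agrees on all others, perturbs an act $f^{*}$ in that class to $f'=f^{*}-\epsilon$ within the same class, posits an act $g$ in a different class with $f'\sim g$, and extracts the inequality $\mathcal{C}\big(\sigma(f^{*})\big)+u(\epsilon)>\mathcal{C}'\big(\sigma(f^{*})\big)$ for every small $\epsilon>0$, forcing equality in the limit; it then asserts that ``the same logic'' covers general deviations. You instead anchor every comparison at constant acts, whose cost is pinned to zero under \emph{any} representation by the normalization $\mathcal{C}(\{\Omega,\emptyset\})=0$ in Definition 1, and you use Unboundedness (which, with $u$ affine on the convex set $X$, makes the range of $u$ all of $\mathbb{R}$) to realize $V(\bar f)$ exactly as $u(x)$ for a constant act $x$. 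This buys two things the paper's proof lacks: it handles an arbitrary discrepancy $\mathcal{C}'(\sigma(S))\neq\mathcal{C}(\sigma(S))$ in one stroke, with no restriction to a special form of $\mathcal{C}'$ and no limiting argument in $\epsilon$, since the strict gap $\delta$ immediately yields $\bar f\sim x$ under one representation and $x\succ\bar f$ under the other; and it dispenses with the auxiliary indifferent act $g$ in a different complexity class, whose existence the paper assumes rather than constructs. Conversely, the paper's $\epsilon$-perturbation is ``the same idea in a different guise,'' as you noted, but it uses the ruler act $g$ where you use the certainty equivalent --- both ultimately lean on Unboundedness via Lemma 2. Two small points to tighten in your write-up: the preliminary identification of $\mu$ by restricting to $\mathcal{F}^{S}$ should acknowledge that $\mathcal{F}^{S}$ contains only acts whose outcomes differ across cells of $S$ (so it is not a full mixture space; taking $S$ to be the finest partition and using acts with pairwise distinct utilities still identifies $\mu$), and since the corollary as stated fixes the tuple $\langle u,\mu,\mathcal{C}\rangle$ --- joint uniqueness of $(u,\mu,\mathcal{C})$ up to the scaling $\mathcal{C}'=\alpha\mathcal{C}$ being Corollary 2 --- that preliminary recovery step is strictly optional here, though harmless and consistent with your normalization $u'=u$.
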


            \begin{corollary} 
            \hypertarget{coro2}{If} $\langle u,\mu,\mathcal{C}\rangle$  and $\langle u^{\prime},\mu^{\prime},\mathcal{C}^{\prime}\rangle$ represent the same preferences relations, then $u^{\prime}$ is a positive affine transformation of $u$, $\mu=\mu^{\prime}$ and $\mathcal{C}^{\prime}=\alpha \mathcal{C}$ for some $\alpha>0$.
            \end{corollary}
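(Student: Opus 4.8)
The plan is to peel off the three pieces of the representation in the order in which the choice data pin them down: first the utility index $u$ from behavior on constant acts, then the belief $\mu$ from behavior on the finest partition, and finally the cost function $\mathcal{C}$ from certainty equivalents. The structural fact I will lean on throughout is that a constant act $x$ carries zero complexity cost, since $\sigma(x)=\{\Omega,\emptyset\}$ and $\mathcal{C}(\{\Omega,\emptyset\})=0$, so that $V(x)=u(x)$ and $V'(x)=u'(x)$. Thus both representations collapse to the raw utility index on $\overline{\mathcal{F}}$, which is what makes the first step clean.

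\textbf{Step 1 (the utility index).} Restricting $\succsim$ to the constant acts $\overline{\mathcal{F}}\cong X$, both $u$ and $u'$ represent the same preference on the convex set $X$, and both are affine. \hyperlink{unb}{Unboundedness} guarantees this restricted preference is nontrivial (some $x\succ y$). By the uniqueness part of the mixture-space (von Neumann--Morgenstern) theorem, $u'=au+b$ for some $a>0$ and $b\in\mathbb{R}$, which is exactly the asserted positive affine transformation. \textbf{Step 2 (the belief).} Let $\bar S$ be the finest partition (all singletons). On $\mathcal{F}^{\bar S}$ the cost term is constant, so the two representations read $V(f)=\int u(f)\,d\mu-K$ and $V'(f)=\int u'(f)\,d\mu'-K'$ for constants $K,K'$. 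Since \hyperlink{unb}{Unboundedness} forces the range of $u$ to be all of $\mathbb{R}$, the utility vectors $(u(f(\omega)))_{\omega\in\Omega}$ sweep out all of $\mathbb{R}^{|\Omega|}$ as $f$ ranges over $\mathcal{F}^{\bar S}$. Substituting $u'=au+b$ and using $\langle\mu',\mathbf{1}\rangle=1$, the two functionals that represent the common preference on $\mathbb{R}^{|\Omega|}$ are $\langle\mu,\cdot\rangle$ and $a\langle\mu',\cdot\rangle$ up to additive constants. Two nonzero linear functionals induce the same order on $\mathbb{R}^{|\Omega|}$ only if one is a positive multiple of the other; hence $\mu=c\mu'$ with $c>0$, and since both are probability measures $c=1$, giving $\mu=\mu'$.

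\textbf{Step 3 (the cost function).} Fix any partition $S$ and any $f\in\mathcal{F}^S$. Because $V(f)$ is a finite real number and the range of $u$ is $\mathbb{R}$, there is a constant act $c_f$ with $u(c_f)=V(f)$, i.e. $f\sim c_f$; this certainty equivalent is defined purely through $\succsim$ and is therefore common to both representations. Evaluating $V(f)=V(c_f)$ gives $\mathcal{C}(\sigma(S))=\int u(f)\,d\mu-u(c_f)$, and evaluating $V'(f)=V'(c_f)$ gives $\mathcal{C}'(\sigma(S))=\int u'(f)\,d\mu'-u'(c_f)$. Inserting $u'=au+b$ and $\mu=\mu'$, the additive constants $b$ cancel and $\mathcal{C}'(\sigma(S))=a\big(\int u(f)\,d\mu-u(c_f)\big)=a\,\mathcal{C}(\sigma(S))$. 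Setting $\alpha=a>0$ yields $\mathcal{C}'=\alpha\,\mathcal{C}$ on every partition, completing the argument. (Alternatively, once $u$ and $\mu$ are fixed, the asserted scaling can be read off from the uniqueness statement in \hyperlink{coro1}{Corollary 1}.)

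\textbf{The main obstacle.} The delicate point is the existence of certainty equivalents in Step 3 together with the richness needed in Step 2; both hinge on \hyperlink{unb}{Unboundedness} making $u$ surjective onto $\mathbb{R}$. Without surjectivity, some acts could be strictly better (or worse) than every constant act, $c_f$ would fail to exist, and the clean cancellation that identifies the ratio $\mathcal{C}'/\mathcal{C}$ as the single constant $a$ would break down, leaving only a scaling on the sub-collection of partitions that admit certainty equivalents. I would therefore invoke surjectivity of $u$ explicitly and record that $\mathcal{C}$ is finite on each partition (so that $V(f)\in\mathbb{R}$ for every act), which together guarantee that $c_f$ exists for all $f$ and that the same constant $a$ works uniformly across all partitions.
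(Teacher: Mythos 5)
Your proposal is correct and takes essentially the same route as the paper's proof: first pin down $u$ (up to positive affine transformation) and $\mu$ from the expected-utility part of the representation --- the paper simply cites Fishburn's uniqueness theorem here, which your finest-partition linear-functional argument in Step 2 merely spells out --- and then evaluate both representations at a certainty equivalent $x_f\sim f$ so that the additive constant cancels and $\mathcal{C}'(\sigma(f))=\alpha\,\mathcal{C}(\sigma(f))$ with $\alpha$ the slope of the utility transformation, exactly the computation in the paper's Appendix A.3. Like the paper (which routes this through its \hyperlink{lem1}{Lemma 1} and \hyperlink{lem2}{Lemma 2}), you correctly ground the existence of certainty equivalents in \hyperlink{unb}{Unboundedness}, so there is no substantive divergence in approach.
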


            \indent \hyperlink{coro2}{Corollary 2} establishes that the agent's utility function, prior, and complexity cost function are unique. It is a standard practice to identify the agent's utility. To identify the cost function, consider two acts $f\in\mathcal{F}$ and $z\in\overline{\mathcal{F}}$ such that $u\big(f(\omega)\big)=u(x)>u(z)$ for all $\omega\in s_{1}$, $u\big(f(\omega)\big)=u(y)<u(z)$ for all $\omega\in s_{2}$ and $\{s_{1},s_{2}\} \in\mathbb{P}$. Then the agent prefers $f$ to $z$ whenever $u(f)-\mathcal{C}\big(\sigma(f)\big)> u(z) - \mathcal{C}\big(\sigma(z)\big)$. Suppose $u(y)$ is very close to $0$ that is $y$ is a significantly unpreferred outcome. Then she prefers $f$ to $z$ if $u(x)-\mathcal{C}\big(\sigma(f)\big)> u(z) - \mathcal{C}\big(\sigma(z)\big)$. Therefore, we can identify $\mathcal{C}\big(\sigma(f)\big)$ uniquely by $min\{u(x)-u(z): x\in\overline{\mathcal{F}}\}$ given $z$.  

\subsection{Comparative Statics} \label{sec3.4}
\subsubsection{Comparing the Degree of Complexity Aversion} \label{sec3.4.1}

            For the preference $\succsim$, the agent faces a trade-off between her desire for complex acts and their costs. To formalize the notion, let $\langle u^{1},\mu^{1},\mathcal{C}^{1}\rangle$ and $\langle u^{2},\mu^{2},\mathcal{C}^{2}\rangle$ represent two agents' preferences $\succsim^{1}$ and $\succsim^{2}$. To compare the cost function, let $(u^{1}, \mu^{1})=(u^{2},\mu^{2})$.

            \begin{definition}  
            \hypertarget{def2}{$\succsim^{1}$} \textsl{has a} \textbf{lower degree of complexity aversion} \textsl{than} $\succsim^{2}$ \textsl{if for any} $f\in \mathcal{F}$ \textsl{and} $x\in  \overline{\mathcal{F}}$, $x \succ^{1} f$ \textsl{implies} $x \succ^{2} f$.
            \end{definition}

            \indent The definition says that the DM1 has a lower degree of complexity aversion than DM2 if: DM1 strictly prefers a constant act $x$ to an act $f$ implies DM2 also strictly prefers $x$ than $f$. The following result shows that DM1 has a lower degree of complexity aversion than DM2 in terms of the parameters of the Complexity Aversion representation.  

            \begin{theorem}  
            \hypertarget{thm2}{$\succsim^{1}$} has \textsl{a lower degree of complexity aversion} than $\succsim^{2}$ if and only if DM1 has lower complexity costs than DM2, that is $\mathcal{C}^{1}\leq \mathcal{C}^{2}$.
            \end{theorem}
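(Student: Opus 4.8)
The plan is to argue directly from the two Complexity Aversion representations $\langle u,\mu,\mathcal{C}^{1}\rangle$ and $\langle u,\mu,\mathcal{C}^{2}\rangle$ supplied by \hyperlink{thm1}{Theorem 1}, writing $U(f)=\int_{\Omega}u(f(\omega))\mu(d\omega)$ for the expected-utility term, which is common to both agents because by hypothesis $(u^{1},\mu^{1})=(u^{2},\mu^{2})$. The single structural fact I will lean on is that every constant act $x\in\overline{\mathcal{F}}$ induces the trivial partition, so $\sigma(x)=\{\Omega,\emptyset\}$ and hence $\mathcal{C}^{1}(\sigma(x))=\mathcal{C}^{2}(\sigma(x))=0$ by the normalization in \hyperlink{def1}{Definition 1}; consequently $V^{1}(x)=V^{2}(x)=u(x)$. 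Thus comparing a constant act $x$ against an arbitrary act $f$ reduces, for each agent $i$, to comparing $u(x)$ with $U(f)-\mathcal{C}^{i}(\sigma(f))$.

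For the ``if'' direction I would assume $\mathcal{C}^{1}\leq\mathcal{C}^{2}$. Then for every $f$ we get $V^{1}(f)=U(f)-\mathcal{C}^{1}(\sigma(f))\geq U(f)-\mathcal{C}^{2}(\sigma(f))=V^{2}(f)$, while $V^{1}=V^{2}$ on constant acts. Hence $x\succ^{1}f$ yields $u(x)=V^{1}(x)>V^{1}(f)\geq V^{2}(f)$, i.e. $V^{2}(x)>V^{2}(f)$, which is $x\succ^{2}f$; this is exactly the property in \hyperlink{def2}{Definition 2}.

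For the ``only if'' direction I would argue by contraposition: suppose $\mathcal{C}^{1}(\sigma(S))>\mathcal{C}^{2}(\sigma(S))$ for some partition $S$, and pick any $f\in\mathcal{F}^{S}$, so that $\sigma(f)=\sigma(S)$. The strict inequality $\mathcal{C}^{2}(\sigma(f))<\mathcal{C}^{1}(\sigma(f))$ makes the open interval $\big(U(f)-\mathcal{C}^{1}(\sigma(f)),\,U(f)-\mathcal{C}^{2}(\sigma(f))\big)$ nonempty, and I want a constant act whose utility lands strictly inside it. Here the \hyperlink{unb}{Unboundedness} axiom does the real work: it forces the range of the affine $u$ to be all of $\mathbb{R}$, so I can choose $x\in\overline{\mathcal{F}}$ with $u(x)$ in that interval. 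By construction $u(x)>U(f)-\mathcal{C}^{1}(\sigma(f))$ gives $x\succ^{1}f$, while $u(x)<U(f)-\mathcal{C}^{2}(\sigma(f))=V^{2}(f)$ gives $f\succ^{2}x$, hence not $x\succ^{2}f$ --- contradicting the lower-degree property. Therefore $\mathcal{C}^{1}\leq\mathcal{C}^{2}$.

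The only genuinely delicate point, and thus the main obstacle, is justifying that the needed outcome $x$ exists: one must confirm that \hyperlink{unb}{Unboundedness} (together with affineness of $u$ and convexity of $X$) really yields $u(X)=\mathbb{R}$, and not merely a set unbounded above and below, so that an arbitrary target value in the separating interval is actually attained by some consequence. Once $u(X)$ is seen to be a convex (hence interval) subset of $\mathbb{R}$ that is unbounded in both directions, this is immediate, and the rest of the argument is bookkeeping with the two representations.
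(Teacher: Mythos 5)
Your proposal is correct, and its skeleton is the same as the paper's: both directions are read off the two representations, using that constant acts generate the trivial algebra $\{\Omega,\emptyset\}$ and hence carry zero cost, so that $V^{1}$ and $V^{2}$ agree on $\overline{\mathcal{F}}$ and differ on general acts only through $\mathcal{C}^{1}$ versus $\mathcal{C}^{2}$; your ``if'' direction is essentially identical to the paper's. Where you genuinely improve on the paper is the ``only if'' direction. The paper's Appendix A.4 argument records the two strict preferences $x\succ^{1}f$ and $x\succ^{2}f$ and then writes the chain $u(x) > U(f)-\mathcal{C}^{1}\big(\sigma(f)\big) \geq U(f)-\mathcal{C}^{2}\big(\sigma(f)\big)$, in which the second inequality is precisely the conclusion $\mathcal{C}^{1}\leq\mathcal{C}^{2}$ being asserted rather than derived --- it does not follow from the displayed hypotheses alone, since both preferences only bound $u(x)$ from below. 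Your contrapositive supplies the missing step: given $\mathcal{C}^{1}\big(\sigma(S)\big)>\mathcal{C}^{2}\big(\sigma(S)\big)$, take $f$ with $\sigma(f)=\sigma(S)$ (constructible as in the paper's \hyperlink{lem4}{Lemma 4}/\hyperlink{lem5}{Lemma 5}) and a constant act with $u(x)$ strictly inside $\bigl(U(f)-\mathcal{C}^{1}(\sigma(f)),\,U(f)-\mathcal{C}^{2}(\sigma(f))\bigr)$, which witnesses $x\succ^{1}f$ together with $f\succ^{2}x$, violating \hyperlink{def2}{Definition 2}. Your flagged delicate point is also well judged and is nowhere addressed in the paper: the separation step genuinely needs $u(X)=\mathbb{R}$, not mere unboundedness on one side --- convexity of $X$ plus affinity of $u$ makes $u(X)$ an interval, and two-sided unboundedness fills out the line. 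If $u(X)$ were, say, bounded below, the separating interval could fall entirely below $\inf u(X)$ (take $f$ with low utilities and positive costs), in which case every constant act would be strictly preferred to $f$ by both agents and the behavioral comparison would be silent about the cost gap, so the theorem's ``only if'' would fail; the paper's reading of the \hyperlink{unb}{Unboundedness} axiom as delivering range $\mathbb{R}$ is exactly what your argument requires.
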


            \indent \hyperlink{thm2}{Theorem 2} characterizes that if DM1 has a lower degree of complexity aversion than DM2, then her complexity cost $\mathcal{C}^{1}$ is smaller than $\mathcal{C}^{2}$. In other words, DM1 needs less efforts to process every act than DM2. 

\subsubsection{Comparing the Capacity for Complex Acts} \label{sec3.4.2}

            The second comparison considers two agents' capacities for complex acts. For the preference $\succsim$, except for the trade-off between her desire for complex acts and their costs, the agent also faces the trade-off between her desire for complex acts and her aversion to randomized acts. Consider two randomized acts $\alpha f + (1-\alpha) g$ and $\alpha f + (1-\alpha) x$. It is easy to understand that act $g$ is more complex than act $x$, while the former is more random than the latter. Therefore, the DM1 has a higher capacity for complex acts than DM2 if DM1 is more capable of exploring complex acts than DM2. 

            \begin{definition}  
            \hypertarget{def3}{$\succsim^{1}$} \textsl{has a} \textbf{higher capacity for complex acts} \textsl{than} $\succsim^{2}$ \textsl{if for any} $f, g\in \mathcal{F}$, $x\in  \overline{\mathcal{F}}$, \textsl{and} $\alpha\in (0,1)$, $\alpha f + (1-\alpha) x\succ^{1} \alpha f + (1-\alpha) g$ \textsl{implies} $\alpha f + (1-\alpha) x\succ^{2} \alpha f + (1-\alpha) g$.
            \end{definition}

            \indent This definition says that the DM1 has a higher capacity for complex acts than DM2 if: DM1 strictly prefers  $\alpha f + (1-\alpha) x$ to $\alpha f + (1-\alpha) g$ implies DM2 also strictly prefers $\alpha f + (1-\alpha) x$ to $\alpha f + (1-\alpha) g$. The following result characterizes that DM1 has a higher capacity for complex acts  than DM2 in terms of the parameters of the \hyperlink{def1}{Complexity Aversion Representation}.

            \begin{theorem} 
            \hypertarget{thm3}{$\succsim^{1}$} has \textsl{a higher capacity for complex acts} than $\succsim^{2}$ if and only if $supp(\mathcal{C}^{1})\subset supp(\mathcal{C}^{2})$.
            \end{theorem}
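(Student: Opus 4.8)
The plan is to carry the behavioral condition of \hyperlink{def3}{Definition 3} into the representation of \hyperlink{def1}{Definition 1}. Since the two agents share $(u,\mu)$, each $\succsim^{i}$ is represented by $V^{i}(h)=\int_{\Omega}u(h(\omega))\mu(d\omega)-\mathcal{C}^{i}(\sigma(h))$, and the two acts being compared share the common part $f$, so I first pin down how mixing reshapes the induced partition. Because $x$ is constant, $\sigma(\alpha f+(1-\alpha)x)=\sigma(f)$; and since $\sigma(\cdot)$ is generated by level sets, two states are separated by $\alpha f+(1-\alpha)g$ precisely when they are separated by $f$ or by $g$, apart from the finitely many $\alpha$ at which the two contributions cancel, so for all but finitely many $\alpha$ one has $\sigma(\alpha f+(1-\alpha)g)=\sigma(f)\vee\sigma(g)$, the coarsest common refinement. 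Expanding $V^{i}$ and using affinity of $u$, the terms in $f$ cancel and the comparison reduces to
\[
\alpha f+(1-\alpha)x\succ^{i}\alpha f+(1-\alpha)g \iff \mathcal{C}^{i}\big(\sigma(f)\vee\sigma(g)\big)-\mathcal{C}^{i}\big(\sigma(f)\big) > (1-\alpha)\Big[\int_{\Omega}u(g(\omega))\mu(d\omega)-u(x)\Big].
\]
Writing the right-hand side as $r$, \hyperlink{unb}{Unboundedness} lets the constant $x$ move $u(x)$, and hence $r$, over all of $\mathbb{R}$ while both partitions stay fixed.

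For necessity I would specialize to a constant $f$, so that $\sigma(f)=\{\Omega,\emptyset\}$, $\mathcal{C}^{i}(\sigma(f))=0$, and $\sigma(\alpha f+(1-\alpha)g)=\sigma(g)$ exactly. The equivalence becomes: $\alpha f+(1-\alpha)x\succ^{i}\alpha f+(1-\alpha)g$ if and only if $\mathcal{C}^{i}(\sigma(g))>r$. Given any $g$ with $\mathcal{C}^{1}(\sigma(g))>0$, fix $\alpha=\tfrac12$ and choose the constant $x$ so that $r\in\big(0,\mathcal{C}^{1}(\sigma(g))\big)$. Then DM1 strictly prefers the simpler act, so by higher capacity DM2 must too, forcing $\mathcal{C}^{2}(\sigma(g))>r>0$. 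As $g$ ranges over all acts this yields $supp(\mathcal{C}^{1})\subset supp(\mathcal{C}^{2})$; the same sweep in fact returns $\mathcal{C}^{1}\le\mathcal{C}^{2}$ pointwise, which squares capacity with the stronger ordering of \hyperlink{thm2}{Theorem 2}.

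For sufficiency I must establish the converse implication for every $(f,g,x,\alpha)$, which by the displayed equivalence is exactly the incremental-cost domination $\mathcal{C}^{2}(\sigma(f)\vee\sigma(g))-\mathcal{C}^{2}(\sigma(f))\ge\mathcal{C}^{1}(\sigma(f)\vee\sigma(g))-\mathcal{C}^{1}(\sigma(f))$ for all $f,g$ (the exceptional $\alpha$ only coarsen $\sigma(f)\vee\sigma(g)$ and are absorbed by monotonicity of each $\mathcal{C}^{i}$). This is the step I expect to be the main obstacle: bare inclusion of supports transfers positivity but carries no information about the sizes of these increments, so the argument cannot proceed by set inclusion alone and must upgrade support inclusion into control of the increments. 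My plan is to fix the coarser partition $\sigma(f)$ and compare increments as $g$ varies, using monotonicity of each $\mathcal{C}^{i}$ together with the vanishing of $\mathcal{C}^{i}$ off $supp(\mathcal{C}^{i})$ to restrict attention to the partitions contributed by $g$ that are $\mathcal{C}^{1}$-active, and then to calibrate the two cost scales through the uniqueness normalization of \hyperlink{coro2}{Corollary 2}. Closing the gap between support inclusion and incremental domination is the heart of the proof and the place where the monotone, support-based structure of the complexity cost function must be used in full.
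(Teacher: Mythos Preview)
Your reduction is right, and your necessity argument is both correct and cleaner than the paper's. Where the paper argues by contradiction and derives, through a chain of inequalities invoking \hyperlink{thm2}{Theorem 2}, that some $h'$ with $\sigma(h)\subset\sigma(h')$ and $\mathcal{C}^{1}(\sigma(h'))\ge\mathcal{C}^{2}(\sigma(h))$ must exist, you simply specialize $f$ to a constant so that the comparison collapses to $\mathcal{C}^{i}(\sigma(g))>r$, then sweep $r$. This directly yields $\mathcal{C}^{1}(\sigma(g))>0\Rightarrow\mathcal{C}^{2}(\sigma(g))>0$ (and, as you note, in fact $\mathcal{C}^{1}\le\mathcal{C}^{2}$), which is a more transparent route to support inclusion.

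Your worry about sufficiency is well placed, and the paper does not resolve it: its entire argument for the ``if'' direction is the sentence ``It is obvious.'' But under your reading of $supp(\mathcal{C})$ as $\{\sigma:\mathcal{C}(\sigma)>0\}$, the implication genuinely fails. Take $\Omega$ finite, $\mathcal{C}^{1}(\{\Omega,\emptyset\})=\mathcal{C}^{2}(\{\Omega,\emptyset\})=0$, and on every nontrivial $\sigma$ set $\mathcal{C}^{1}=10$ and $\mathcal{C}^{2}=5$. The supports coincide, yet with $f$ constant and $r=7$ your displayed equivalence gives $\alpha f+(1-\alpha)x\succ^{1}\alpha f+(1-\alpha)g$ while $\alpha f+(1-\alpha)x\not\succ^{2}\alpha f+(1-\alpha)g$, so DM1 does not have higher capacity than DM2. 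Your proposed fix via \hyperlink{coro2}{Corollary 2} cannot close this gap either: that corollary only calibrates costs for two representations of the \emph{same} preference, not across two distinct agents. The upshot is that the obstacle you identified is not a missing lemma but a genuine defect in the statement as you (naturally) interpret it. The paper's informal gloss, ``$supp(\mathcal{C})$ is the set of acts that the agent would choose,'' suggests a non-standard notion of support; absent a precise definition, neither your proof nor the paper's can be completed, and your instinct that sufficiency demands more than bare support inclusion is correct.
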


            \indent $supp(\mathcal{C})$ is the set of acts that the agent would choose.  $supp(\mathcal{C}^{1})\subset supp(\mathcal{C}^{2})$ indicates that DM1 has a a higher capacity to exclude coarser acts than DM2.

\section{Minimal Complexity Updating} \label{sec4}

            In this part, I discuss a possible ``updating rule" that an agent could use when she receives the new information. To avoid confusion, I have to point out that I am not discussing belief ($\mu\in\Delta(\Omega)$) updating. Instead, I focus on what kinds of conditional cost function she will use to measure the complexity cost of an act when $E$ occurs.

\subsection{Axioms} \label{sec4.1}

            Recall the standard definition of a \textsl{null} event: for any preference relation $\succsim$, an event $E\subseteq \Omega$ is called $\succsim\text{-null}$ if $fEg\sim g$ for any acts $f,g\in\mathcal{F}$. In the expected utility framework, null events have zero probability. Since the way of a complexity averse agent ranking acts is different from an SEU agent, here I present a new version of the definition of a \textsl{null} event: for any preference relation $\succsim$, an event $E\subseteq \Omega$ is called $\succsim\text{-null}$ if there exist no $f\in\mathcal{F}$ and $x,y\in\overline{\mathcal{F}}$ such that $xEf\succ yEf$. This definition is more compatible with my setting because constant acts $x$ and $y$ have the same level of complexity on $E$.  

            The following three axioms impose dynamic properties on the agent's preference relations which are related to how the agent processes the non-null events in $\Sigma^{\prime}$. First, I introduce \hyperlink{conse}{Consequentialism}. In the SEU setting, \hyperlink{conse}{Consequentialism} is the axiom that guarantees that the preference conditional on $E\in  \Sigma^{\prime}$ does not depend on how act $f$ behaves outside of $E$. In other words, the agent believes that the true state must lie in $E$ and she is indifferent between two acts that differ only outside of $E$. 

            \begin{axiom}[\hypertarget{conse}{Consequentialism}] \textsl{For any} $E\in \Sigma^{\prime}$, \textsl{and any} $f,g\in \mathcal{F}$, \textsl{if} $f(\omega)=g(\omega)$ \textsl{for all} $\omega\in E$, \textsl{then} $f\sim_{E} g$.
            \end{axiom}

            \indent To some extend, this standard axiom is enough to characterize the behavior of agents in this model. To see this, consider two acts $f$ and $g$ such that $f$ is more complex than $g$ on $\Omega$. Since $f(\omega)=g(\omega)$ for all $\omega\in E$, we have $\sigma(f|E) = \sigma(g|E)$. Thus, if the agent measures the cost of $f$ and $g$ on $E$ instead of $\Omega$ after the occurrence of $E$, that is, the conditional complexity cost of $f$ is the same as that of $g$, then she is indifferent between $f$ and $g$ conditional on the realization of $E$. I discuss the conditional complexity cost function below. 

            \indent Before that, I turn to discuss the relations between ex ante preferences and ex post preferences. In the standard setting, the preference is required to be dynamically consistent. 

            \noindent \textbf{Axiom DC} (\hypertarget{dc}{Dynamic Consistency}). \textsl{For any} $E\in \Sigma^{\prime}$, \textsl{and any} $f,g\in \mathcal{F}$, \textsl{we have} $f\succsim_{E} g\Longleftrightarrow fEg\succeq g$.  

            \indent \hyperlink{dc}{Dynamic Consistency} requires that the ex ante preference over acts implies the agent's ex post preference. In particular, if the agent prefers $f$ in $E$ to $g$ before the arrival of new information, then if $E$ happens, the agent still prefers $f$ to $g$.\footnote{\cite{ghirardato2002revisiting}) provides more discussion of this axiom and its implications. He proves that \hyperlink{conse}{Consequentialism} and \hyperlink{dc}{Dynamic Consistency} imply that the agent is a Bayesian updater.}  However,  if the agent is complexity averse, things are different. For example, consider two acts $f$ and $g$ in $\mathcal{F}$, such that $\int_{\omega\in\Omega} u(f(\omega))\mu(d\omega)\leq\int_{\omega\in\Omega} u(g(\omega))\mu(d\omega)$ and $ \sigma(f)\subset \sigma(g)$. Suppose she prefers $f$ to $g$, which means although $g$ gives higher expected utility than $f$,  the complexity cost of act $g$ is too high. Then, after the arrival of new information, it is possible that $g$ still gives higher expected utility than $f$ on $E$, but with lower complexity cost on $E$. Under this circumstance, she prefers $g$ to $f$ after she receives the new information.  Thus,  standard  dynamic consistency will not apply. Before I introduce the axiom, consider the following definition of conditional complexity cost function. 

            \begin{definition} 
            \hypertarget{def4}{\textsl{Given}} $E\in\Sigma^{\prime}$ \textsl{and} $\mu\in \Delta(\Omega)$, \textsl{a conditional complexity cost function} $\mathcal{C}_{E,\mu}: \{\sigma(f): f\in\mathcal{F}\}\longrightarrow \mathbb{R}_{+}\cup \{\infty\}$ \textsl{is} \textbf{minimal} if $\mathcal{C}_{E,\mu}\big(\sigma(f)\big)=min\{\mathcal{C}\big(\sigma(h)\big)/\mu(E): h\in \mathcal{F} ,\textsl{ and }\sigma(h|E)=\sigma(f|E)$\}.\footnote{Here, $\mu(E)$ is a technique term that plays a role in the proof of \hyperlink{thm2}{Theorem 2}.} 
            \end{definition}

            \indent By \hyperlink{conse}{Axiom 7}, we know that when the agent receives the new information $E\in\Sigma^{\prime}$, she believes that the true state must lie inside $E$. Then she does not have to care about the complexity of an act outside of $E$. Therefore, after the arrival of $E$, she can choose an act that gives the same $\sigma(f|E)$ as $f$ on $E$ but with lowest costs on $\Omega$, and use $\mathcal{C}\big(\sigma(h)\big)$ to measure the cost of $\mathcal{C}_{E,\mu}(\sigma(f))$.

\begin{figure}[!ht]
            \centering
	        \begin{tikzpicture}
		    \tikzset{
			mydot/.style={
				fill,
				circle,
				inner sep=1.5pt
			    }
		    }
		    \coordinate  (A) at(0,0) ;
		    \coordinate  (B)  at (2,0);
		    \coordinate  (C) at (4,0);
		    \coordinate  (D) at (8,0);
		    \coordinate  (E) at (10,0);
		    \draw (A) -- node [midway, below]{\scriptsize $\{\omega_{1},\ \omega_{2}\}$}
		    (B)  -- node [midway, below]{\scriptsize $\{\omega_{3},\ \omega_{4}\}$}
		    (C)   -- node [midway, below]{\scriptsize $\{\omega_{5},\ \ \omega_{6},\ \ \omega_{7},\ \ \omega_{8}\}$}
		    (D)  -- node [midway, below]{\scriptsize $\{\omega_{9},\ \omega_{10}\}$}
		    (E);
		    \node[mydot,label={below left:}] at (A) {};
		    \node[mydot,label={below left:}] at (B) {};
		    \node[mydot,label={below left:}] at (C) {};
		    \node[mydot,label={below left:}] at (D) {};
		    \node[mydot,label={below right:}] at (E) {};
		    \draw[{Bracket[width=27mm]}-,ultra thick,red]  ($(3,0)!-3.5mm!-90:(3.01,0)$)--node {} ($(3.01,0)!-3.5mm!90:(3,0)$);
		    \draw[-{Bracket[width=27mm]},ultra thick,red] ($(6.99,0)!-3.5mm!-90:(7,0)$)--node {} ($(7,0)!-3.5mm!90:(6.99,0)$);
		    \draw (5,0) node[above=12.5mm] {$\color{red}E$};
		    \draw (10.3,0) node[above=-4mm] {$\Omega$};
		    \draw (10.3,0) node[above=12mm] {$f$};
		    \draw (10.3,0) node[above=7mm] {\color{orange}$h$};
		    \draw (1,0) node[above=0.1mm] {$1$};
		    \draw (3.5,0) node[above=4.8mm] {$2$};
		    \draw (6.7,0) node[above=9.8mm] {$3$};
		    \draw (2,0) node[above=9.8mm] {$3$};
		    \draw (9,0) node[above=13.8mm] {$4$};
		    \draw[-,black,line width=2.8pt] ($(A)!0mm!-90:(B)$)--node {} ($(B)!0mm!90:(A)$);
		    \draw[-,orange,line width=4pt] ($(A)!-10mm!-90:(3,0)$)--node {} ($(3,0)!-10mm!90:(A)$);
		    \draw[-,orange,line width=6pt] ($(3,0)!-5.1mm!-90:(4,0)$)--node {} ($(4,0)!-5.1mm!90:(3,0)$);
		    \draw[-,orange,line width=5pt] ($(4,0)!-10mm!-90:(10,0)$)--node {} ($(10,0)!-10mm!90:(4,0)$);
		    \draw[-,black,line width=2.8pt] ($(B)!-5mm!-90:(C)$)--node {} ($(C)!-5mm!90:(B)$);
		    \draw[-,black,line width=2.7pt] ($(C)!-10mm!-90:(D)$)--node {} ($(D)!-10mm!90:(C)$);
		    \draw[-,black,line width=3pt] ($(D)!-14mm!-90:(E)$)--node {} ($(E)!-14mm!90:(D)$);
	        \end{tikzpicture}
            \caption{Example of Minimal Complexity Updating.}
            \label{fig:figure5}
            \end{figure}
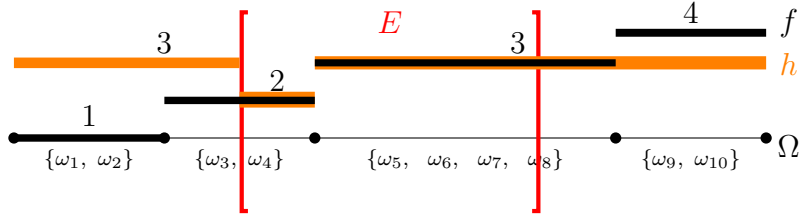

            \begin{example}
                \hypertarget{exp2}{Consider} the following example in \hyperref[fig:figure5]{Figure 5}. Act $f$  gives $4$ distinct outcomes (represented by black dashes) for all states in $\Omega$. Now suppose event $E$ is realized, so we can see that $f$ gives $2$ distinct outcomes for states in $E$. Let $S^{f|E}$ denote the partition generated by $f$ on $E$, i.e., $S^{f|E} = \{\{\omega_{4},\{\omega_{5},\omega_{6},\omega_{7}\}\}\}$. It is obvious that element $\{\omega_{5},\omega_{6},\omega_{7}\}$ has more states than any any other elements in $E$ (top right black dash in $E$). If the agent uses a minimal conditional complexity cost function, then she would choose $h=fEx$ where $x\in f(E)$ and $f(\omega)=x$ for all $\omega\in \{\omega_{5},\omega_{6},\omega_{7}\}$ (represented by orange dashes) and use use $\mathcal{C}\big(\sigma(h)\big)$ to measure the cost of $\mathcal{C}_{E}(\sigma(f))$. 
            \end{example}

            \begin{notation} \hypertarget{notation2}{\textsl{For}} \textsl{any act} $f\in\mathcal{F}$ \textsl{and} $E\in\Sigma^{\prime}$, \textsl{let} $\mathcal{F}^{min}(f,E) = \big\{h: h\in \mathcal{F}(f,E)\textsl{ and }\\ \sigma(h)\subset \sigma(h^{\prime}) \textsl{ for all } h^{\prime}\in \mathcal{F}(f,E)\}$, \textsl{where} $\mathcal{F}(f,E) = \{h: h\in\mathcal{F} \textsl{ and } \sigma(h|E)=\sigma(f|E)\}$. 
            \end{notation}

            \indent The following axiom describes the behavior of a complexity averse agent after the arrival of new information.

            \begin{axiom} [\hypertarget{mcu}{Minimal Complexity Updating}] \textsl{Given any} $E\in \Sigma^{\prime}$, $f\in \mathcal{F}$, \textsl{and} $x,z\in \overline{\mathcal{F}}$. \textsl{For any} $z^{\prime}\in \overline{\mathcal{F}}$ \textsl{such that} $\sigma(fEz)\subset \sigma(fEz^{\prime})$, \textsl{we have} 
            \begin{equation*}
            fEz\succsim_{E} x\Longrightarrow fEz^{\prime}\succsim_{E} x.
            \end{equation*}
            \end{axiom}

            \indent This axiom concerns the agent's attitude toward the complexity cost of an act $f$ outside of event $E$. To see the behavioral implications of minimal conditional complexity cost function, consider an act $f\in\mathcal{F}$. If the agent believes that the true state must lies in $E$, she is indifferent between $fEz$ and $fEz^{\prime}$ where $\sigma(fEz|E)=\sigma(fEz^{\prime}|E)$ for any $z,z^{\prime}\in \overline{\mathcal{F}}$. As such, if we observe the preference $fEz\succsim_{E} x$ where $x\in \overline{\mathcal{F}}$, then we would expect that  a more complex (on $\Omega$) act $fEz^{\prime}$ is still preferred to $x$. 

            \indent I am now ready to be back to discuss the violation of \hyperlink{dc}{Dynamic Consistency}. To illustrate this behavioral postulate, consider the following two cases for $f,g\in\mathcal{F}$ and $x\in\overline{\mathcal{F}}$:
            \begin{equation*}
	        \begin{split}
	        & case1: f\succsim g \text{ and } fEx\succsim gEx, \\
	        & case2:  g\succ f \text{ and } fEx\succsim gEx.
	        \end{split}
            \end{equation*}

            \noindent In case 1, the preference between $f$ and $g$ is consistent with the preference $f$ and $g$ on $E$. However, in our model, case 1 requires more specific structures of cost function, e.g., $\mathcal{C}\big(\sigma(g)\big) - \mathcal{C}\big(\sigma(f)\big) = \mathcal{C}\big(\sigma(gEx)\big) - \mathcal{C}\big(\sigma(fEx)\big)$. Case 2 violates \hyperlink{dc}{Dynamic Consistency} (e.g., \hyperlink{exp1}{the investment example}). Even with a minimal conditional complexity cost function, we cannot exclude case 2 in this model. Instead of \hyperlink{dc}{Dynamic Consistency}, I introduce an axiom called \hyperlink{dca}{Dynamic Complexity Aversion} that aligns with an agent's minimal conditional complexity cost function. 

            \begin{axiom}[\hypertarget{dca}{Dynamic Complexity Aversion}] 
            \textsl{For any} $E\in \Sigma^{\prime}$, $f,g\in \mathcal{F}$, \textsl{and} $h\in \mathcal{F}^{min}(g,E)$, \textsl{there exists} $x\in g(E)$ \textsl{such that} $h=gEx$. \textsl{We have}:
            \begin{equation*}
            fEx\succsim gEx \Longrightarrow f\succsim_{E} g.
            \end{equation*}
            \end{axiom}

            \indent Here I restrict $gEx$ in  $\mathcal{F}^{min}(g,E)$, in words, $gEx$ is the simplest act that gives the same $\sigma(g|E)$ as $g$ on $E$. If we observe the preference $fEx\succsim gEx$, by axiom A8, the agent will use $\mathcal{C}\big(\sigma(fEx_{f})\big)$ $\big(\text{ where }fEx_{f}\in \mathcal{F}^{min}(f,E)\big)$ to measure the cost of $\mathcal{C}_{E,\mu}(\sigma(f))$ when $E$ is realized. Since $\sigma(fEx_{f})\subset \sigma(fEx)$, $fEx$ must be more costly than $fEx_{f}$, we must have that $f\succsim_{E}g$. \hyperlink{dca}{Dynamic Complexity Aversion} can be viewed as an introspective reaction when the agent is self-aware of the updating of complexity cost function.\footnote{\hyperlink{dca}{Dynamic Complexity Aversion} indicates that the agent is introspective. She knows that a new complexity cost function will be used after the arrival of $E$. Thus, she pretends that she will measure the complexity cost of an act by using conditional complexity cost function at ex ante. If she finds that she prefers $fEx$ to $gEx$, then she learns that she will still prefer $f$ to $g$ after the occurrence of $E$.}  Broadly speaking, \hyperlink{dca}{Dynamic Complexity Aversion} is neither weaker nor stronger than  \hyperlink{dc}{Dynamic Consistency}. Although it allows for some violations of  \hyperlink{dc}{Dynamic Consistency}, it restricts agents' behavior according to \hyperlink{mcu}{minimal complexity updating}. 

\subsection{The Representation} \label{sec4.2}

            \begin{definition}[\hypertarget{def5}{\textbf{Minimal Complexity Aversion Representation}}]  
            \textsl{An agent admits a Minimal Complexity Aversion Representation if there exist}

            \begin{itemize}
	        \item{\textsl{an unbounded affine utility function} $u: X \rightarrow \mathbb{R}$};
	        \item{\textsl{a probability measure} $\mu\in \Delta(\Omega)$}; \textsl{and}
	        \item{\textsl{a complexity cost function} $\mathcal{C}:\{\sigma(S): S\in\mathbb{P}\} \rightarrow [0, \infty)$, $\sigma(S)\subset \sigma(S^{\prime})$ implies $\mathcal{C}(\sigma(S))\leq \mathcal{C}(\sigma(S^{\prime})$}, \textsl{and} $\mathcal{C}(\{\Omega,\emptyset\})=0$;
	        \item {\textsl{a minimal conditional complexity cost function} $\mathcal{C}_{E,\mu}$} \textsl{defined in \hyperlink{def4}{Definition 4}}.
            \end{itemize}
            \noindent \textsl{Such that}: 
            \begin{enumerate}[label=(\roman*),noitemsep]
            \item $\succsim$ \textsl{is represented by} 
            \begin{equation*}
	        V(f) = \int_{\omega \in \Omega} u\big(f(\omega)\big)\mu(d\omega)-\mathcal{C}\big(\sigma(f)\big) ;
            \end{equation*}
            \item \textsl{For any} $E\in \Sigma^{\prime}$, $\succsim_{E}$ \textsl{is represented by} 
            \begin{equation*}
	        V(f|E) = \int_{\omega \in \Omega} u(f(\omega))\mu_{E}(d\omega)-\mathcal{C}_{E,\mu}\big(\sigma(f)\big).
            \end{equation*}
            \end{enumerate}
            \end{definition}

            \indent  $\langle u, \mu, \mathcal{C}, \{\mathcal{C}_{E,\mu}\}_{E\in\Sigma^{\prime}}\rangle$ is referred to as the \hyperlink{def5}{Minimal Complexity Aversion Representation}. The utility function $u$, belief $\mu$, and the complexity cost function $\mathcal{C}$ have the same interpretations as in the model of \hyperlink{def1}{CAR}. The conditional complexity cost function \hyperlink{def4}{$\mathcal{C}_{E,\mu}$} measures the cost of acts that the agent has to bear after the occurrence of $E$. If the $\sigma$-algebra generated by the corresponding partition of act $f$ conditional on $E$ is finer than that of act $g$, then $f$ has higher conditional complexity cost than $g$. Above three axiom, along with \hyperlink{car}{conditional CAR axioms} that are discussed in \hyperref[sec3.1]{Section 3.1}, leads to the following representation theorem. 

            \begin{theorem} 
            \hypertarget{thm4}{An} ex ante preference relation $\succsim$  and a collection $\{\succsim_{E}\}_{E\in\Sigma^{\prime}}$ of conditional preference relations jointly satisfy \hyperlink{car}{Axioms 1-9} if and only if  they admit a \hyperlink{def5}{Minimal Complexity Aversion Representation}.
            \end{theorem}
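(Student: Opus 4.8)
The plan is to reduce Theorem~4 to the static result (Theorem~1), applied separately to the ex ante relation and to each conditional relation, and then to tie the resulting parameters together using the three dynamic axioms. For necessity, the ex ante part and Axioms~1--6 for $\succsim$ are immediate from Theorem~1; moreover $\mathcal{C}_{E,\mu}$ is itself a monotone cost function vanishing on the trivial conditional partition, so Theorem~1 also yields the conditional Axioms~1--6 for each $\succsim_E$. The remaining work is to check the three dynamic axioms against the formula for $V(\cdot\mid E)$. Consequentialism and Minimal Complexity Updating are direct: acts that agree on $E$ (resp. agree on $E$ and differ only off $E$) induce the same conditional partition $\sigma(\cdot\mid E)$, hence the same minimal conditional cost, and the same conditional expected utility since $\mu_E$ is supported on $E$; so their conditional values coincide. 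For Dynamic Complexity Aversion the key observations are $\mathcal{C}(\sigma(fEx))\ge \mu(E)\,\mathcal{C}_{E,\mu}(\sigma(f))$, because $fEx$ competes in the minimum defining $\mathcal{C}_{E,\mu}(\sigma(f))$, and $\mathcal{C}(\sigma(gEx))=\mu(E)\,\mathcal{C}_{E,\mu}(\sigma(g))$, because $gEx$ is a minimizer; substituting these into $V(fEx)\ge V(gEx)$, cancelling the common off-$E$ utility term and dividing by $\mu(E)$, gives $V(f\mid E)\ge V(g\mid E)$.

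For sufficiency, I would first apply Theorem~1 to $\succsim$ to obtain $\langle u,\mu,\mathcal{C}\rangle$, and to each non-null $\succsim_E$ (which inherits Axioms~1--6) to obtain $\langle u_E,\mu_E,\mathcal{C}_E\rangle$. The next step is to locate the support of $\mu_E$ and to show that $\mathcal{C}_E$ ignores behavior off $E$. Consequentialism applied to pairs $cEa,\,cEb$ that are constant on $E$ and share the partition $\{E,E^{c}\}$ forces $(u_E(a)-u_E(b))\mu_E(E^{c})=0$, hence $\mu_E(E^{c})=0$; applied to arbitrary acts agreeing on $E$ it then forces $\mathcal{C}_E(\sigma(f))$ to depend only on $\sigma(f\mid E)$, with Minimal Complexity Updating recording exactly this invariance. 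I would then pin down $u_E$ by feeding constant acts into Dynamic Complexity Aversion: with $f=a,\ g=b$ constant one gets $aEb\succsim b\Rightarrow a\succsim_E b$, i.e. $(u(a)-u(b))\mu(E)\ge \mathcal{C}(\{E,E^{c}\})\Rightarrow u_E(a)\ge u_E(b)$; since $u,u_E$ are affine and $u$ has unbounded range, this threshold implication can only hold if $u_E$ is a positive affine transform of $u$, so I normalize $u_E=u$.

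With $u_E=u$ and $\mu_E$ supported on $E$, Bayesian updating follows by comparing acts $f,g$ that are matched in complexity on $\Omega$ and on $E$ and are constant off $E$ at a common value they already attain on $E$ (so each is its own minimal representative and all cost terms cancel, both ex ante and conditionally): Dynamic Complexity Aversion then gives $\int_E u(f)\,d\mu\ge\int_E u(g)\,d\mu\Rightarrow \int_E u(f)\,d\mu_E\ge\int_E u(g)\,d\mu_E$, an ordering agreement between the linear functionals $\mu|_{E}$ and $\mu_E|_{E}$ which, with unboundedness supplying arbitrary scaling and sign reversals, forces $\mu_E|_{E}$ to be proportional to $\mu|_{E}$; normalizing by $\mu_E(E)=1$ identifies $\mu_E$ with the Bayesian update. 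For the cost, I would replace $f$ by its minimal matching act $h_f=fEx_f$ (so $\mathcal{C}_E(\sigma(f))=\mathcal{C}_E(\sigma(h_f))$ by the invariance just established) and probe $h_f$ ex ante against constant-outside acts via Dynamic Complexity Aversion, obtaining $\mu(E)\,\mathcal{C}_E(\sigma(f))\ge \mathcal{C}(\sigma(h_f))-\mathcal{C}(\{E,E^{c}\})$, with the reverse inequality coming from the monotonicity of $\mathcal{C}_E$ and the fact that $h_f$ is the least complex act inducing $\sigma(f\mid E)$ on $E$.

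The main obstacle I anticipate is this final identification $\mathcal{C}_E=\mathcal{C}_{E,\mu}$: Dynamic Complexity Aversion is only a one-directional implication and, when probed with constant-outside acts, leaves a residual $\mathcal{C}(\{E,E^{c}\})$ term, so extracting the exact minimum $\min\{\mathcal{C}(\sigma(h)):\sigma(h\mid E)=\sigma(f\mid E)\}$ with the correct $1/\mu(E)$ normalization requires choosing probe acts of \emph{matched} ex ante complexity rather than constants, so that all extraneous cost terms cancel, and then combining the resulting bound with the monotonicity of $\mathcal{C}_E$ to squeeze $\mathcal{C}_E(\sigma(f))$ exactly to the minimal value. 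Uniqueness of the recovered $u,\mu,\mathcal{C}$ is inherited from Corollary~2, and $\mathcal{C}_{E,\mu}$ is then determined through Definition~4.
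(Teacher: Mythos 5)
Your proposal is correct in outline and follows the same overall architecture as the paper's proof: Theorem 1 supplies the static representations both ex ante and (via the conditional versions of Axioms 1--6) for each $\succsim_E$; Minimal Complexity Updating delivers the invariance of the conditional cost across acts sharing $\sigma(\cdot|E)$ and the reduction to a minimal representative $fEz^{*}\in\mathcal{F}^{min}(f,E)$ (the paper's Step 2); Dynamic Complexity Aversion on constants aligns the conditional utility with $u$ (the paper's Step 3, Lemma 7); and DCA with minimal representatives yields the Bayesian update with the $1/\mu(E)$ normalization (the paper's Step 4). Your necessity verifications coincide with the paper's essentially line by line, down to the two cost relations $\mathcal{C}(\sigma(fEx))\geq\mu(E)\,\mathcal{C}_{E,\mu}(\sigma(f))$ and $\mathcal{C}(\sigma(gEx))=\mu(E)\,\mathcal{C}_{E,\mu}(\sigma(g))$ (where you even get the inequality direction right, which the paper's final line typos as $\geq$). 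Where you genuinely depart is precisely at the two points where the paper's argument is loose, and your versions are the more defensible ones. First, the paper's Lemma 7 claims that Weak Monotonicity gives $x\succsim y\Rightarrow xEy\succsim y$; this misapplies Axiom 4, since $\sigma(y)\subset\sigma(xEy)$ runs the inclusion the wrong way, and in the representation $xEy\succsim y$ requires $(u(x)-u(y))\mu(E)\geq\mathcal{C}(\sigma(xEy))$, which fails for small utility gaps. Your threshold argument --- DCA legitimately yields the implication only above a residual $\mathcal{C}(\{E,E^{c}\})$ gap, and affinity plus Unboundedness then force $u_E$ to be a positive affine transform of $u$ --- is what actually closes this step. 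Second, the paper's Step 4 writes the DCA consequence as a chain of biconditionals although Axiom 9 is one-directional; you flag this explicitly and propose the missing squeeze (matched-complexity probe acts so the extraneous cost terms cancel, combined with monotonicity of the conditional cost and minimality of $h_f$) to pin the conditional cost to the exact minimum. The trade-off is that this final squeeze is only sketched in your write-up --- the matched probes would need to be exhibited concretely for a complete proof --- but as a plan it repairs gaps the paper's own proof leaves open rather than introducing new ones.
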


            \indent \hyperlink{thm4}{Theorem 4} provides a behavioral foundation for the \hyperlink{def5}{Minimal Complexity Aversion Representation}. Note that \hyperlink{mcu}{minimal complexity updating} does not make further assumptions on the agent's understanding of acts and uncertainty. The standard axiom \hyperlink{conse}{Consequentialism} has already characterized the agent's attitude to the new information $E$.

\subsection{Uniqueness} \label{sec4.3}

            I show that the complexity cost function $\mathcal{C}$ is unique in \hyperref[sec3.3]{Section 3.3}. Furthermore, the following corollary shows that the conditional complexity cost function $\mathcal{C}_{E,\mu}$ shares the same properties as $\mathcal{C}$ and is also unique.

            \begin{corollary} 
            \hypertarget{coro3}{Let} $(\succsim,\succsim_{E})_{E\in\Sigma^{\prime}}$ be a minimal complexity aversion preference represented by $\langle u,\mu,\mathcal{C},\mathcal{C}_{E,\mu}\rangle$. The complexity cost function $\mathcal{C}$ satisfies $(\romannumeral1)$ $\mathcal{C}(\{\Omega,\emptyset\}) = 0$; $(\romannumeral2)$ monotonicity.  Then $\mathcal{C}_{E,\mu}$, defined as in \hyperlink{def4}{Definition 4}, also satisfies above two properties and is unique.
            \end{corollary}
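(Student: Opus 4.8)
The plan is to verify the three claimed facts directly from the explicit minimum-formula of \hyperlink{def4}{Definition 4}, leaning on the already-established monotonicity and uniqueness of the unconditional cost function $\mathcal{C}$. First I would record a well-definedness observation: the feasible set $\mathcal{F}(f,E)=\{h\in\mathcal{F}:\sigma(h|E)=\sigma(f|E)\}$ over which the minimum is taken depends on $f$ only through $\sigma(f|E)$, so $\mathcal{C}_{E,\mu}(\sigma(f))$ is genuinely a function of the conditional partition $\sigma(f|E)$; moreover, since $\Omega$ is finite there are only finitely many partition-classes $\sigma(h)$, so the minimum is attained. For property $(\romannumeral1)$, if $\sigma(f|E)=\{E,\emptyset\}$ (equivalently, $f$ is constant on $E$), then every globally constant act $h\in\overline{\mathcal{F}}$ is feasible: $\sigma(h)=\{\Omega,\emptyset\}$ yields $\sigma(h|E)=\{E,\emptyset\}=\sigma(f|E)$, and $\mathcal{C}(\{\Omega,\emptyset\})=0$. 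Since $\mathcal{C}\geq 0$, this forces the minimum to equal $0$, so $\mathcal{C}_{E,\mu}$ vanishes on the trivial conditional partition.

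For property $(\romannumeral2)$ I would exhibit a coarsening map. Suppose $\sigma(f|E)\subset\sigma(g|E)$, i.e.\ $S^{f|E}$ is coarser than $S^{g|E}$. Given any feasible $h\in\mathcal{F}(g,E)$ I construct $h^{\prime}\in\mathcal{F}(f,E)$ with $\sigma(h^{\prime})\subset\sigma(h)$ as follows. The cells of the global partition $S^{h}$ restrict on $E$ to the cells of $S^{h|E}=S^{g|E}$, so each cell $c\in S^{h}$ meeting $E$ has its trace $c\cap E$ contained in a unique cell of the coarser partition $S^{f|E}$. I group the cells of $S^{h}$ according to which cell of $S^{f|E}$ their trace falls into, assign $h^{\prime}$ one common fresh value on each such group, and assign further pairwise-distinct fresh values on the cells of $S^{h}$ disjoint from $E$. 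Then $h^{\prime}$ is constant on every cell of $S^{h}$, so $\sigma(h^{\prime})\subset\sigma(h)$, while its induced partition on $E$ is exactly $S^{f|E}$, i.e.\ $\sigma(h^{\prime}|E)=\sigma(f|E)$. Monotonicity of $\mathcal{C}$ gives $\mathcal{C}(\sigma(h^{\prime}))\leq\mathcal{C}(\sigma(h))$. Applying this to a minimizer $h$ for $g$ yields $\mathcal{C}_{E,\mu}(\sigma(f))\leq\mathcal{C}(\sigma(h^{\prime}))/\mu(E)\leq\mathcal{C}(\sigma(h))/\mu(E)=\mathcal{C}_{E,\mu}(\sigma(g))$, which is the required monotonicity.

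For uniqueness I would invoke the earlier identification results. By \hyperlink{coro1}{Corollary 1} the cost function $\mathcal{C}$ of \hyperlink{def1}{Definition 1} is uniquely pinned down by $\succsim$, and by \hyperlink{coro2}{Corollary 2} the belief $\mu$ is uniquely determined, hence so is $\mu(E)$ for each $E\in\Sigma^{\prime}$. Since \hyperlink{def4}{Definition 4} expresses $\mathcal{C}_{E,\mu}$ as a deterministic function of the pair $(\mathcal{C},\mu)$, namely $\sigma(f)\mapsto\min\{\mathcal{C}(\sigma(h))/\mu(E):h\in\mathcal{F}(f,E)\}$, any two \hyperlink{def5}{Minimal Complexity Aversion Representations} of the same $(\succsim,\{\succsim_{E}\})$ share the same $\mathcal{C}$ and $\mu$ and therefore the same $\mathcal{C}_{E,\mu}$; this is uniqueness.

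The main obstacle is the monotonicity step, specifically the bookkeeping for cells of the minimizing act $h$ whose global cell $c$ straddles the boundary of $E$: I must coarsen $h$ on $E$ down to $S^{f|E}$ without ever splitting a cell of $S^{h}$, since splitting would risk $\sigma(h^{\prime})\not\subset\sigma(h)$ and a possible increase in cost. The construction above avoids this by reassigning values cell-by-cell on $S^{h}$ and only merging, never subdividing; the one point to state carefully is that each trace $c\cap E$ lies in a single cell of the coarser $S^{f|E}$, which is precisely what makes the grouping well-defined and guarantees that the partition induced on $E$ is exactly $S^{f|E}$ rather than something strictly finer.
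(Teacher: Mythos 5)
Your proof is correct, but it departs from the paper's argument at two of the three steps, and in one place it is actually more robust. For property $(\romannumeral1)$ you and the paper do the same thing (constant acts are feasible and cost zero). For monotonicity, the paper argues through acts that are constant outside $E$: it takes $fEx_{f}\in\mathcal{F}^{min}(f,E)$ and $gEx_{g}\in\mathcal{F}^{min}(g,E)$ and chains the inclusions $\sigma(gEx_{g})\subset\sigma(gEx_{f})\subset\sigma(fEx_{f})$, implicitly assuming that the minimum in Definition 4 is attained by an act of the form $fEx$ and that $\mathcal{F}^{min}(f,E)$ is nonempty --- which is delicate, since the refinement order need not admit a least element (with $E=\{\omega_{1},\omega_{2}\}$ and $\Omega\setminus E=\{\omega_{3}\}$, the two candidates $\{\{\omega_{1},\omega_{3}\},\{\omega_{2}\}\}$ and $\{\{\omega_{1}\},\{\omega_{2},\omega_{3}\}\}$ are incomparable). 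Your cell-merging construction, which coarsens an \emph{arbitrary} minimizer $h\in\mathcal{F}(g,E)$ by grouping the cells of $S^{h}$ according to the cell of $S^{f|E}$ containing their trace (never splitting a cell, so that $\sigma(h^{\prime})\subset\sigma(h)$ is automatic), sidesteps this issue entirely and correctly handles cells straddling the boundary of $E$; this is a genuinely cleaner route to the same inequality. For uniqueness the two arguments differ in kind: the paper runs a behavioral perturbation argument on $\succsim_{E}$ (shift $f^{*}$ by a small $\epsilon$, suppose a second conditional cost function $\mathcal{C}^{\prime}_{E,\mu}$ represents the same conditional preference, and squeeze $\mathcal{C}_{E,\mu}(\sigma(f^{*}))-u(\epsilon)<\mathcal{C}^{\prime}_{E,\mu}(\sigma(f^{*}))$ for all small $\epsilon$ into equality), exactly paralleling its proof of Corollary 1; you instead inherit uniqueness structurally, observing that Corollaries 1 and 2 pin down $(\mathcal{C},\mu)$ and that Definition 4 is a deterministic function of that pair. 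Your version proves the corollary as literally stated and is shorter, but note what each buys: the paper's argument delivers a behavioral identification --- any conditional cost function, minimal or not, that represents $\succsim_{E}$ in the additive form must coincide with $\mathcal{C}_{E,\mu}$ --- whereas yours only shows that the minimal formula has a unique output given unique inputs, so it would not by itself rule out a non-minimal $\mathcal{C}^{\prime}_{E,\mu}$ generating the same conditional choices.
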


            \begin{corollary} 
            \hypertarget{coro4}{If} $\langle u,\mu,\mathcal{C},\{\mathcal{C}_{E,\mu}\}_{E\in\Sigma^{\prime}}\rangle$  and $\langle u^{\prime},\mu^{\prime},\mathcal{C}^{\prime},\{\mathcal{C}^{\prime}_{E,\mu^{\prime}}\}_{E\in\Sigma^{\prime}} \rangle$ represent the same preferences relations, then $u^{\prime}$ is a positive affine transformation of $u$, $\mu=\mu^{\prime}$, $\mathcal{C}^{\prime}=\alpha \mathcal{C}$ and $\mathcal{C}_{E,\mu^{\prime}}^{\prime}=\alpha \mathcal{C}_{E,\mu}$ for some $\alpha>0$.
            \end{corollary}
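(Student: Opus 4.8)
The plan is to reduce the entire statement to the already-established ex-ante uniqueness result \hyperlink{coro2}{Corollary 2}, and then to propagate the identified constants through the defining formula of the conditional cost function in \hyperlink{def4}{Definition 4}. The organizing observation is that, among the four objects $\langle u,\mu,\mathcal{C},\{\mathcal{C}_{E,\mu}\}_{E\in\Sigma^{\prime}}\rangle$, only the first three are genuinely free: by \hyperlink{def5}{Definition 5} the conditional cost function $\mathcal{C}_{E,\mu}$ is not an independent primitive but is forced to be the minimal object built from $\mathcal{C}$ and $\mu$ via \hyperlink{def4}{Definition 4} (this is exactly the determinacy recorded in \hyperlink{coro3}{Corollary 3}). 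So once $u$, $\mu$, and $\mathcal{C}$ are pinned down up to the appropriate transformations, the conditional cost functions are automatically determined, and the whole corollary becomes a corollary of the ex-ante case.

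First I would apply \hyperlink{coro2}{Corollary 2} to the ex-ante preference $\succsim=\succsim_{\Omega}$. By part (i) of \hyperlink{def5}{Definition 5}, both $\langle u,\mu,\mathcal{C}\rangle$ and $\langle u^{\prime},\mu^{\prime},\mathcal{C}^{\prime}\rangle$ are \hyperlink{def1}{Complexity Aversion Representations} of the same relation $\succsim$. \hyperlink{coro2}{Corollary 2} then furnishes a constant $\alpha>0$ and a real number $\beta$ with $u^{\prime}=\alpha u+\beta$, together with $\mu^{\prime}=\mu$ and $\mathcal{C}^{\prime}=\alpha\mathcal{C}$. This already delivers three of the four claimed conclusions, and the only remaining task concerns the conditional cost function.

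Next I would substitute these identities into the formula of \hyperlink{def4}{Definition 4}. Writing out the minimal conditional cost for the primed representation and using $\mu^{\prime}=\mu$ and $\mathcal{C}^{\prime}=\alpha\mathcal{C}$ gives
\begin{equation*}
\mathcal{C}^{\prime}_{E,\mu^{\prime}}\big(\sigma(f)\big)=\min\Big\{\frac{\alpha\,\mathcal{C}(\sigma(h))}{\mu(E)}:h\in\mathcal{F},\ \sigma(h|E)=\sigma(f|E)\Big\}=\alpha\,\mathcal{C}_{E,\mu}\big(\sigma(f)\big),
\end{equation*}
where the second equality holds because $\alpha>0$, so the positive factor $\alpha$ pulls out of the minimum; the feasible set $\{h\in\mathcal{F}:\sigma(h|E)=\sigma(f|E)\}$ and the normalizer $\mu(E)$ are identical across the two representations precisely because $\mu^{\prime}=\mu$. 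This is exactly the fourth conclusion, $\mathcal{C}^{\prime}_{E,\mu^{\prime}}=\alpha\mathcal{C}_{E,\mu}$.

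I expect the substantive content to reside entirely in \hyperlink{coro2}{Corollary 2}; the conditional step is a one-line consequence of the positive homogeneity of the $\min$. The main point to be careful about is that the \emph{same} constant $\alpha$ must govern both $\mathcal{C}$ and each $\mathcal{C}_{E,\mu}$, rather than allowing an independent rescaling at the conditional stage. This is guaranteed by the construction: since $\mu^{\prime}=\mu$, neither the feasible set nor the factor $\mu(E)$ can introduce a new degree of freedom, so the scaling inherited from the ex-ante stage is transmitted verbatim. A minor technical remark worth including is that the $\min$ in \hyperlink{def4}{Definition 4} is attained (finiteness of $\Omega$ makes $\mathbb{P}$ finite), which legitimizes the manipulation; one may also record the consistency check $V^{\prime}(\cdot\mid E)=\alpha V(\cdot\mid E)+\beta$, confirming that both tuples do represent the common conditional preference $\succsim_{E}$.
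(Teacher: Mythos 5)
Your proposal is correct, but the second half takes a genuinely different route from the paper. For the ex-ante identifications you and the paper do the same thing: the paper re-runs the argument of \hyperlink{coro2}{Corollary 2} (via \hyperlink{lem2}{Lemma 2}) to get $\mu=\mu^{\prime}$, $u^{\prime}$ affine in $u$, and hence $\mathcal{C}^{\prime}=\alpha\mathcal{C}$, which you simply cite. The divergence is in how the conditional scaling $\mathcal{C}^{\prime}_{E,\mu^{\prime}}=\alpha\,\mathcal{C}_{E,\mu}$ is obtained. You treat $\mathcal{C}_{E,\mu}$ as definitionally determined by $(\mathcal{C},\mu)$ through the $\min$ formula of \hyperlink{def4}{Definition 4} and pull the positive constant $\alpha$ out of the minimum --- a one-line structural argument, legitimate because \hyperlink{def5}{Definition 5} requires the conditional cost to be the minimal one rather than a free parameter. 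The paper instead argues \emph{behaviorally}: for each $E$ it takes an act $f$ with a conditional certainty equivalent $x_{f}\sim_{E}f$ (via \hyperlink{lem1}{Lemma 1}), expresses $\mathcal{C}_{E,\mu}\big(\sigma(f)\big)=\int_{\Omega}u\big(f(\omega)\big)\mu_{E}(d\omega)-\int_{\Omega}u(x_{f})\mu_{E}(d\omega)$ in both representations, and lets the affine relation between $u$ and $u^{\prime}$ force the common factor $\alpha$. Your route is shorter, handles all acts uniformly, and avoids any existence question about conditional certainty equivalents; what the paper's route buys is slightly more, namely that the conditional cost is pinned down by the conditional preference $\succsim_{E}$ itself (in the spirit of \hyperlink{coro3}{Corollary 3}), so the scaling would survive even if one did not presuppose the minimality structure linking $\mathcal{C}_{E,\mu}$ to $(\mathcal{C},\mu)$. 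Your closing remarks --- that the same $\alpha$ is transmitted because neither the feasible set $\{h\in\mathcal{F}:\sigma(h|E)=\sigma(f|E)\}$ nor the normalizer $\mu(E)$ depends on the representation once $\mu=\mu^{\prime}$, and that finiteness of $\Omega$ makes the minimum attained --- are accurate and worth keeping.
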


            \indent \hyperlink{coro4}{Corollary 4} shows that the agent's utility function, prior, the complexity cost function, and the conditional complexity cost function are unique. 

\section{Applications}  \label{sec5}
\subsection{Design of Contracts Under Moral Hazard}  \label{sec5.1}

            In this section, I discuss an application to optimal contract design of wage scheme. I show that complexity aversion may profoundly change the principals' choice of optimal wage scheme.

            \indent The risk neutral  principal tends to design an incentive contract to hire a manager (the agent) for a specific project. The agent is risk averse and receives a utility $u(W)$ given wage $W\geq 0$ net of the cost $c(e)$ of effort $e$. The agent choose her effort level $e\in\{0,1\}$. For simplicity, let $c(1)=c>c(0)=0$. The utility function $u:\mathbb{R}^{+}\rightarrow \mathbb{R}$ is given by $u(W)=\sqrt{W}$.

            \indent The wage scheme $W: \Omega \rightarrow \mathbb{R}^{+}$ is designed based on the agent's performance data, which is modeled as a finite state space  $\Omega=\{\omega_{1},\omega_{2},\omega_{3}\}$. The probability of observing performance data $\omega_{1}$ is given by $\mu_{e}(\omega_{1})>0$, where $e\in\{0,1\}$. It is naturally to think of these probabilities as being affected by efforts. Note that $\sum_{\omega}\mu_{e}(\omega)=1$ for each effort level $e$.

            \indent I discuss the setting in which the agent's choice of effort cannot be observed by the principal (I refer to \cite{holmstrom1979moral} for more discussion of this setting). The principal keeps seeking the wage scheme that minimize the incentive cost plus the complexity cost. In other words, the principal faces the trade-off between designing a complex contract and saving the design cost. Moreover, suppose the complexity cost of $W$ is given by $\mathcal{C}(\sigma(W))=0$ for any $W$ such that $S^{W}=\{\Omega\}$, and $\mathcal{C}(\sigma(W))=\delta|S^{W}|$ otherwise. $|S^{W}|$ measures the number of elements in the partition induced by $W$, and $\delta>0$ measures the degree of complexity aversion.

            \indent Since the agent's effort level cannot be observed, the principal must make sure that the optimal wage scheme leads the agent to voluntarily choose desired effort level. Thus, if the principal wishes to induce the agent to exert high effort, the problem is
            \begin{equation*}
		    \underset{W}{\min}\sum_{\omega}\mu_{1}(\omega)W(\omega) + \delta|S^{W}| \quad \text{ subject to } 
            \end{equation*}
            \begin{equation}
		    \sum_{\omega}\mu_{1}(\omega)\sqrt{W(\omega)} - c\geq \bar{u}, \text{ and }
            \end{equation}
            \begin{equation}
		    \sum_{\omega}\mu_{1}(\omega)\sqrt{W(\omega)} - c\geq \sum_{\omega}\mu_{0}(\omega)\sqrt{W(\omega)}. 
            \end{equation}

            \indent The constraint (1) requires that the wage scheme yields the agent at least her reservation utility $\bar{u}$. The constraint (2) ensures that the effort level that the principal intends to induce is the same as that actually chosen by the agent. Let $\mu(\omega)=1- \frac{\mu_{0}(\omega)}{\mu_{1}(\omega)}$, constraint (2) can be rewritten as $\sum_{\omega}\mu_{1}(\omega)\mu(\omega)\sqrt{W(\omega)} \geq c$.

            \indent A complexity averse principal takes two steps to determine the optimal wage scheme. First, he chooses a contract that minimizes the cost for each complexity level. There are three types of contracts in this setting:
	        \begin{center}
	        \setlength{\tabcolsep}{6mm}{
		    \begin{tabular}{ll}
			Type & Contract \\
			\hline
			Simple contract&  $W(\omega)=W\ \forall \omega\in\Omega$\\
			Moderate complex contract & $W(\omega_{i})=W(\omega_{j})\neq W(\omega_{k})\ \forall i,j,k\in\{1,2,3\}$; \\
			Complex contract & $W(\omega_{1})\neq W(\omega_{2})\neq W(\omega_{3})$ \\
	        \end{tabular}}
            \end{center}
            \ \\
            Then, he determines the optimal contract which gives the minimal cost.

            \begin{proposition}   
            \hypertarget{prop1}{Suppose} for any $e\in\{0,1\}$, $\mu_{e}(\omega)\neq \mu_{e}(\omega^{\prime})$ for any $\omega\in\Omega$ and $\mu_{1}(\omega)\neq \mu_{0}(\omega)$ for at least one $\omega\in\Omega$. An optimal wage scheme that induces high effort\footnote{It is easy to see that a simple contract can only induce low effort.} exists and\\
            (1) If $\delta>\big[\frac{\mu_{1}(\omega_{k})(1-\mu_{1}(\omega_{k}))}{\mu^{2}(\omega_{k})}+1\big]c^{2}$, the principal chooses an optimal contract that is moderate complex. $W^{*}(\omega_{i})=W^{*}(\omega_{j})\neq W^{*}(\omega_{k})\ \forall i,j,k\in\{1,2,3\}$ and
            \begin{equation*}
	        \big(W^{*}(\omega_{i}),W^{*}(\omega_{k})\big)= \big((\bar{u} - c\mu_{1}(\omega_{k})/\mu(\omega_{k}))^{2}, (\bar{u} + c(1-\mu_{1}(\omega_{k}))/\mu(\omega_{k}))^{2}\big)
            \end{equation*}
            (2) If $\delta< \big[\frac{\mu_{1}(\omega_{k})(1-\mu_{1}(\omega_{k}))}{\mu^{2}(\omega_{k})}+1\big]c^{2}$, the principal chooses an optimal contract that is complex. $W(\omega_{1})\neq W(\omega_{2})\neq W(\omega_{3})$, 
            \begin{equation*}
	        W^{*}(\omega) = \bigg[\bar{u} - c\mu(\omega)\bigg(\sum_{\omega}\mu_{0}(\omega)\mu(\omega)\bigg)^{-1}\bigg]^{2}\quad  \forall \omega\in \Omega.
            \end{equation*}
            \end{proposition}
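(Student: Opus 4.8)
The plan is to reduce the principal's problem to a sequence of convex programs, one for each admissible \emph{complexity class}, and then compare their values after adding the complexity charge $\delta|S^{W}|$. The decisive simplification is the change of variables $v_{\omega}:=\sqrt{W(\omega)}\ge 0$. Under this substitution the objective $\sum_{\omega}\mu_{1}(\omega)W(\omega)=\sum_{\omega}\mu_{1}(\omega)v_{\omega}^{2}$ becomes strictly convex, while both the participation constraint $(1)$ and the rewritten incentive constraint $\sum_{\omega}\mu_{1}(\omega)\mu(\omega)v_{\omega}\ge c$ become \emph{linear} in $v$. Hence, for each fixed partition structure, we are minimizing a strictly convex quadratic over a polyhedron, so a minimizer exists, is unique, and is completely characterized by the Karush--Kuhn--Tucker conditions; this simultaneously yields existence of the optimal wage scheme and lets me compute each candidate in closed form.

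First I would dispose of the simple (constant) contracts. If $v_{\omega}\equiv v$, then since $\sum_{\omega}\mu_{1}(\omega)\mu(\omega)=\sum_{\omega}(\mu_{1}(\omega)-\mu_{0}(\omega))=0$, the incentive constraint reads $0\ge c>0$, which is infeasible; this is the footnote's claim that a simple contract can induce only low effort. It therefore suffices to optimize within the moderate class ($|S^{W}|=2$) and the complex class ($|S^{W}|=3$) and to compare the two resulting minimal total costs. For the \textbf{complex contract} I would form the Lagrangian with multipliers $\lambda$ (participation) and $\eta$ (incentive), argue that both constraints bind, and read off from stationarity that $v_{\omega}$ is \emph{affine} in the likelihood ratio $\mu(\omega)$. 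Substituting back and exploiting $\sum_{\omega}\mu_{1}(\omega)\mu(\omega)=0$ decouples the two binding equations: participation pins down the constant term, and the incentive equation pins down the slope through the identity $\sum_{\omega}\mu_{1}(\omega)\mu^{2}(\omega)=-\sum_{\omega}\mu_{0}(\omega)\mu(\omega)$. This reproduces the stated $W^{*}(\omega)$ and, after squaring and summing, an incentive cost of the form $A^{2}+c^{2}/\sum_{\omega}\mu_{1}(\omega)\mu^{2}(\omega)$. The nonnegativity $v_{\omega}\ge 0$ and strict positivity of $\eta$ are where the hypotheses $\mu_{e}(\omega)\neq\mu_{e}(\omega')$ and $\mu_{1}(\omega)\neq\mu_{0}(\omega)$ enter, guaranteeing nondegenerate likelihood ratios so the sign-unconstrained KKT solution is admissible.

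For the \textbf{moderate contract} I would fix the separated state $\omega_{k}$, impose the pooling equality $v_{i}=v_{j}$, and again take the two remaining constraints binding. This reduces to a $2\times 2$ linear system in $(v_{p},v_{k})$; using $\sum_{\omega}\mu_{1}(\omega)\mu(\omega)=0$ the incentive equation collapses to $\mu_{1}(\omega_{k})\mu(\omega_{k})(v_{k}-v_{p})=c$, and together with participation it delivers the stated pooled and separated wages and an incentive cost depending only on $\mu_{1}(\omega_{k})$ and $\mu(\omega_{k})$. I would then minimize over the three choices of $\omega_{k}$, the distinctness assumption ensuring a unique cheapest separation. Finally, comparing total costs, the principal prefers moderate complexity precisely when the saved unit of complexity cost $\delta$ exceeds the incentive-cost premium of pooling $\mathrm{IC}_{\mathrm{mod}}-\mathrm{IC}_{\mathrm{cpx}}$; simplifying this difference yields the crossover $\delta^{*}=\big[\mu_{1}(\omega_{k})(1-\mu_{1}(\omega_{k}))/\mu^{2}(\omega_{k})+1\big]c^{2}$, with $\delta>\delta^{*}$ selecting the moderate contract of part $(1)$ and $\delta<\delta^{*}$ selecting the complex contract of part $(2)$.

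The main obstacle I anticipate lies in the moderate class rather than in the mechanical algebra: correctly identifying which pair of states to pool, and verifying that the closed-form wages are nonnegative and that both multipliers remain strictly positive. These sign conditions are exactly what certify that the two binding-constraint candidates are genuinely optimal and what pin down the precise crossover $\delta^{*}$; establishing them carefully, using the nondegeneracy hypotheses, is the step I expect to require the most care.
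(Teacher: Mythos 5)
Your proposal matches the paper's own proof essentially step for step: the paper likewise disposes of the simple contract via $\sum_{\omega}\mu_{1}(\omega)\mu(\omega)=0$, solves the $|S^{W}|=3$ and $|S^{W}|=2$ classes by Lagrangian first-order conditions with both constraints binding (its condition $2\sqrt{W(\omega)}=\lambda+\beta\mu(\omega)$ is exactly your affine-in-likelihood-ratio solution, and it uses the same identity $\sum_{\omega}\mu_{1}(\omega)\mu^{2}(\omega)=-\sum_{\omega}\mu_{0}(\omega)\mu(\omega)$), and then compares the two minimal total costs including the $\delta|S^{W}|$ charge to obtain the stated threshold. Your explicit substitution $v_{\omega}=\sqrt{W(\omega)}$, which turns each class into a strictly convex quadratic program over a polyhedron, is a modest strengthening in that it delivers the existence and uniqueness claims the paper leaves implicit, but the route is the same.
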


            \indent \hyperlink{prop1}{Proposition 1} shows that if the principal is complexity averse, he may not choose a complex wage scheme which gives each possible state of performance a different wage. In other words, if the principal is constrained in his ability to analyze all possible states of agent's performance, he tends to design and offer moderate complex wage scheme. Moreover, it is easy to see that as $\delta$ increases, the principal is more likely to choose a moderate complex wage scheme.

            \begin{proposition} 
            \hypertarget{prop2}{Suppose} a principal with the degree of complexity aversion $\delta$ chooses an optimal contract that is moderate complex, then so is one with the degree of complexity aversion $\delta^{\prime}$ if $\delta^{\prime}\geq \delta$.
            \end{proposition}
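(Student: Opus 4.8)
The plan is to reduce the statement to a single-crossing (monotone comparative statics) argument in the parameter $\delta$. Following the two-step procedure described before Proposition~1, I would first fix the complexity class and minimize the expected wage bill $\sum_{\omega}\mu_{1}(\omega)W(\omega)$ subject only to the participation and incentive constraints (1) and (2). The crucial observation is that neither constraint involves $\delta$, so the minimized wage bills are pinned down entirely by the probabilities and the effort cost $c$. Write $M$ for the minimized wage bill over moderate complex schemes and $K$ for the minimized wage bill over complex schemes; these are exactly the values attained by the two explicit formulas in Proposition~1, and in particular both are independent of $\delta$. Since a simple contract can only induce low effort, the only candidates for an optimal high-effort scheme are the moderate complex and the complex one, whose total costs (wage bill plus complexity cost) are $M+2\delta$ and $K+3\delta$, respectively.

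First I would record that the principal with parameter $\delta$ chooses the moderate complex contract exactly when $M+2\delta\le K+3\delta$, i.e.\ when the net advantage of the complex contract,
\begin{equation*}
\Phi(\delta) := (K+3\delta)-(M+2\delta) = (K-M)+\delta,
\end{equation*}
is nonnegative. The key step is then immediate: because $M$ and $K$ do not depend on $\delta$, the map $\delta\mapsto\Phi(\delta)$ is strictly increasing with unit slope. Hence if $\Phi(\delta)\ge 0$ --- that is, if the $\delta$-principal weakly prefers the moderate complex contract --- then for any $\delta'\ge\delta$ we have $\Phi(\delta')=\Phi(\delta)+(\delta'-\delta)\ge 0$, with strict inequality whenever $\delta'>\delta$. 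Thus the $\delta'$-principal also finds the moderate complex contract optimal (strictly so for $\delta'>\delta$), which is the claim.

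Finally, I would connect this to the explicit threshold in Proposition~1. The condition $\Phi(\delta)\ge 0$ is equivalent to $\delta\ge M-K$, so the cutoff is $\delta^{*}=M-K$, which Proposition~1 identifies with $\big[\mu_{1}(\omega_{k})(1-\mu_{1}(\omega_{k}))/\mu^{2}(\omega_{k})+1\big]c^{2}$; monotonicity of the choice in $\delta$ is then just the statement that the set $\{\delta:\delta\ge\delta^{*}\}$ is upward closed. The only real obstacle is the first step --- verifying that the within-class minimized wage bills $M$ and $K$ are genuinely $\delta$-free and that no third contract type can overtake them for high effort --- but this is already delivered by Proposition~1; once it is in hand, the comparative statics is a one-line single-crossing argument requiring no further computation.
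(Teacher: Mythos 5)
Your proposal is correct and takes essentially the same route as the paper: the paper's proof of Proposition~2 is literally ``immediate from the proof of Proposition~1,'' which is exactly your observation that the within-class minimized wage bills are $\delta$-free, so moderate complexity is optimal precisely on the upward-closed set $\{\delta:\delta\geq \delta^{*}\}$ with $\delta^{*}=M-K$. You simply make explicit the one-line monotone-comparative-statics step (the unit-slope map $\delta\mapsto(K-M)+\delta$) that the paper leaves implicit.
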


\subsection{Attention Reallocation}  \label{sec5.2}
\subsubsection{Introduction}\label{sec5.2.1}
            There are two types of axiomatic models of inattention: signal-based model\footnote{In the signal-based model, the agent has a prior that represents her initial beliefs. Each possible realization of the signal induces a corresponding posterior belief via Bayes' rule. The characterization of the preference requires that the agent chooses a signal that maximizes the expected utility net of attention cost. See \cite{caplin2015revealed}, \cite{de2017rationally}, \cite{matvejka2015rational} for more details.} and partition-based model. In the partition-based model, the agent has to deal with a two-stage decision problem. First, she chooses what information she tends to pay attention to (\cite{ellis2018foundations} calls this \textsl{subjective information}). Then, after the arrival of new information, she chooses the act that maximizes her expected utility conditional on the realized part of her subjective information. The two types of models have a key distinction that indicates different behavior implications. In the signal-based model, the agent is constrained to choose signals. Once a signal is chosen, the agent's decision is determined. In contrast, the agent of \cite{ellis2018foundations} chooses an information partition that determines her decision of acts. However, no research discusses the situation in which the agent reallocates her attention after the arrival of new information. 
            
            If we say that the act chosen conditional on the new information is the ``correct'' one, then because of limited attention, the behavior of choosing act conditional on the realized part of subjective information can be regarded as making mistakes. In this paper, we suggest that the agent is self-aware that herself has an attention constraint. Thus, when new information arrives, although she cannot improve her cognitive ability in a short period of time to use the new information to make a decision, she can update her subjective state space, which leads to a new set of subjective information that is more compatible with the arrival information. I discuss the theory of attention reallocation under the setting of \cite{ellis2018foundations}. I extend the existing studies in following ways. I propose a modified framework of choice under uncertainty to model inattentive decision maker where the agent reallocates her attention after she receives new information. She is unable to precisely observe the new information, however, she could use some identification strategies to keep or discard possibilities. Then, she can pay attention to a finer information partition with the decreasing of uncertainty. Therefore, she can make a decision closer to the "correct" one.
            
\subsubsection{Optimal Inattention}\label{sec5.2.2}
            In this section, I introduce the model of optimal inattention and discuss how can we incorporate the idea of minimal complexity updating into this model.

            The agent's choices are defined by \textsl{a conditional choice correspondence}. The agent chooses any acts in $c(B|E)$ when the new information is $E$. \cite{ellis2018foundations} characterizes axiomatically the following \hyperlink{eq3}{Optimal Inattention Representation} 
            \begin{equation}
            \hypertarget{eq3}{\hat{S}^{B} \in \arg \max _{S \in \mathbb{P}}\left[\sum_{E \in S} \mu_{\Omega}(E) \max _{f \in B} \int_{\omega \in \Omega} u(f(\omega))\mu_{E}(d\omega)-\mathcal{C}(\sigma(S)) \right]}
            \end{equation} 
            \begin{equation}
            \hypertarget{eq4}{c(B|E) = \arg \max _{f \in B} \int_{\omega \in \Omega} u(f(\omega))\mu_{\hat{S}^{B}(\succsim_E)}(d\omega)}
            \end{equation} 
            where choice problem $B\subset \mathcal{F}$ is a finite subset of acts, $\hat{S}^{B}\in \mathbb{P}$ is her \textsl{subjective information} when facing the decision problem $B$ given $\mu\in \Delta(\Omega)$. The agent faces a two-stage decision problem. At the first stage (\hyperlink{eq3}{Equation 3}), the agent chooses her \textsl{subjective information} $\hat{S}^{B}$. After the realization of event $E$, she subjectively chooses the set of possible states $\hat{S}^{B}(\succsim_E)$ based on her \textsl{subjective information}. Then she makes the decision conditional on $\hat{S}^{B}(\succsim_E)$.

            The way an inattentive agent processes new information is different from an agent who has full attention. Namely, when an event occurs, an agent who has no inattention problem could discard all other states that are not in the new event, and choose an act conditional on this event. However, an inattentive agent could not pay full attention to the new event, she might only ambiguously perceive some elements in her subjective information that contains the new event. An inattentive agent can be regarded as myopic, unable to observe the new event clearly. To illustrate the idea, consider the following investment example 3.
            \begin{example}
            \hypertarget{exp3}{Suppose} there are $8$ possible states of the economy. Assume that the agent chooses her \textsl{subjective information} $\hat{S}^{B}=\{s_1,s_2,s_3,s_4,s_5,\}$ at the first stage. If the agent is a standard expected utility maximizer, then after the arrival of new information $E=\{\omega_{3},\omega_{4},\omega_{5}\}$, she could make her decision conditional on the event $E$. If the agent is inattentive, she actually could not pay full attention to the event $E$. The best she can do is to discard $s_1$ and $s_5$. Therefore, the main difficulty here is to depict the agent's behaviors about choosing what is possible after the occurrence of the new event. 
            \end{example}

            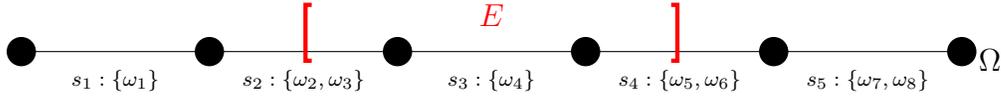
\begin{figure}[ht]
            \centering
            \begin{tikzpicture}[scale=1.25]
	        \tikzset{
		    mydot/.style={
			fill,
			circle,
		    }
	        }
	        \coordinate  (A) at(0,0) ;
	        \coordinate  (B)  at (2,0);
	        \coordinate  (C) at (4,0);
	        \coordinate  (D) at (6,0);
	        \coordinate  (E) at (8,0);
	        \coordinate  (F) at (10,0);
	         (3,0) coordinate (B) (6,0) coordinate (C) (9,0) coordinate (D);
	        \draw (A) -- node [midway, below=1mm]{\scriptsize $s_{1}:\{\omega_{1}\}$}
	        (B)  -- node [midway, below=1mm]{\scriptsize $s_{2}:\{\omega_{2},\omega_{3}\}$}
	        (C)   -- node [midway, below=1mm]{\scriptsize $s_{3}:\{\omega_{4}\}$}
	        (D)  -- node [midway, below=1mm]{\scriptsize $s_{4}:\{\omega_{5},\omega_{6}\}$}
	        (E)    -- node [midway, below=1mm]{\scriptsize $s_{5}:\{\omega_{7},\omega_{8}\}$}
	        (F);
	        \node[mydot,label={below left:}] at (A) {};
	        \node[mydot,label={below left:}] at (B) {};
	        \node[mydot,label={below left:}] at (C) {};
	        \node[mydot,label={below left:}] at (D) {};
	        \node[mydot,label={below left:}] at (E) {};
	        \node[mydot,label={below right:}] at (F) {};
	        \draw[{Bracket[width=8mm]}-,ultra thick,red]  ($(3,0)!-2mm!-90:(3.01,0)$)--node {} ($(3.01,0)!-2mm!90:(3,0)$);
	        \draw[-{Bracket[width=8mm]},ultra thick,red] ($(6.99,0)!-2mm!-90:(7,0)$)--node {} ($(7,0)!-2mm!90:(6.99,0)$);
	        \draw (5,0) node[above=2mm] {$\color{red}E$};
	        \draw (10.3,0) node[above=-4mm] {$\Omega$};
            \end{tikzpicture}
            \caption{Investment Example 3.}
            \label{fig:figure6}
            \end{figure}

            \indent Before  discussing the strategies that the agent will use to choose what is possible after the arrival of new information, I introduce the following notation. Recall that I denote by $\sigma(S)$ the algebra generated by partition $S\in \mathbb{P}$, that is, given $S\in \mathbb{P}$, $\sigma(S)$ is the family of events obtained by taking unions of elements in $S$.  

            \begin{definition}
            \hypertarget{def6}{For} any $B\in\mathcal{F}$ and $E\subseteq \Omega$, given $\hat{S}^{B}$, let $\hat{S}^{B}(\succsim_{E})\in \sigma(\hat{S}^{B})$ denote the set of states that the agent treats possible after the realization of $E$. A possibility selection rule is inattentive if   $\hat{S}^{B}(\succsim_{E})\in argmin\{|S^{B}(\succsim_{E})|: S^{B}(\succsim_{E})\in \sigma(\hat{S}^{B}), E\subseteq S^{B}(\succsim_{E})\}$.
            \end{definition}

            \indent In the structure of SEU, $\hat{S}^{B}(\succsim_{E})=E$ for any $E\in \Sigma$, the agent treats a state $\omega$ as impossible after the occurrence of event $E$ if $\omega\notin E$. However, the inattentive agent might treat subjectively a state $\omega$ as possible after the occurrence of event $E$, even if $\omega$ is not in $E$ objectively. For instance, $\hat{S}^{B}(\succsim_{E})$ can be $s_{1}\cup s_{2}\cup s_{3}\cup s_{4}$ or  $s_{2}\cup s_{3}\cup s_{4}\cup s_{5}$  in \hyperref[fig:figure6]{Figure 6} example. This is the key difference between an inattentive agent and SEU agent, which implies how an inattentive agent processes an event $E$.  

            \indent A noteworthy remark is that, here I do assume a possibility selection rules. The notation imposes a particular structure on the agent's behavior when choosing what is possible after the arrival of new information. It requires that the agent will discard all irrelevant cells of her subjective information. I argue that this does make sense, since I assume that the agent is self-aware that she has an attention constraint. 

            \noindent \textbf{Continue Example 3.} After the occurrence of event $E$, the agent treats $\hat{S}^{B}(\succsim_{E})=\{\omega_{2},\omega_{3},\omega_{4},\omega_{5},\omega_{6}\}$ as possible. Since states $\omega_{1}$ and $\omega_{7}$ are regarded as impossible, her new subjective state space is $\{\omega_{2},\omega_{3},\omega_{4},\omega_{5},\omega_{6}\}$. It is reasonable to infer that she will try to reallocate her attention on her new subjective state space. Consider another information partition $\bar{S}^{B}=\{\{\omega_{1}\},\{\omega_{2}\},\{\omega_{3}\},\{\omega_{4},\omega_{5},\omega_{6}\},\{\omega_{7}\},\{\omega_{8}\}\}$ which is finer than $\hat{S}^{B}$ on $\Omega$, but has the same level of complexity as  $\hat{S}^{B}$ on $E$. If she reallocate her attention to $\bar{S}^{B}$, $\bar{S}^{B}(\succsim_{E})=\{\omega_{3},\omega_{4},\omega_{5},\omega_{6}\}$, which means she can take another step and keep discarding $\omega_{2}$. In other words, it is possible that an inattentive agent could keep discarding state $\omega_{6}$ by reallocating her attention and get $S^{B}(\succsim_{E})=E$.

            In next section, I formalize the attention reallocation model and present main result. It shows that an inattentive agent can have a more accurate reasoning about the event $E$ by reallocating her attention to a new subjective information. 
            
\subsubsection{Attention Reallocation}\label{sec5.2.3}
            In the model of \hyperlink{eq5}{Minimal Attention Reallocation}, an inattentive agent faces a $T$-stage decision problem. 
            
            $\bullet$ At time $t=1$, she chooses her \textsl{subjective information} $S_{1}^{B}$ (the same as the behavior of agents in the model of optimal inattention, here I rewrite \hyperlink{eq3}{Equation 3} as \hyperlink{eq5}{Equation 5} but with the same meaning).
            \begin{equation}
            \hypertarget{eq5}{S^{B}_{1} \in \arg \max _{S \in \mathbb{P}}\left[\sum_{E \in S} \mu_{\Omega}(E) \max _{f \in B} \int_{\omega \in \Omega} u(f(\omega))\mu_{E}(d\omega)-\mathcal{C}(\sigma(S)) \right]}
            \end{equation} 
            \indent At time $T=2$, she receives the new information $E$ and selects $S_{1}^{B}(\succsim_{E})$ as the set of possible states. Then, she can choose a new subjective information to pay attention to. Before proceeding, we need to figure out two things. First, does she still choose subjective information in $\mathbb{P}$? Second, how does she measure the attention cost after the arrival of new information? To answer the first question, consider the following definition. $\mathbb{P}(\hat{S}^{B}(\succsim_{E}))$ requires that she will only consider information partitions that are compatible with what she considers possible after the realization of $E$.

            \begin{definition}
            \hypertarget{def7}{\textsl{For}} \textsl{any} $B\in\mathcal{F}$ \textsl{and} $E\subseteq \Omega$, \textsl{given} $\hat{S}^{B}$, \textsl{let} $\mathbb{P}(\hat{S}^{B}(\succsim_{E}))\in \mathbb{P}$ \textsl{denote the set of partitions that the agent might choose after the realization of} $E$. \textsl{An attention reallocation rule is} \textbf{adaptive} \textsl{if for all} $S^{B}\in \mathbb{P}(\hat{S}^{B}(\succsim_{E}))$, $\cup s = \hat{S}^{B}(\succsim_{E})$ \textsl{for some cells} $s$ in $S^{B}$.
            \end{definition}

            The \hyperlink{def8}{minimal conditional attention cost function} deals with the second question. Since all states outside of $S_{1}^{B}(\succsim_{E})$ are regarded as impossible, she does not have to consider the attention cost of those states for the new subjective information. Therefore, when she tries to measure her new subjective information $S$, she can choose an information partition $R$ that gives the same information structure as $S$ on $S_{1}^{B}(\succsim_{E})$ but with lowest attention costs on $\Omega$.
            \begin{definition}
            \hypertarget{def8}{\textsl{Given}} $E\in\Sigma^{\prime}$ \textsl{and} $\mu\in\Delta{\Omega}$, \textsl{for any} $B\in\mathcal{F}$, \textsl{a conditional attention cost function} $\mathcal{C}_{E,\mu}: \{\sigma(S): S\in\mathbb{P}\} \longrightarrow [0,+\infty)$ \textsl{is} \textbf{minimal} \textsl{if} $\mathcal{C}_{E,\mu}\big(\sigma(S)\big)=min\big\{\mathcal{C}\big(\sigma(R)\big)/\mu\big(S(\succsim_{E})\big): R\in \mathbb{P} , \textsl{ and }\sigma\big(R|S(\succsim_{E})\big)=\sigma\big(S|S(\succsim_{E})\big)\big\}$
            \end{definition}

            $\bullet$  At time $t=2$, she chooses $S_{2}^{B}$ as her stage$2$ subjective information (\hyperlink{eq6}{Equation 6}).
            \begin{equation}
            \hypertarget{eq6}{S^{B}_{2} \in \arg \max _{S \in \mathbb{P}(S_{1}^{B}(\succsim_{E}))}\left[\sum_{E \in S} \mu_{S^{B}_{1}(\succsim_{E})}(E) \max _{f \in B} \int_{ E \in \Omega} u(f(\omega))\mu_{E}(d\omega)-\mathcal{C}_{E,\mu}(\sigma(S)) \right]}
            \end{equation} 
            where $\mu_{S^{B}_{1}(\succsim_{E})}(E)=\frac{\mu(E \cap S^{B}_{1}(\succsim_{E}))}{\mu(S^{B}_{1}(\succsim_{E}))}$, and $\mathcal{C}_{E,\mu}\big(\sigma(S)\big)=min\big\{\mathcal{C}\big(\sigma(R)\big)/\mu\big(S^{1}_{B}(\succsim_{E})\big): R\in \mathbb{P} , \text{ and }\sigma\big(R|S^{1}_{B}(\succsim_{E})\big)=\sigma\big(S|S^{1}_{B}(\succsim_{E})\big)\big\}$.
            
            $\bullet$  Based on $S^{B}_{2}$, she selects $S_{2}^{B}(\succsim_{E})$ as the set of possible states for stage$2$. The agent will keep repeat this procedure until no states can be discarded. Suppose she will stop at stage$T$ (\hyperlink{eq7}{Equation 7}). 
            \begin{equation}
            \hypertarget{eq7}{S^{B}_{T} \in \arg \max _{S \in \mathbb{P}(S_{T-1}^{B}(\succsim_{E}))}\left[\sum_{E \in S} \mu_{S^{B}_{T-1}(\succsim_{E})}(E) \max _{f \in B} \int_{ E \in \Omega} u(f(\omega))\mu_{E}(d\omega)-\mathcal{C}_{E,\mu}(\sigma(S)) \right]}
            \end{equation} 
            
            $\bullet$  Finally, she makes the decision conditional on $S_{T}^{B}(\succsim_E)$ (\hyperlink{eq8}{Equation 8}).
            \begin{equation}
            \hypertarget{eq8}{c(B|E) = \arg \max _{f \in B} \int_{\omega \in \Omega} u(f(\omega))\mu_{S_{T}^{B}(\succsim_E)}(d\omega)}
            \end{equation} 

            \noindent \textbf{Revisiting Example 3.} I illustrate the \hyperlink{def7}{attention reallocation rule} and \hyperlink{def8}{minimal conditional attention cost function} by \hyperlink{exp3.2}{Example 3}.
            
            Recall that at stage$1$, the agent chooses $\hat{S}^{B}$. After the occurrence of event $E=\{\omega_{3},\omega_{4},\omega_{5}\}$, the agent selects $\hat{S}^{B}(\succsim_{E})=\{\omega_{2},\omega_{3},\omega_{4},\omega_{5},\omega_{6}\}$ as possible. Then she chooses to reallocate her attention to information partitions in $\mathbb{P}(\hat{S}^{B}(\succsim_{E}))$, e.g., $\{\{\omega_{1},\omega_{7},\omega_{8}\},\{\omega_{2}\},\{\omega_{3},\omega_{4},\omega_{5}\},\{\omega_{6}\}\}$ and $\{\{\omega_{1},\omega_{7}\},\{\omega_{2},\omega_{3}\},\{\omega_{4},\omega_{6}\},\{\omega_{5}\},\{\omega_{8}\}\}$. At stage$2$, suppose she reallocates her attention to information partition $\bar{S}^{B}=\{\{\omega_{1}\},\\\{\omega_{2}\},\{\omega_{3}\},\{\omega_{4},\omega_{5},\omega_{6}\},\{\omega_{7}\},\{\omega_{8}\}\}$. The attention cost of $\bar{S}^{B}$ is measured by information partition $R=\{\{\omega_{1},\omega_{4},\omega_{5},\omega_{6},\omega_{7},\omega_{8}\},\{\omega_{2}\},\{\omega_{3}\}\}$. By \hyperlink{def6}{Definition 6}, she selects $\bar{S}^{B}(\succsim_{E})=\{\omega_{3},\omega_{4},\omega_{5},\omega_{6}\}$ as possible. Compared to $\hat{S}^{B}(\succsim_{E})$, she achieves a more accurate understanding of event $E$ by reallocating her attention to a new subjective information. She repeats the process until no states can be discarded. 

            \begin{figure}[ht]
            \centering
	        \begin{tikzpicture}
		    \tikzset{
			mydot/.style={
				fill,
				circle,
				inner sep=1.5pt
			    }
		    }
		    \coordinate  (A) at(0,0) ;
	        \coordinate  (B)  at (1.5,0);
	        \coordinate  (C) at (3,0);
	        \coordinate  (D) at (4.5,0);
	        \coordinate  (E) at (6,0);
            \coordinate  (F)  at (7.5,0);
	        \coordinate  (G) at (9,0);
	        \coordinate  (H) at (10.5,0);
	        \coordinate  (I) at (12,0);
		     (3,0) coordinate (B) (6,0) coordinate (C) (9,0) coordinate (D);
		    \draw (A) -- node [midway, below=1mm]{\scriptsize $\omega_{1}$}
	        (B)  -- node [midway, below=1mm]{\scriptsize $\omega_{2}$}
            (C)  -- node [midway, below=1mm]{\scriptsize $\omega_{3}$}
	        (D)  -- node [midway, below=1mm]{\scriptsize $\omega_{4}$}
	        (E)  -- node [midway, below=1mm]{\scriptsize $\omega_{5}$}
            (F)  -- node [midway, below=1mm]{\scriptsize $\omega_{6}$}
	        (G)  -- node [midway, below=1mm]{\scriptsize $\omega_{7}$}
            (H)  -- node [midway, below=1mm]{\scriptsize $\omega_{8}$}
	        (I);
		    \node[mydot,label={below left:}] at (A) {};
	        \node[mydot,label={below left:}] at (B) {};
	        \node[mydot,label={below left:}] at (C) {};
	        \node[mydot,label={below left:}] at (D) {};
	        \node[mydot,label={below left:}] at (E) {};
	        \node[mydot,label={below left:}] at (F) {};
            \node[mydot,label={below left:}] at (G) {};
	        \node[mydot,label={below left:}] at (H) {};
	        \node[mydot,label={below right:}] at (I) {};
            \draw[{Bracket[width=28mm]}-,ultra thick,blue]  ($(1.5,0)!-2.5mm!-90:(1.51,0)$)--node {} ($(1.51,0)!-2.5mm!90:(1.5,0)$);
		    \draw[-{Bracket[width=28mm]},ultra thick,blue] ($(8.99,0)!-2.5mm!-90:(9,0)$)--node {} ($(9,0)!-2.5mm!90:(8.99,0)$);
		    \draw (5.25,0) node[above=11mm] {$\color{blue}\hat{S}^{B}(\succsim_{E})$};
		    \draw (12.5,0) node[above=-4mm] {$\Omega$};
		    \draw (12.5,0) node[above=9mm] {$\bar{S}^{B}_{1}$};
		    \draw (12.5,0) node[above=4mm] {\color{orange}$R$};
      		\draw[-,black,line width=2.8pt] ($(A)!-2mm!-90:(B)$)--node {} ($(B)!-2mm!90:(A)$);
		    \draw[-,black,line width=2.8pt] ($(B)!-4mm!-90:(C)$)--node {} ($(C)!-4mm!90:(B)$);
            \draw[-,black,line width=2.8pt] ($(C)!-6mm!-90:(D)$)--node {} ($(D)!-6mm!90:(C)$);
            \draw[-,black,line width=2.8pt] ($(D)!-8mm!-90:(G)$)--node {} ($(G)!-8mm!90:(D)$);
            \draw[-,black,line width=2.8pt] ($(G)!-10mm!-90:(I)$)--node {} ($(I)!-10mm!90:(G)$);
            \draw[-,orange,line width=3.8pt] ($(B)!-3.0mm!-90:(C)$)--node {} ($(C)!-3.0mm!90:(B)$);
		    \draw[-,orange,line width=3.8pt] ($(C)!-5.0mm!-90:(D)$)--node {} ($(D)!-5.0mm!90:(C)$);
		    \draw[-,orange,line width=3.8pt] ($(D)!-7mm!-90:(I)$)--node {} ($(I)!-7mm!90:(D)$);
		    \draw[-,orange,line width=3.8pt] ($(A)!-7mm!-90:(B)$)--node {} ($(B)!-7mm!90:(A)$);
	        \end{tikzpicture}
            \caption{Revisiting Investment Example 3.}
            \label{fig:figure7}
            \end{figure}
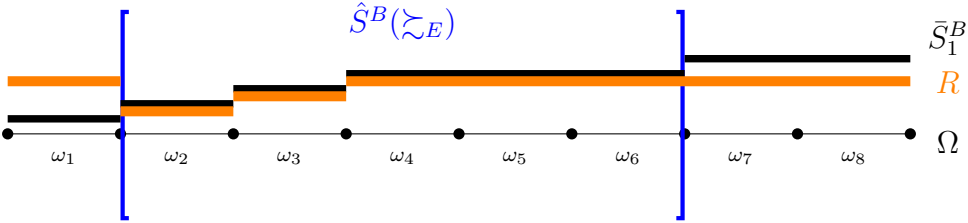
            
            one advantage of the model of attention reallocation is that the agent does not have to form a new cost function for partitions, the only thing she has to know is the attention cost on $\omega$
            
            The result shows that the agent can discard more irrelevant states by attention reallocation under some appropriate decision rules. \hyperlink{prop3}{Proposition 3} indicates that an inattentive agent is able to achieve a more sophisticated understanding of new information by attention reallocation. The key difference between the Attention Reallocation Representation and the Optimal Inattention representation is that subjective information is not just an information partition but  a ``belief" for the agent with Attention Reallocation Representation. She can use the new information to update her subjective information.  In the model of optimal inattention, the agent makes a decision based on her subjective information. There is actually no ``belief" updating, the agent does not try to have a more accurate understanding of the state space.  This is not reasonable, because if we all assume that inattention stems from redundant information, the agent must be self-aware of her inattention problem. Then why she does not try to do something to overcome the problem after the arrival of new information? My model provides an updating rule so that the agent can update her subjective information to try to overcome the inattention problem.

            \begin{proposition}\hypertarget{prop3}{Given} attention cost function, if $\romannumeral1)$ the \hyperlink{def6}{possibility selection rule} is inattentive; $\romannumeral2)$ the \hyperlink{def7}{attention reallocation rule} is adaptive; and $\romannumeral3)$ the \hyperlink{8}{conditional attention cost function} is minimal, then there exists some $B\in\mathcal{F}$, $E\in\Sigma^{\prime}$ and $\mu\in \Delta(\Omega)$, such that $S^{B}_{T}(\succsim_E)\subset S^{B}_{1}(\succsim_E)$.
            \end{proposition}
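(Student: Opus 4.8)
The statement is existential, so the plan is to exhibit a single instance --- state space, prior, menu, and attention cost --- for which the three hypotheses force $S^{B}_{T}(\succsim_E) \subsetneq S^{B}_{1}(\succsim_E)$. I would build it on the eight-state configuration of \hyperlink{exp3}{Example 3}, reusing $\hat{S}^{B} = \{s_{1},\dots,s_{5}\}$ and $E = \{\omega_{3},\omega_{4},\omega_{5}\}$, and then pin down the remaining data $(\mu, B, \mathcal{C})$ so that every step goes through by direct computation. A fully general argument is unnecessary here: we only need one witness, and the worked example already traces the behavior we want to certify.

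First, I would choose the prior $\mu$, a menu $B \subset \mathcal{F}$ of acts, and an attention cost $\mathcal{C}$ so that the stage-$1$ program (\hyperlink{eq5}{Equation 5}) is solved by the coarse partition $\hat{S}^{B}$. The point is to make $\mathcal{C}$ on $\Omega$ steep enough that every refinement of $\hat{S}^{B}$ costs more in attention than the gain in $\sum_{E\in S} \mu_{\Omega}(E)\max_{f\in B}\int u(f(\omega))\mu_{E}(d\omega)$ it would buy; since $\mathbb{P}$ is finite this is a finite set of inequalities that can be checked explicitly once numbers are fixed. This secures $S^{B}_{1} = \hat{S}^{B}$.

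Second, after $E$ realizes, the inattentive possibility-selection rule (\hyperlink{def6}{Definition 6}) forces $S^{B}_{1}(\succsim_E)$ to be the minimal-cardinality element of $\sigma(\hat{S}^{B})$ containing $E$; with this $\hat{S}^{B}$ it is the padded set $\{\omega_{2},\omega_{3},\omega_{4},\omega_{5},\omega_{6}\}$, strictly larger than $E$ precisely because $E$ is not a union of cells of $\hat{S}^{B}$. At stage $2$ the agent optimizes over adaptive partitions in $\mathbb{P}(S^{B}_{1}(\succsim_E))$ (\hyperlink{def7}{Definition 7}) under the minimal conditional cost (\hyperlink{def8}{Definition 8}), where $\mathcal{C}_{E,\mu}(\sigma(S)) = \min\{\mathcal{C}(\sigma(R))/\mu(S^{B}_{1}(\succsim_E)) : \sigma(R\,|\,S^{B}_{1}(\succsim_E)) = \sigma(S\,|\,S^{B}_{1}(\succsim_E))\}$ both discounts by $\mu(S^{B}_{1}(\succsim_E)) < 1$ and charges only for the structure of $S$ on $S^{B}_{1}(\succsim_E)$. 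I would exhibit the partition $\bar{S}^{B}$ of Figure 7 --- splitting $s_{2}$ into $\{\omega_{2}\},\{\omega_{3}\}$ while merging $\{\omega_{4},\omega_{5},\omega_{6}\}$ --- whose conditional cost is now low enough to beat $\hat{S}^{B}$ in \hyperlink{eq6}{Equation 6}, so $S^{B}_{2} = \bar{S}^{B}$. Applying the inattentive rule again yields $S^{B}_{2}(\succsim_E) = \{\omega_{3},\omega_{4},\omega_{5},\omega_{6}\} \subsetneq S^{B}_{1}(\succsim_E)$; since adaptivity plus inattentive selection make the possibility set weakly shrink across stages, $S^{B}_{T}(\succsim_E) \subseteq S^{B}_{2}(\succsim_E) \subsetneq S^{B}_{1}(\succsim_E)$, as required.

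The main obstacle is the simultaneous calibration in the first two steps: the same $\mathcal{C}$ must make the coarse partition optimal ex ante yet make the refinement $\bar{S}^{B}$ optimal at stage $2$, and these pull in opposite directions. The only thing reconciling them is the minimal conditional cost function, which scales cost down by $1/\mu(S^{B}_{1}(\succsim_E))$ and, more importantly, prices $\bar{S}^{B}$ only on the reduced state space $S^{B}_{1}(\succsim_E)$ rather than on all of $\Omega$. I would therefore spend most of the verification choosing $\mu$ and $\mathcal{C}$ so that the stage-$1$ inequality (refinement not worth it on $\Omega$) and the stage-$2$ inequality (refinement worth it after discounting) hold at once; once a numerical witness meeting both is in hand, the remaining checks --- that $\bar{S}^{B}$ is adaptive and that the inattentive rule returns the smaller possibility set --- follow immediately from the definitions.
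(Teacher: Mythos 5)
Your proposal is correct and takes essentially the same route as the paper: the paper's proof is precisely a witness construction generalizing Example 3 (stage-1 partition $\{\{\omega_{1}\},\{\omega_{2},\omega_{3}\},\dots\}$, $E=\{\omega_{3},\dots,\omega_{n-1}\}$, padded possibility set $\{\omega_{2},\dots,\omega_{n-1}\}$, and a refined stage-2 partition whose minimal conditional cost is measured via a near-costless coarse $R$, so that the agent switches and discards $\omega_{2}$). If anything you are more demanding than the paper, which simply \emph{posits} the stage-1 subjective information and asserts the agent ``might'' choose $S_{2}^{B}$, whereas you explicitly flag the joint calibration of $\mathcal{C}$ across the two stages and add the weak-shrinking argument carrying $S_{2}^{B}(\succsim_{E})\subsetneq S_{1}^{B}(\succsim_{E})$ forward to stage $T$.
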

            
            \begin{proof} Suppose for some $B\in\mathcal{F}$, $E\in\Sigma^{\prime}$ and $\mu\in \Delta(\Omega)$, her stage$1$ subjective information is
            \begin{equation*}
            S^{B}_{1} = \{\{\omega_{1}\},\{\omega_{2},\omega_{3}\},\cdots,\{\omega_{l}\},\cdots\},\ l\in\{5,\cdots,n\},
            \end{equation*}
            where $n$ is the number of states. Let $E=\{\omega_{3},\cdots,\omega_{n-1}\}$, then
            her possibility set is
            \begin{equation*}
            S_{1}^{B}(\succsim_{E}) = \{\omega_{2},\cdots,\omega_{n-1}\}.
            \end{equation*}
            Consider an information partition
            \begin{equation*}
            S_{2}^{B} = \{\{\omega_{1}\},\{\omega_{2}\},\{\omega_{3},\cdots,\omega_{n-1}\},\{\omega_{n}\}\}.
            \end{equation*}
            The agent uses the attention cost of $R=\{\{\omega_{2}\},\{\omega_{1},\omega_{3},\cdots,\omega_{n}\}\}$ to measure the cost of $S_{2}^{B}$, the cost of which is very close to zero. That is
            \begin{equation*}
            \mathcal{C}_{E,\mu}(S_{1}^{B})\gg \mathcal{C}_{E,\mu}(S_{2}^{B}).
            \end{equation*}
            Therefore, the agent might choose $S_{2}^{B}$ at the stage$2$. So we have $S_{2}^{B}(\succsim_{E})\subset S_{1}^{B}(\succsim_{E})$. 
            \end{proof}

\section{Related Theoretical Literature} \label{sec6}

            There are several related strands of theoretical literature. The first are studies which provide axiomatic foundations for representations in which an agent has to face the trade-off between complex choices and saving on the cognitive resources of doing so. In \cite{puri2018preference}, an agent dislikes a lottery with more outcomes. \cite{ortoleva2013price} models the behavior of an agent who dislikes lotteries of menus of objects that have a larger number of menus because of the cost of thinking involved in choosing from them. The agent of \cite{ergin2010unique} considers the cost of contemplation before choosing an object from a menus. Instead of lotteries, I model preferences over acts. \cite{valenzuela2020subjective} introduces a model of preferences inspired by similar considerations to ours. The complexity of an act is measured by the cardinality of the partition induced by the act. In Valenzuela-Stookey’s Simple Bounds representation, not well-understood acts are mapped to the set of well-understood acts and then compared by their expected utilities. An act is called well-understood if the number of elements of it's partition equals to a cut-off. The key difference between \cite{valenzuela2020subjective} and this paper is that this paper does not restrict agents' understanding of acts. Instead, in my model, an agent evaluates an act using it's expected utility net of the complexity cost. To illustrate the difference of behavioral implications, consider two acts $f$ and $g$. $f$ is well understood and its' corresponding partition has two elements; $g$ is not well understood and its' corresponding partition is the state space. In \cite{valenzuela2020subjective}, if the simple greatest lower bound of $f$ is preferred to the simple least upper bound of $g$, then $f$ is preferred to $g$. However, in my model, the agent might prefer $g$ to $f$ because of the complexity cost of $f$.

            \indent Second are papers which focus on the interactive decision making situation. \cite{neyman1985bounded} studies the finitely repeated game in which only strategies that use a bounded number of states in the automaton available to players. \cite{ben1993repeated}, and \cite{megiddo1986play} follow this approach.  Instead of limiting the set of strategies, \cite{abreu1988structure} assume that more complex strategies means higher costs.  \cite{rubinstein1986finite}, and \cite{abreu1988structure} restrict players’ strategies in a repeated game to those implementable by finite state automata. \cite{mengel2012learning} studies the learning process of two players who face many games. Since it requires too much cognitive resources to distinguish all games, players choose to partition the set of all games into categories. My setting is closest to \cite{abreu1988structure}.

            \indent I use the corresponding partition of an act to measure complexity which is interpreted as coarse understanding of the state space. However, the agent in my model fully understands the state space. Here, the coarseness is induced by her constraints in cognitive ability to consider the complexity of acts. I discuss two strands of related studies on coarse contingencies and show how these models differ from my model.

            \indent The first are studies which focus on coarse understanding and ambiguity. The works most relevant to my model are \cite{ahn2010framing} and \cite{epstein2007coarse}.\footnote{For more details about subjective contingencies and ambiguity, see \cite{dekel2001representing}; \cite{dillenberger2014theory}; \cite{ghirardato2001coping}; \cite{minardi2019subjective}; \cite{mukerji1997understanding}; and \cite{saponara2022revealed}.}

            \indent \cite{ahn2010framing} proposes a model of decision  making under uncertainty in which the primitive is a class of preference relations indexed by partitions of the state space. If two acts respect the same partition, then the agent ranks the two acts by their expected utilities based on her partition-dependent belief. The agent in my model behaves similarly but with respect to partition-independent belief (prior). How does the agent in the model of \cite{ahn2010framing} ranks two acts that respect different partitions, e.g., two acts $f$ and $g$ in the investment example? They define that the agent compares $f$ and $g$ by computing their expected utilities based on the coarsest common refinement of $S^{f}$ and $S^{g}$. In the investment example, the coarsest common refinement of $S^{f}$ and $S^{g}$ is $S^{f}$. If the corresponding  partition-dependent belief is $(\frac{1}{3},\frac{1}{3},\frac{1}{3})$, she prefers $f$ to $g$. However, in my model, a complexity averse agent prefers $g$ to $f$. From above example, we can conclude two main different aspects between my model and theirs. First, a complexity averse agent's belief has nothing to do with partitions. Second, a complexity averse agent compares two acts by computing expected utility net of complexity cost. She does not try to identify two acts in the same partition. 
 
            \indent \cite{epstein2007coarse} models an agent who forms some contingencies and is self-aware of the coarseness of these contingencies. They show that coarse contingencies induce a preference for hedging, as in the ambiguity aversion studies.\footnote{For more studies about ambiguity aversion, see \cite{gilboa1989maxmin}; \cite{schmeidler1989subjective}.} The key difference between my model and \cite{epstein2007coarse} is that in the latter paper the coarse contingencies are exogenous. In this model, the coarseness is endogenous and is induced by her aversion to complexity. Another noteworthy remark is that, complexity aversion does not induce a preference for hedging. In my setting, the mixture of two acts might increase or decrease the complexity level, thus, the agent in this model does not exhibit uncertainty aversion (the key axiom in the model of \cite{gilboa1989maxmin}).
 
            \indent Second are papers about rational inattention which is introduced by \citep{sims1998stickiness, sims2003implications}. In this model, attention cost is interpreted as the expected difference between the prior uncertainty about the state and the posterior uncertainty. And this theory has been applied to many economic problems.\footnote{For instance, there are many studies that apply to consumption-savings problems: \cite{sims2006rational}; \cite{mackowiak2015business}. There are also many studies that apply the model to the theory of price setting: \cite{luo2008consumption}; \cite{mackowiak2009optimal}.} However, I am more interested in axiomatic models of inattention. \cite{de2017rationally} provide an axiomatic characterization of rationally inattentive preferences over menus. \cite{ellis2018foundations} introduces a representation of preferences similar to theirs but takes a choice correspondence as a primitive. Other related studies are \cite{dillenberger2014theory};  \cite{lu2016random}. However, no studies discuss the situation in which the agent reallocates her attention after the arrival of new information. This paper fills this gap.  

            \indent There are two key differences between complexity aversion and rational inattention. First, in my model, the corresponding partition of an act is only used to measure the complexity cost of this act. A complexity averse agent has full attention and directly chooses acts both at ex-ante and ex-post stage. Instead, an optimal inattentive agent chooses what to pay attention to at the ex-ante stage, she is unable to make the decision before the arrival of information. Second, the corresponding partition of an act, chosen by a complexity averse agent, is not the same information partition that an optimal inattentive agent will choose. To see this, suppose that a complexity averse agent chooses an act $f$ at ex-ante stage such that the corresponding partition is $\{\{\omega_{1}\},\{\omega_{2},\omega_{3},\omega_{4}\},\{\Omega_{5}\}\}$. But, an inattentive agent might have a more accurate understanding of $f$, since  $\{\{\omega_{1}\},\{\omega_{2},\omega_{3}\},\{\omega_{4}\},\{\Omega_{5}\}\}$ gives the same expected utility with $f$. In other words, a complexity averse agent does not try to find an optimal partition. 

\newpage
    \bibliography{mybib}
	\bibliographystyle{apalike}   

\newpage
\section*{Appendix} \label{app}
\begin{appendices}
\section{Proofs of Section 3.} \label{appA}
\subsection{Proof of Theorem 1.} \label{appA.1}
            
            For the necessity part, it is straightforward to show that a complexity averse preference $\succsim$ satisfies \hyperlink{wo}{Axiom 1-6}. The proof of \textsl{Only if} part proceeds as a sequence of Lemmas.

            \begin{lemma} 
            \hypertarget{lem1}{For} any $f\in \mathcal{F}$, if there exists $x\in \overline{\mathcal{F}}$ with $f\succsim x$, then there exist a $x_{f}\in \overline{\mathcal{F}}$ such that $x_{f}\sim f$. 
            \end{lemma}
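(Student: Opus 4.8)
The plan is to produce $x_f$ as a certainty equivalent by trapping $f$ between two constant acts and then invoking the restricted \hyperlink{continuity}{Continuity} axiom along the segment joining them. The observation that makes this work is that any mixture of two constant acts is again a constant act, so its generated $\sigma$-algebra is the trivial one $\{\Omega,\emptyset\}$; in particular the proviso $\sigma(\alpha f+(1-\alpha)g)=\sigma(f)$ needed to apply \hyperlink{continuity}{Continuity} holds automatically along such a segment.

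First I would construct an upper bound. Since $\Omega$ is finite, $f$ attains only finitely many outcomes $\{f(\omega):\omega\in\Omega\}$, so by completeness of $\succsim$ there is a $\succsim$-greatest one; call the corresponding constant act $\bar x \in \overline{\mathcal{F}}$. By construction $\bar x \succsim f(\omega)$ for every $\omega$, and because $\bar x$ is constant we have $\sigma(\bar x)=\{\Omega,\emptyset\}\subset\sigma(f)$. Hence \hyperlink{wm}{Weak Monotonicity} — applied with the less complex act $\bar x$ dominating pointwise — yields $\bar x \succsim f$. For the lower bound I simply use the hypothesis: the given $x\in\overline{\mathcal{F}}$ satisfies $f\succsim x$. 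Note that no appeal to \hyperlink{unb}{Unboundedness} is needed here; that axiom is reserved for \hyperlink{lem2}{Lemma 2}.

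Next I would interpolate. For $\alpha\in[0,1]$ set $g_\alpha:=\alpha\bar x+(1-\alpha)x$, which lies in $\overline{\mathcal{F}}$ since $X$ is convex. Define
\[
A=\{\alpha\in[0,1]:g_\alpha\succsim f\},\qquad B=\{\alpha\in[0,1]:f\succsim g_\alpha\}.
\]
Completeness gives $A\cup B=[0,1]$, while $1\in A$ (as $\bar x\succsim f$) and $0\in B$ (as $f\succsim x$), so both sets are nonempty. Applying \hyperlink{continuity}{Continuity} with its $f,g,h$ instantiated as $\bar x,\,x,\,f$ — legitimate because $\sigma(g_\alpha)=\sigma(\bar x)$ for every $\alpha$ — shows that $A$ and $B$ are both closed. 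A connected interval cannot be covered by two disjoint nonempty closed sets, so $A\cap B\neq\emptyset$. Choosing any $\alpha^\ast\in A\cap B$ gives $g_{\alpha^\ast}\sim f$, and $x_f:=g_{\alpha^\ast}\in\overline{\mathcal{F}}$ is the desired certainty equivalent.

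The only delicate point — and the step I would flag as the crux — is the justification that \hyperlink{continuity}{Continuity} may be invoked at all, since it is stated only for mixtures that preserve the $\sigma$-algebra of the first argument. This is precisely why I restrict the interpolation to the segment between two \emph{constant} acts: along it the generated $\sigma$-algebra is identically $\{\Omega,\emptyset\}$, so the proviso is met uniformly in $\alpha$. Everything else (finiteness for the maximum outcome, convexity of $X$ for the mixtures, and connectedness of $[0,1]$) is routine.
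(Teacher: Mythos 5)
Your proof is correct and takes essentially the same route as the paper's: trap $f$ between two constant acts and extract a certainty equivalent from the closedness of the two preference sets (\hyperlink{continuity}{Continuity}) together with the connectedness of $[0,1]$. If anything, yours is slightly more careful than the paper's version --- the paper invokes \hyperlink{wm}{Weak Monotonicity} for both bounds, even though the lower bound $f\succsim x''$ does not actually follow from that axiom (it would require $\sigma(f)\subset\{\Omega,\emptyset\}$), whereas you correctly use the lemma's hypothesis $f\succsim x$ for the lower bound and explicitly verify the $\sigma$-algebra proviso $\sigma(\alpha\bar x+(1-\alpha)x)=\sigma(\bar x)$ needed to apply \hyperlink{continuity}{Continuity}.
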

            
            \begin{proof} 
            Let $x^{\prime}$ be a best outcome and $x^{\prime\prime}$ be a worst outcome that some acts induce. By \hyperlink{wm}{Axiom 4}, we have $x^{\prime}\succsim f\succsim x^{\prime\prime}$. Then, by \hyperlink{continuity}{Axiom 2}, the following two sets are closed
            \begin{equation*}
	        A_{1}=\{\alpha \in[0,1]: \alpha x^{\prime}+(1-\alpha) x^{\prime\prime} \succsim f\} \quad \text { and } \quad A_{2}=\{\alpha \in[0,1]: f \succsim \alpha x^{\prime}+(1-\alpha) x^{\prime\prime}\}
            \end{equation*}
            Since $A_{1}\cup A_{2}$ is connected, we must have $\alpha\in A_{1}\cap A_{2}$ such that $x_{f}=\alpha x^{\prime}+(1-\alpha) x^{\prime\prime} \sim f$.
            \end{proof}
            
            \newpage
            \begin{lemma} 
            \hypertarget{lem2}{There} exist an affine utility function $u: X\rightarrow \mathbb{R}$ with unbounded range and a prior probability measure $\mu$ over $\Omega$ such that
            \begin{equation*}
            U(f)=\int_{\Omega} u (f(\omega)) \mu(d\omega).
            \end{equation*} 
            \end{lemma}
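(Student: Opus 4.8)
The plan is to build the functional $U$ in three stages: first pin down the utility index $u$ on constant acts, then isolate the ``expected utility net of the complexity cost'' number $U(f)$ for a general act using \hyperlink{ac}{Aversion to Complexity}, and finally show that this number is an expectation against a single prior.

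First I would restrict $\succsim$ to the constant acts $\overline{\mathcal{F}}$. Every constant act induces the trivial partition $\{\Omega,\emptyset\}$, and any mixture of constant acts is again constant, so \hyperlink{wci}{Weak Certainty Independence} collapses to ordinary independence on $\overline{\mathcal{F}}$ and \hyperlink{continuity}{Continuity} applies (mixing preserves the trivial partition). By the mixture-space (von Neumann--Morgenstern / Herstein--Milnor) theorem on the convex set $X$, there is an affine $u:X\to\mathbb{R}$ representing $\succsim$ on $\overline{\mathcal{F}}$; \hyperlink{unb}{Unboundedness} together with affinity and convexity of $X$ forces the range of $u$ to be all of $\mathbb{R}$.

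Next I would define the target functional. For a fixed $f$ possessing a certainty equivalent (which exists by \hyperlink{lem1}{Lemma 1} whenever $f$ dominates some constant act), the hypothesis of \hyperlink{ac}{Aversion to Complexity} singles out a constant act $x^{*}$ with $\alpha x^{*}+(1-\alpha)f\sim\beta x^{*}+(1-\beta)f$ for $\alpha\neq\beta$. Because mixing $f$ with a constant leaves $\sigma(f)$ unchanged, both sides carry the same (washed-out) complexity cost, so such an indifference forces $u(x^{*})$ to equal the pure expected utility of $f$; I therefore set $U(f):=u(x^{*})$. I would verify existence of $x^{*}$ by an intermediate-value argument using the restricted \hyperlink{continuity}{Continuity} axiom (legitimate precisely because the relevant mixtures fix the partition) and well-definedness (independence of the chosen $\alpha,\beta$) from \hyperlink{wo}{Weak Order}. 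By construction $U$ agrees with $u$ on constants and, via \hyperlink{wm}{Weak Monotonicity}, is monotone in the state-wise utility profile.

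The hard part will be upgrading $U$ to the additive form $\int_{\Omega}u(f(\omega))\,\mu(d\omega)$ for a single probability measure $\mu$, rather than merely a concave or convex functional (a whole set of measures, as in variational preferences). Since $\Omega$ is finite and $u$ is onto $\mathbb{R}$, $U$ descends to a functional $I:\mathbb{R}^{\Omega}\to\mathbb{R}$ on utility profiles that is monotone and normalized ($I(c\mathbf{1})=c$), while \hyperlink{wci}{Weak Certainty Independence} delivers constant-additivity and positive homogeneity of $I$. To obtain \emph{full} linearity I would exploit that within any fixed partition class the complexity cost is constant, so $\succsim$ restricted there is represented by $U$ alone and is hence a genuine (linear) subjective-expected-utility preference; patching these linear pieces together across partitions forces a single additive $I$. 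The delicate point throughout is that partition classes are not convex, so every independence-type step must be confined to mixtures that preserve $\sigma(f)$ --- this is exactly why \hyperlink{wci}{Weak Certainty Independence} and the cost-cancellation in the definition of $U$ are needed, and why no curvature is introduced. A finite-dimensional Riesz representation then yields $I(\phi)=\sum_{\omega}\mu(\omega)\phi(\omega)$ with $\mu\ge 0$ and $\sum_{\omega}\mu(\omega)=1$, which is the claim.
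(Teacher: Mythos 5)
Your proposal is correct in substance but organizes the argument differently from the paper. The paper's own proof runs through certainty equivalents of $f$ itself: it first derives a substitution property on constants from \hyperlink{wci}{Weak Certainty Independence} plus \hyperlink{ac}{Aversion to Complexity}, takes $x_{f}\sim f$ from \hyperlink{lem1}{Lemma 1}, defines $U(f)=u(x_{f})$ (and $-\infty$ for acts below every constant), and then simply cites Herstein--Milnor and Anscombe--Aumann together with \hyperlink{wm}{Weak Monotonicity} to assert the expected-utility form; the separation of the complexity cost from this certainty-equivalent functional is deferred to Lemma 5, so the $U$ the paper actually constructs is cost-inclusive, and the displayed equation of the lemma is only made literally true later. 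You instead extract the \emph{cost-free} expected utility directly, defining $U(f):=u(x^{*})$ from the premise of \hyperlink{ac}{Aversion to Complexity} --- the constant whose mixtures with $f$ at different weights are indifferent, which is legitimate precisely because such mixtures fix $\sigma(f)$ and hence the cost --- and then you do explicitly the work the paper leaves to citations: linearity on each fixed-partition class (where the cost is constant and independence is restored), patching across classes, and a finite-dimensional Riesz step for $\mu$. Your route buys a statement-faithful $U$ and makes transparent why the global failure of \hyperlink{indep}{Independence} is harmless, which is exactly the point at which the paper's appeal to Anscombe--Aumann is glib; the paper's route is shorter. Two caveats on your sketch: \hyperlink{wci}{Weak Certainty Independence} yields translation invariance of the induced functional $I$ but \emph{not} positive homogeneity (that weakening is the whole point of the Maccheroni--Marinacci--Rustichini axiom), so that parenthetical claim is wrong, though harmless since your linearity comes from the within-class argument; and both the intermediate-value existence of $x^{*}$ (which needs the range of $u$ to be all of $\mathbb{R}$, i.e.\ \hyperlink{unb}{Unboundedness} invoked up front) and the consistency of the patched measures across partition classes are asserted rather than proved --- acceptable here, since the paper's own proof is no more detailed at the corresponding steps.
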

            
            \begin{proof} 
            For $\succsim$ on $\mathcal{F}$, by \hyperlink{wci}{Axiom 3}, if $x\sim y$, we have $\alpha x+(1-\alpha) y \sim \alpha y+(1-\alpha) y$. Then by \hyperlink{ac}{Axiom 5}, $\alpha x+(1-\alpha) z \succsim \alpha y+(1-\alpha) z$. Using \hyperlink{wci}{Axiom 3} again, we have $\alpha y+(1-\alpha) y \sim \alpha x+(1-\alpha) y$. By \hyperlink{ac}{Axiom 5}, $\alpha y+(1-\alpha) z \succsim \alpha x+(1-\alpha) z$. So we can conclude that if $x\sim y$, we have $\alpha x+(1-\alpha) z \sim \alpha y+(1-\alpha) z$. 
            
            \indent  Moreover, by \hyperlink{lem1}{Lemma 1}, we can find $x_{f}\sim f$ for all $f\in \{f\in \mathcal{F}:\exists x\in \overline{\mathcal{F}} \text{ s.t } f\succsim x\}$. Then by \hyperlink{unb}{Axiom 6}, we have the unboundedness of affine utility function. Then let 
            \begin{equation*}
 	        U(f)=\left\{\begin{array}{cl}\int_{\Omega} u (x_{f}(\omega)) \mu(d\omega) & \text { if } f\in \{f\in \mathcal{F}:\exists x\in \overline{\mathcal{F}} \text{ s.t } f\succsim x\}, \\ -\infty & \text { if } \text{otherwise}.\end{array}\right.
            \end{equation*}
            
            \noindent Just consider $g\in \{f\in \mathcal{F}:\exists x\in \overline{\mathcal{F}} \text{ s.t } f\succsim x\}$. $f\succsim g\Leftrightarrow x_{f}\succsim x_{g}$. By \hyperlink{lem1}{Lemma 1}, we have the result. Thus, together with \hyperlink{wo}{Axiom 1} and \hyperlink{continuity}{Axiom 2}, by Theorem 8 (\cite{herstein1953axiomatic}), a measurable utility can be defined on $\mathcal{F}$. And combining Monotonicity, the DM is a standard expected utility maximizer (\cite{anscombe1963definition}). 
            \end{proof}
            \ \\
            \begin{lemma} 
            \hypertarget{lem3}{If} $f\in \{f\in \mathcal{F}:\exists x\in \overline{\mathcal{F}} \text{ s.t } f\succsim x\}$, and $\sigma(f)=\sigma(g)$, then $g\in \{f\in \mathcal{F}:\exists x\in \overline{\mathcal{F}} \text{ s.t } f\succsim x\}$. 
            \end{lemma}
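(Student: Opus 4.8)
Write $\mathcal{A}:=\{h\in\mathcal{F}:\exists\,x\in\overline{\mathcal{F}}\text{ s.t. }h\succsim x\}$ for the set in question, and fix $f\in\mathcal{A}$ together with $g$ satisfying $\sigma(g)=\sigma(f)$; call the common partition $S$. The plan is to exhibit a single constant act lying weakly below $g$. The temptation is to take a worst outcome of $g$ and invoke monotonicity, but this is exactly where complexity aversion bites: a constant whose outcome lies below every $g(\omega)$ is \emph{coarser} than $g$, so the usable clause of \hyperlink{wm}{Weak Monotonicity} (coarse-and-pointwise-better $\Rightarrow$ preferred) points the wrong way and cannot place $g$ above it. Hence the lower bound must be imported from $f$, using only that $f$ and $g$ share the complexity $S$.

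First I would boost $g$ into $\mathcal{A}$. By \hyperlink{unb}{Unboundedness} the range of the utility secured in \hyperlink{lem2}{Lemma 2} is all of $\mathbb{R}$, so there is an outcome $b$ with $u(b)$ large enough that $u\big(\tfrac12 g(\omega)+\tfrac12 b\big)\ge u\big(f(\omega)\big)$ for every $\omega\in\Omega$ (finitely many inequalities, since $\Omega$ is finite). Mixing with the constant $b$ leaves the partition unchanged, so $g^{+}:=\tfrac12 g+\tfrac12 b$ respects $S$, i.e. $\sigma(g^{+})=\sigma(f)$; the equal-complexity case of \hyperlink{wm}{Weak Monotonicity} then gives $g^{+}\succsim f\succsim x$, so $g^{+}\in\mathcal{A}$ and, by \hyperlink{lem1}{Lemma 1}, $g^{+}$ admits a certainty equivalent.

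The crux is to strip the boost, i.e. to pass from $g^{+}\in\mathcal{A}$ to $g\in\mathcal{A}$. Here I would use \hyperlink{wci}{Weak Certainty Independence} together with the expected-utility structure on $\mathcal{A}$ from \hyperlink{lem2}{Lemma 2}: applying Lemma 2 to the complexity-preserving family $\tfrac12 g+\tfrac12 b'$ (which remains in $\mathcal{A}$ for all sufficiently good constants $b'$, by the previous paragraph) and letting $b'$ vary, the certainty-equivalent value of $\tfrac12 g+\tfrac12 b'$ is affine in $u(b')$ with slope $\tfrac12$, so its intercept isolates a \emph{finite} number that I read as the value of the $g$-component; equivalently, the cost attached to the class $S$ is a finite constant, pinned down through $f$ by $\mathcal{C}(\sigma(f))=\textstyle\int u(f)\,d\mu-u(x_f)$. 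Once this per-class cost is known to be finite, $\int u(g)\,d\mu-\mathcal{C}(\sigma(f))$ is a finite real, and \hyperlink{unb}{Unboundedness} (range of $u$ unbounded below) supplies a constant $x'$ with $u(x')$ no larger than this value, whence $g\succsim x'$ and $g\in\mathcal{A}$.

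I expect the stripping step to be the main obstacle. Because full \hyperlink{indep}{Independence} fails in this model, one cannot simply cancel the mixed-in constant $b$ to recover a preference about $g$ alone; the argument must instead extract finiteness of the class-$S$ cost from the affine behaviour of certainty equivalents across the complexity-preserving family, and only then use unboundedness-below to manufacture a dominated constant. An alternative packaging is by contradiction: assuming $g\notin\mathcal{A}$ forces the certainty-equivalent value of $\tfrac12 g+\tfrac12 b$ to be simultaneously finite (it lies in $\mathcal{A}$) and, by the mixture identity on $\mathcal{A}$, to equal $\tfrac12(-\infty)+\tfrac12 u(b)$, which is absurd; the delicate point is precisely justifying that mixture identity without presupposing $g\in\mathcal{A}$.
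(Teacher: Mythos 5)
Your first two steps are sound: the boost $g^{+}:=\tfrac12 g+\tfrac12 b$ does land in the set (call it $\mathcal{A}$) because mixing with a constant preserves the partition and the equal-complexity case of \hyperlink{wm}{Weak Monotonicity} applies, and the affine-in-$u(b')$ behaviour of the certainty equivalents of $\tfrac12 g+\tfrac12 b'$ is indeed a consequence of \hyperlink{wci}{Weak Certainty Independence}. The genuine gap is the final inference ``$\int u(g)\,d\mu-\mathcal{C}(\sigma(f))\ge u(x')$, whence $g\succsim x'$.'' That inference needs the representation $V(h)=\int u(h)\,d\mu-\mathcal{C}(\sigma(h))$ to be valid \emph{at} $g$, but in the paper the cost function is only constructed in \hyperlink{lem5}{Lemma 5}, whose proof runs through Lemmas 3 and 4 — so invoking $\mathcal{C}$ here is circular. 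More basically, no preference statement about $g$ versus constants follows from numerics computed entirely from acts $\tfrac12 g+\tfrac12 b'$ that lie in $\mathcal{A}$: if $g\notin\mathcal{A}$, completeness gives $x\succ g$ for every constant $x$, and nothing in your intercept computation contradicts that. The same problem sinks your contradiction packaging, since the mixture identity $U(\tfrac12 g+\tfrac12 b)=\tfrac12 U(g)+\tfrac12 U(b)$ is a property of the expected-utility representation on $\mathcal{A}$ and is simply unavailable at $g$ — you flag this yourself as the delicate point, but leave it open, and it is exactly where the proof is incomplete.

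The gap is closable with tools you already deployed, and without any appeal to $\mathcal{C}$. Since $g^{+}\in\mathcal{A}$, \hyperlink{lem1}{Lemma 1} gives a certainty equivalent $x_{g^{+}}$; as the range of $u$ is all of $\mathbb{R}$, pick $y\in\overline{\mathcal{F}}$ with $u(y)=2u(x_{g^{+}})-u(b)$, so that $\tfrac12 g+\tfrac12 b\sim \tfrac12 y+\tfrac12 b$. Now \hyperlink{wci}{Weak Certainty Independence}, used strictly within its letter (swapping one constant for another in the fixed slot), transports this to every constant: $\tfrac12 g+\tfrac12 b''\sim \tfrac12 y+\tfrac12 b''$ for all $b''\in\overline{\mathcal{F}}$. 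Choose $b''$ with $u(b'')\le\min_{\omega}u(g(\omega))$ (finitely many states); then $g(\omega)\succsim \tfrac12 g(\omega)+\tfrac12 b''$ pointwise and $\sigma(g)=\sigma(\tfrac12 g+\tfrac12 b'')$, so the equal-complexity case of \hyperlink{wm}{Weak Monotonicity} gives $g\succsim \tfrac12 g+\tfrac12 b''\sim\tfrac12 y+\tfrac12 b''$, a constant act, i.e.\ $g\in\mathcal{A}$ — note this never presupposes $g\in\mathcal{A}$, because monotonicity, not the representation, links $g$ to the mixed act. For comparison, the paper instead doubles $f$ (taking $u(f')=2u(f)$, $u(x')=0$), takes the certainty equivalent $x''$ of $\tfrac12 f'+\tfrac12 x'$, and transports the indifference across the \emph{mixing} slot, replacing the constant $x'$ by an arbitrary act $h$ chosen so that $\tfrac12 f'+\tfrac12 h$ matches $g$ in utility — a step that stretches Weak Certainty Independence beyond its stated form; your constant-slot transport, once completed as above, stays within the axiom and is arguably cleaner.
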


            \begin{proof}
            Construct $f^{\prime}\in \mathcal{F}$ and $x^{\prime}\in \overline{\mathcal{F}}$ such that $u(f^{\prime}(\omega))=2u(f(\omega))$ and $u(x^{\prime}(\omega))=0$ for all $\omega\in \Omega$. Then, we have $\frac{1}{2}u(f^{\prime}(\omega))+\frac{1}{2}u(x^{\prime}(\omega))=u(f(\omega))$, by \hyperlink{lem2}{Lemma 2}, it can be rewrite as $u(\frac{1}{2}f^{\prime}(\omega)+\frac{1}{2}x(\omega))=u(f(\omega))$. Again, by \hyperlink{lem2}{Lemma 2}, we have $\frac{1}{2}f^{\prime}+\frac{1}{2}x^{\prime}\sim f$. Using order and continuity, we can find $x^{\prime\prime}\in \overline{\mathcal{F}}$ such that $\frac{1}{2}x^{\prime\prime}+\frac{1}{2}x^{\prime}\sim \frac{1}{2}f^{\prime}+\frac{1}{2}x^{\prime}$. Then by \hyperlink{wci}{Axiom 3}, for any $f\in \mathcal{F}$, we have $\frac{1}{2}x^{\prime\prime}+\frac{1}{2}f\sim \frac{1}{2}f^{\prime}+\frac{1}{2}f$. Therefore, we can choose $f\in \mathcal{F}$, such that $u(g(\omega))=u(\frac{1}{2}f^{\prime}(\omega)+\frac{1}{2}f(\omega))$. By \hyperlink{lem2}{Lemma 2}, $g\sim \frac{1}{2}f^{\prime}+\frac{1}{2}f\sim \frac{1}{2}x^{\prime\prime}+\frac{1}{2}f$. By Compete and \hyperlink{lem1}{Lemma 1}, it is easy to find $g\in \mathcal{F}$ and $z\in\overline{\mathcal{F}}$, such that $\frac{1}{2}x^{\prime\prime}+\frac{1}{2}f\sim g\succsim \frac{1}{2}x^{\prime\prime}+\frac{1}{2}z$, where $f\succsim z$ with some $z\in\overline{\mathcal{F}}$. And by order and continuity, there exists $y\in \overline{\mathcal{F}}$, such that $\frac{1}{2}x^{\prime\prime}+\frac{1}{2}z\sim y$. Finally, we find such $y\in \overline{\mathcal{F}}$, such that $g\succeq y$. Thus, $g\in \{f\in \mathcal{F}:\exists x\in \overline{\mathcal{F}} \text{ s.t } f\succsim x\}$.
            \end{proof}
            \ \\
            \begin{lemma} 
            \hypertarget{lem4}{Given} any $S\in\mathbb{P}$, there exists $f\in \{f\in \mathcal{F}:\exists x\in \overline{\mathcal{F}} \text{ s.t } f\succsim x\}$ such that $\sigma(f) = \sigma(S)$.
            \end{lemma}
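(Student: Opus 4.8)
The plan is to build, for a given partition $S$, an act that induces exactly $\sigma(S)$ and lies in the ``relevant'' set $\mathcal{R}:=\{f\in\mathcal{F}:\exists\,x\in\overline{\mathcal{F}}\text{ s.t. }f\succsim x\}$, and the main simplification is to reduce every partition to the finest one by a domination argument. First I would record that, by \hyperlink{lem2}{Lemma 2}, the range of $u$ is all of $\mathbb{R}$; hence $X$ contains, for any finite $L$, outcomes $x_1,\dots,x_L$ with pairwise distinct utilities, as well as outcomes of arbitrarily high and low utility. Writing $S=\{s_1,\dots,s_L\}$ and setting $f(\omega)=x_\ell$ for $\omega\in s_\ell$ produces an act with $S^f=S$, i.e.\ $\sigma(f)=\sigma(S)$, so realizing the \emph{partition} is never the difficulty. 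The entire content of the lemma is to secure \emph{relevance}, that some such act is weakly preferred to a constant.

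For the reduction, let $\bar S$ denote the finest partition (all singletons), so that $\sigma(S)\subset\sigma(\bar S)$ for every $S\in\mathbb{P}$. Suppose we have produced one relevant act $h$ with $\sigma(h)=\sigma(\bar S)$, say $h\succsim y$ for some $y\in\overline{\mathcal{F}}$. Given an arbitrary $S$, I would pick outcomes $x_1,\dots,x_{|S|}$ that are mutually utility-distinct and each of higher utility than $\max_{\omega}u(h(\omega))$ (possible since $u$ is unbounded above), and let $f$ take value $x_\ell$ on $s_\ell$. Then $\sigma(f)=\sigma(S)\subset\sigma(\bar S)=\sigma(h)$ and $f(\omega)\succsim h(\omega)$ for every $\omega$, so \hyperlink{wm}{Weak Monotonicity} (Axiom 4) yields $f\succsim h$, and transitivity gives $f\succsim y$. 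Thus $f\in\mathcal{R}$ with $\sigma(f)=\sigma(S)$, and the lemma holds for every $S$ once it is known for $\bar S$.

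It therefore remains to produce a single relevant act of the finest partition, and this is the step I expect to be the main obstacle. The difficulty is structural: \hyperlink{wm}{Weak Monotonicity} and \hyperlink{ac}{Aversion to Complexity} only ever bound a complex act \emph{from above} by a simpler one, so neither can certify that a finest-partition act is preferred to \emph{any} constant; relevance is a lower bound and must come from elsewhere. The tool is \hyperlink{unb}{Unboundedness} (Axiom 6): for a fixed partition the complexity penalty is a single fixed quantity, so choosing $h$ with pairwise distinct outcomes of very high utility and then selecting, from the full range of $u$, a constant $y$ whose utility lies below the value attached to $h$ should give $h\succsim y$. The delicate point, and where I expect the real work, is certifying that this penalty is finite and hence that such a dominated constant exists. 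I would pursue this by working inside the single level $\mathcal{F}^{\bar S}$, where the restricted \hyperlink{continuity}{Continuity} applies (the family $\{\gamma h+(1-\gamma)y':\gamma\in(0,1]\}$ stays in $\mathcal{F}^{\bar S}$), and pushing the payoff of $h$ upward via Axiom 6, rather than by any limiting argument toward a constant, which the restricted \hyperlink{continuity}{Continuity} explicitly does not license.

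Finally, with relevance of $\bar S$ in hand, the domination construction of the second paragraph delivers a relevant act realizing $\sigma(S)$ for every $S\in\mathbb{P}$, which is the assertion of the lemma. One may then invoke \hyperlink{lem3}{Lemma 3} to upgrade this to the conclusion that \emph{every} act respecting $S$ is relevant, although the statement itself only requires the single act we have constructed.
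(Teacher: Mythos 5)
Your reduction step is sound, and it is in fact cleaner than what the paper does: realizing any partition by utility-distinct outcomes, then bounding the new act below by a relevant finest-partition act via \hyperlink{wm}{Weak Monotonicity} (the hypotheses $\sigma(f)\subset\sigma(h)$ and statewise domination are exactly satisfied when $h$ has the finest partition and the $x_\ell$ are chosen above $\max_\omega u(h(\omega))$, which unboundedness of $u$ permits) is a correct and self-contained argument. The paper instead argues by contradiction and simply asserts that, given a relevant act $g$ of some other partition, ``it is easy to construct an act $f'$ with $f'\sim g$ and corresponding partition $S$''; so you have correctly isolated the point that the paper itself glosses over. But this is also where your proposal stops being a proof: the base case --- the existence of \emph{one} relevant act with the finest partition $\bar S$ --- is announced as the main obstacle and then never closed. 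The strategy you nominate cannot close it. \hyperlink{unb}{Unboundedness} (Axiom 6) quantifies only over constant acts, so it can manufacture constants of arbitrarily low utility but can never certify $h\succsim y$ for a non-constant $h$; and, as you yourself observe, every axiom that mentions complexity (\hyperlink{wm}{Weak Monotonicity}, \hyperlink{ac}{Aversion to Complexity}) bounds a complex act only from \emph{above} by a simpler one. ``Pushing the payoff of $h$ upward'' therefore raises $h$ relative to other finest-partition acts but yields no comparison whatsoever between $h$ and any constant, which is precisely what relevance requires. Without this step the entire construction is conditional, and a preference in which every finest-partition act sits strictly below every constant act would falsify your conclusion while leaving your first two paragraphs intact.

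The one tool that can deliver the base case is the restricted \hyperlink{continuity}{Continuity} axiom applied along a mixture path toward a constant --- exactly the ``limiting argument toward a constant'' you explicitly set aside. Fix a finest-partition act $h$ and constants $g\succ y$ (possible by Axiom 6). For $\alpha\in(0,1]$ the mixture $\alpha h+(1-\alpha)g$ retains $\sigma(h)$, since mixing with a constant preserves the induced partition, so the axiom applies to this family even though the endpoint $\alpha=0$ (where the mixture degenerates to $g$) lies outside it. If $y\succsim \alpha h+(1-\alpha)g$ held for \emph{all} $\alpha\in(0,1]$, the set $\{\alpha: y\succsim \alpha h+(1-\alpha)g\}$ would be $(0,1]$, which is not closed in $[0,1]$ and cannot be completed to a closed set because $\alpha=0$ would force $y\succsim g$, contradicting $g\succ y$. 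Hence some $\alpha^{*}\in(0,1]$ gives $\alpha^{*}h+(1-\alpha^{*})g\succ y$, and $\alpha^{*}h+(1-\alpha^{*})g$ is a relevant act with the finest partition, after which your domination argument finishes the lemma. Note that this step leans on reading the closedness in the axiom as closedness in $[0,1]$; under the weaker relative-topology reading the gap genuinely cannot be closed from the stated axioms, which is worth flagging since the paper's own one-line construction is silent on the same point.
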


            \begin{proof} 
            By contradiction, suppose there exists such $S\in\mathbb{P}$ such that all $f$ that satisfies $\sigma(f) = \sigma(S)$ is not in $\{f\in \mathcal{F}:\exists x\in \overline{\mathcal{F}} \text{ s.t } f\succsim x\}$. That is 
            \begin{equation*}
            \sigma(f) \neq \sigma(S) \text{ for all } f\in \{f\in \mathcal{F}:\exists x\in \overline{\mathcal{F}} \text{ s.t } f\succsim x\}.
            \end{equation*}
            Consider another $S^{\prime}\in\mathbb{P}$, there exists $f\in \{f\in \mathcal{F}:\exists x\in \overline{\mathcal{F}} \text{ s.t } f\succsim x\}$ such that $\sigma(f) = \sigma(S^{\prime})$. That is we can find an act $g$ such that $g\succsim x$.
            Therefore, it is easy to construct an act $f^{\prime}$ that is $f\sim g$ but with corresponding partition $S$.
            \end{proof}
\ \\            
            \begin{lemma} 
            \hypertarget{lem5}{There} exists a cost function $\mathcal{C}: \{\sigma(S):S\in \mathbb{P}\}\rightarrow \overline{\mathbb{R}}_{+}$ such that
            \begin{equation*}
            V(f)=\int_{\Omega} u (f(\omega)) \mu(d\omega)-\mathcal{C}(\sigma(f)) 
            \end{equation*}
            \noindent where $\sigma(S)=\cap\{\mathcal{A}\subset \Sigma: S\subset \mathcal{A} \text{ and } \mathcal{A} \text{ is a } \sigma \text{-algbra}\}$, $\mathcal{A}$ is a $\sigma$-algebra of subsets of $\Omega$; and $\overline{\mathbb{R}}_{+} \equiv[0, \infty]$.
            \end{lemma}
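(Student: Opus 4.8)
The plan is to build $\mathcal{C}$ directly from certainty equivalents and then reduce the entire statement to a single well-definedness claim. By \hyperlink{lem1}{Lemma 1} every $f$ in the set $\{f:\exists x\in\overline{\mathcal{F}}\text{ s.t. }f\succsim x\}$ has a certainty equivalent $x_f\sim f$, and by \hyperlink{lem2}{Lemma 2} the functional $U(f)=u(x_f)$ represents $\succsim$ on that set. I would set
\[
\mathcal{C}(\sigma(f)) := \int_{\Omega} u(f(\omega))\mu(d\omega) - u(x_f),
\]
the gap between the expected utility of $f$ and the utility of its certainty equivalent. \hyperlink{lem4}{Lemma 4} guarantees each $\sigma(S)$ is generated by some act in this set, so $\mathcal{C}$ is defined on all of $\{\sigma(S):S\in\mathbb{P}\}$, and with \hyperlink{lem3}{Lemma 3} every act lies in this set, so every cost is finite. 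With this definition $V(f)=\int_{\Omega}u(f(\omega))\mu(d\omega)-\mathcal{C}(\sigma(f))=u(x_f)=U(f)$, so $V$ represents $\succsim$ \emph{provided} $\mathcal{C}$ depends only on $\sigma(f)$.

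Everything therefore reduces to showing that $\int_{\Omega}u(f)\,d\mu-u(x_f)$ is constant on each partition class $\mathcal{F}^S$, which I would prove in three moves. First, using the expected-utility content of \hyperlink{lem2}{Lemma 2} on a fixed complexity level together with \hyperlink{ac}{Aversion to Complexity}, I would argue that $U$ restricted to $\mathcal{F}^S$ is an increasing transform of expected utility, $U|_{\mathcal{F}^S}=H_S\big(\int u\,d\mu\big)$; the role of \hyperlink{ac}{Axiom 5} is to exclude strictly concave or convex aggregators (a CARA-type certainty equivalent is monotone and consistent with \hyperlink{wci}{Weak Certainty Independence} yet is not ordinally expected utility). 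Second, I would establish translation covariance of $U$ on $\mathcal{F}^S$: by \hyperlink{wci}{Weak Certainty Independence} the sign of the $U$-difference between $\alpha f+(1-\alpha)x$ and $\alpha y+(1-\alpha)x$ (with $y$ constant) is independent of the mixing constant $x$, and since \hyperlink{unb}{Unboundedness} lets $u(y)$ range over all of $\mathbb{R}$, this forces $U(\alpha f+(1-\alpha)x)=G_f(\alpha)+(1-\alpha)u(x)$, from which $U$ increases by exactly $\delta$ when the utility value in every state is raised by $\delta$. Third, combining the two, $H_S(t+\delta)=H_S(t)+\delta$ for all $\delta$ forces $H_S(t)=t+b_S$ with $b_S$ depending only on $S$; hence $U|_{\mathcal{F}^S}=\int u\,d\mu+b_S$ and $\mathcal{C}(\sigma(f))=-b_S$ is a function of $\sigma(f)$ alone.

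It then remains to verify the two normalizations. For a constant act $x$ one has $x_x=x$, so $\mathcal{C}(\{\Omega,\emptyset\})=u(x)-u(x)=0$. For non-negativity (needed since the codomain is $\overline{\mathbb{R}}_+=[0,\infty]$), given $f$ I would take the constant $\hat{x}$ with $u(\hat{x})=\int u(f)\,d\mu$; feeding $\hat{x}$ and $g=\hat{x}$ into the premise of \hyperlink{ac}{Aversion to Complexity}, which holds precisely because $u(\hat{x})$ equals the expected utility of $f$, yields $\hat{x}\succsim f$, whence $u(x_f)\le u(\hat{x})=\int u(f)\,d\mu$ and $\mathcal{C}(\sigma(f))\ge 0$. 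The monotonicity of $\mathcal{C}$ demanded by the full \hyperlink{def1}{Definition 1} is a separate, later use of \hyperlink{ac}{Aversion to Complexity} and is not part of this lemma.

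The hard part is the first move above: showing that on a fixed complexity level the preference is ordinally expected utility. \hyperlink{wci}{Weak Certainty Independence} only licenses mixing with \emph{constant} acts, so on its own it delivers translation covariance but not linearity along expected-utility-preserving directions; one genuinely needs \hyperlink{ac}{Aversion to Complexity}, applied through the doubling construction $u(f'(\omega))=2u(f(\omega))$ of \hyperlink{lem3}{Lemma 3} (which preserves $\sigma(f')=\sigma(f)$), to pin down that two acts with the same partition and the same expected utility are indifferent. Once that indifference is secured the three-move argument closes, but proving it rather than assuming the expected-utility form is where the real work lies.
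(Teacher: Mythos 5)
Your proposal is correct in outline and, while it lands on the same representation, it organizes the decisive step genuinely differently from the paper. The paper's proof of \hyperlink{lem5}{Lemma 5} builds well-definedness into the construction: for each partition $S$ it fixes a calibration act $g_{S}$ with $\sigma(g_{S})=\sigma(S)$, extracts via \hyperlink{lem1}{Lemma 1} a constant $x_{S}$ with $\frac{1}{2}x_{S}+\frac{1}{2}x^{\prime}\sim\frac{1}{2}g_{S}+\frac{1}{2}x^{\prime}$, and then, using the doubling act $u(f^{\prime}(\omega))=2u(f(\omega))$ (which preserves the $\sigma$-algebra), computes directly that $U(f)=\frac{1}{2}u(x_{S})+\int_{\Omega}u(f(\omega))\mu(d\omega)$ for every $f$ with $\sigma(f)=\sigma(S)$, finally setting $\mathcal{C}(\sigma(S))=-\frac{1}{2}U(x_{S})$ --- a per-partition constant by construction. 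You instead define $\mathcal{C}$ act-by-act as the certainty-equivalent gap and reduce the lemma to constancy of that gap on each class $\mathcal{F}^{S}$, which you obtain from an MMR-style translation-covariance lemma (WCI plus \hyperlink{unb}{Unboundedness} giving $U(\alpha f+(1-\alpha)x)=G_{f}(\alpha)+(1-\alpha)u(x)$) combined with the ordinal-expected-utility-per-class claim, closed by the functional equation $H_{S}(t+\delta)=H_{S}(t)+\delta$. Both proofs consume the same ingredients (certainty equivalents, the doubling trick, WCI, \hyperlink{ac}{Aversion to Complexity}), but the paper's direct computation is shorter, whereas your decomposition makes explicit exactly which axiom delivers which property --- in particular your observation that WCI alone cannot rule out multiplier/CARA-type certainty equivalents is correct and pinpoints why Axiom 5 must enter.

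Two further points of comparison. First, the step you honestly flag as the real work --- that two acts with the same partition and the same expected utility are indifferent --- is precisely the step the paper also leaves soft: its chain $f\sim\frac{1}{2}g_{S}+\frac{1}{2}f^{\prime}\sim\frac{1}{2}x_{S}+\frac{1}{2}f^{\prime}$ is justified only by citing \hyperlink{lem2}{Lemma 2}, and it tacitly substitutes a nonconstant act for the constant in WCI, which the axiom as stated does not license; so at the crux your sketch is no less complete than the published argument. Second, you verify $\mathcal{C}(\{\Omega,\emptyset\})=0$ and $\mathcal{C}\geq 0$, which the codomain $\overline{\mathbb{R}}_{+}$ demands; the paper defines $\mathcal{C}(\sigma(S))=-\frac{1}{2}U(x_{S})$ and never checks the sign, so your AC-based argument actually fills a small omission --- just note that passing from $\hat{x}\succsim\lambda f+(1-\lambda)\hat{x}$ for all $\lambda\in(0,1)$ to $\hat{x}\succsim f$ requires \hyperlink{continuity}{Axiom 2}, which applies here because mixing with a constant preserves $\sigma(f)$.
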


            \begin{proof} 
            Consider a information partition $S\in \mathbb{P}$, by \hyperlink{lem4}{Lemma 4}, there exists $f\in \{f\in \mathcal{F}:\exists x\in \overline{\mathcal{F}} \text{ s.t } f\succsim x\}$ with $\sigma(S)=\sigma(f)$. Construct an act $g_{S}\in \mathcal{F}$ such that $\sigma(g_{S})=\sigma(S)$. For example, suppose $S=\{s_{1},\cdots,s_{n}\}$:
            \begin{equation*}
	        g_{S}(\omega)=\left\{\begin{array}{cc}x_{1} & \text { if } \omega \in s_{1}, \\ \vdots\\x_{i}& \text { if } \omega \in s_{i}, \\ \vdots\\x_{n} & \text { if } \omega \in s_{n}.\end{array}\right.
            \end{equation*}
            where $x_{i}\neq x_{j}$ for all $i,j\in\{1,\cdots,n\}$. So we have $\sigma(g_{ S})=\sigma(f)$. By \hyperlink{lem3}{Lemma 3}, $g_{ S}\in\{f\in \mathcal{F}:\exists x\in \overline{\mathcal{F}} \text{ s.t } f\succsim x\}$. Then, by \hyperlink{lem1}{Lemma 1}, we can find $x_{S}\in \overline{\mathcal{F}}$ be such that $\frac{1}{2}x_{S}+\frac{1}{2}x^{\prime}\sim \frac{1}{2}g_{S}+\frac{1}{2}x^{\prime}$ ($x^{\prime}$ is the same as what we defined in the proof of \hyperlink{lem3}{Lemma 3}.). Then, we can pick $f^{\prime}\in \mathcal{F}$ be such that $u(f^{\prime}(\omega))=2u(f(\omega))$ for any $f\in \{f\in \mathcal{F}:\exists x\in \overline{\mathcal{F}} \text{ s.t } f\succsim x\}$ and any $\omega\in \Omega$. According to this construction, we have $\sigma(f)=\sigma(f^{\prime})$. Then by \hyperlink{lem2}{Lemma 2}, we have
            \begin{equation*}
            f\sim \frac{1}{2}g_{S}+\frac{1}{2}f^{\prime}\sim \frac{1}{2}x_{S}+\frac{1}{2}f^{\prime}.
            \end{equation*}
            \indent Thus, we have $U(f)=U(\frac{1}{2}x_{S}+\frac{1}{2}f^{\prime})=\frac{1}{2}U(x_{S})+\frac{1}{2}U(f^{\prime})=\frac{1}{2}U(x_{S})+\frac{1}{2}U(f^{\prime})$, that is
            \begin{equation*}
            U(f)=\frac{1}{2}U(x_{S})+\frac{1}{2}\int_{\Omega} u (2f(\omega)(d\omega))=\frac{1}{2}u(x_{S})+\int_{\Omega} u (f(\omega)(d\omega)).
            \end{equation*}
            We can define $\mathcal{C}(\sigma(S))=-\frac{1}{2}U(x_{S})$, then $U(P)=V(P)$.
            \end{proof}
            
            \newpage
            \begin{lemma} 
            \hypertarget{lem6}{If} $\succsim$ satisfies \hyperlink{ac}{Axiom 5}, then $\sigma(S)\subset \sigma(S^{\prime})$ implies $\mathcal{C}(\sigma(S))\leq \mathcal{C}(\sigma(S^{\prime})$.
            \end{lemma}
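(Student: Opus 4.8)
The representation $V(f)=U(f)-\mathcal{C}(\sigma(f))$, with $U(f)=\int_{\Omega}u(f(\omega))\,\mu(d\omega)$, is already in hand from Lemmas 1--5, so the plan is to decode \hyperlink{ac}{Axiom 5} directly into a monotonicity statement about $\mathcal{C}$, and then to exhibit, for an arbitrary comparable pair $\sigma(S)\subset\sigma(S')$, specific acts that activate the axiom. The one piece of bookkeeping I would record first is that mixing any act with a constant act leaves its partition unchanged, so $\sigma(\alpha x+(1-\alpha)f)=\sigma(f)$ and $\sigma(\lambda x+(1-\lambda)g)=\sigma(g)$; every occurrence of $\mathcal{C}$ in the axiom can then be read off immediately.

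Using this, I would decode the premise of the axiom. Under the representation, $\alpha x+(1-\alpha)f\sim\beta x+(1-\beta)f$ reads $\alpha u(x)+(1-\alpha)U(f)=\beta u(x)+(1-\beta)U(f)$ (the two $\mathcal{C}(\sigma(f))$ terms are identical and cancel), and since $\alpha>\beta$ this forces $u(x)=U(f)$. Next I would decode the conclusion: $\lambda x+(1-\lambda)g\succsim\lambda f+(1-\lambda)g$ becomes $\lambda u(x)+(1-\lambda)U(g)-\mathcal{C}(\sigma(g))\ge\lambda U(f)+(1-\lambda)U(g)-\mathcal{C}(\sigma(\lambda f+(1-\lambda)g))$. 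Substituting $u(x)=U(f)$ cancels the utility terms and leaves exactly $\mathcal{C}(\sigma(g))\le\mathcal{C}(\sigma(\lambda f+(1-\lambda)g))$. Thus \hyperlink{ac}{Axiom 5}, read on the representation, asserts precisely that $\mathcal{C}(\sigma(g))\le\mathcal{C}(\sigma(\lambda f+(1-\lambda)g))$ for every $g$, every $f\in\mathcal{F}^{c}(g)$, and every $\lambda\in(0,1)$.

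It then remains to realize an arbitrary refinement $\sigma(S)\subset\sigma(S')$ in this form. Given such $S$ (coarse) and $S'$ (fine), I would take $g$ to respect $S$, assigning distinct, strictly increasing utility values across the cells of $S$, and choose $f$ to respect $S'$, separating the $S'$-subcells inside each $S$-cell by small perturbations while remaining weakly co-monotone with $g$ across distinct $S$-cells (roughly, $f$ tracking the $S$-cell index plus a tiny subindex term). By \hyperlink{unb}{Axiom 6} the range of $u$ is all of $\mathbb{R}$, so a constant $x$ with $u(x)=U(f)$ exists and the premise of the axiom holds automatically. Picking any $\lambda\in(0,1)$ then yields $\mathcal{C}(\sigma(S))=\mathcal{C}(\sigma(g))\le\mathcal{C}(\sigma(\lambda f+(1-\lambda)g))=\mathcal{C}(\sigma(S'))$, which is the claim.

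The main obstacle is the construction of $f$: I must verify two partition identities at once, namely $\sigma(\lambda f+(1-\lambda)g)=\sigma(S')$ for the chosen $\lambda$ and $f\in\mathcal{F}^{c}(g)$, i.e.\ $\sigma(g)\subset\sigma(\lambda' f+(1-\lambda')g)$ for \emph{all} $\lambda'\in(0,1)$. The ``for all $\lambda'$'' clause is delicate, since a bad mixture weight could collapse two distinct $S$-cells. The co-monotonicity choice resolves it: for states in different $S$-cells the difference $\lambda'[f(\omega)-f(\omega')]+(1-\lambda')[g(\omega)-g(\omega')]$ is affine in $\lambda'$, nonzero at $\lambda'=0$ and of the same sign (or zero) at $\lambda'=1$, hence nonvanishing on $(0,1)$; for states in the same $S$-cell, $g$ is constant and the small $f$-perturbations reproduce exactly the $S'$-distinctions. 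This simultaneously secures $f\in\mathcal{F}^{c}(g)$ and $\sigma(\lambda f+(1-\lambda)g)=\sigma(S')$, completing the argument.
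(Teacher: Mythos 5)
Your proof is correct and follows essentially the same route as the paper: decode the premise of Axiom 5 under the representation already secured by Lemmas 1--5 to get $u(x)=U(f)$, decode its conclusion to obtain $\mathcal{C}(\sigma(g))\leq \mathcal{C}\big(\sigma(\lambda f+(1-\lambda)g)\big)$ after the utility terms cancel, and read off monotonicity of $\mathcal{C}$. You in fact go one step beyond the paper's proof, which stops at the decoded inequality: your co-monotone construction realizing an arbitrary pair $\sigma(S)\subset\sigma(S^{\prime})$ as $\sigma(g)$ and $\sigma\big(\lambda f+(1-\lambda)g\big)$ with $f\in\mathcal{F}^{c}(g)$ supplies exactly the verification the paper leaves implicit, and your affine-in-$\lambda^{\prime}$ sign argument correctly handles the ``for all $\lambda^{\prime}\in(0,1)$'' clause in the definition of $\mathcal{F}^{c}(g)$.
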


            \begin{proof} 
            Consider $\alpha x + (1-\alpha) f \sim \beta x + (1-\beta) f$, by \hyperlink{lem2}{Lemma 2}, we have
            \begin{equation*}
	        \begin{split}
	        &\int_{\Omega} u (x(\omega)) \mu(d\omega)=	\int_{\Omega} u (f(\omega)) \mu(d\omega)\\
	        & \Longrightarrow \lambda\int_{\Omega} u (x(\omega)) \mu(d\omega)=	\lambda\int_{\Omega} u (f(\omega)) \mu(d\omega)\\
	        & \Longrightarrow \lambda\int_{\Omega} u (x(\omega)) \mu(d\omega) + (1-\lambda)\int_{\Omega} u (g(\omega)) \mu(d\omega) \\
	        & =	\lambda\int_{\Omega} u (f(\omega)) \mu(d\omega) + (1-\lambda)\int_{\Omega} u (g(\omega)) \mu(d\omega) \\
	        \end{split}
            \end{equation*}
            Since we have $\lambda x + (1-\lambda)g\succsim \lambda f + (1-\lambda) g$, that is $\sigma(g)\subset \sigma(\lambda f + (1-\lambda) g)$ implies
            \begin{equation*}
	        \mathcal{C}(\sigma(g))\leq \mathcal{C}(\sigma(\lambda f + (1-\lambda) g))
            \end{equation*}
            by the fact that $\mathcal{C}(\sigma(g)) = \mathcal{C}(\sigma(\lambda x + (1-\lambda)g)$.
            \end{proof}

            \indent This completes the proof of sufficiency.
            
\subsection{Proof of Corollary 1.} \label{appA.2}

            \begin{proof}
            Suppose $\succsim$ is a complexity aversion preference represented by $\langle u,\mu,\mathcal{C}\rangle$. Then by the proof of \hyperlink{thm1}{Theorem 1}, $\int_{\Omega} u (f(\omega)(d\omega)) - \mathcal{C}\big(\sigma(f)\big)$ represents $\succsim$. Suppose for contradiction that there exists another complexity cost function $\mathcal{C}^{\prime}$ where $\int_{\Omega} u (f(\omega)(d\omega)) - \mathcal{C}^{\prime}\big(\sigma(f)\big)$ represents $\succsim$. Suppose $\mathcal{C}^{\prime}\big(\sigma(f)\big)>\mathcal{C}\big(\sigma(f))\big)$ for all $f\in\mathcal{F}^{S^{f}}$ and $\mathcal{C}^{\prime}\big(\sigma(g)\big)=\mathcal{C}\big(\sigma(g)\big)$ for all other acts $g\notin\mathcal{F}^{S^{f}}$.

            \indent For an act $f^{*}\in\mathcal{F}^{S^{f}}$, we construct a new act $f^{\prime}\in\mathcal{F}$ such that $f^{\prime}(\omega) = f^{*}(\omega) - \epsilon$ for all $\omega\in\Omega$ and a very small but positive $u(\epsilon)$. Suppose $f^{\prime}\sim g$ where $g\in\mathcal{F}$ but $g\notin\mathcal{F}^{S^{f}}$, by $\sigma(f^{*})=\sigma(f^{\prime})$, we know $f^{*}\succ g$. That is
            \begin{equation*}
	        \int_{\Omega} u (f^{*}(\omega)(d\omega)) - \mathcal{C}\big(\sigma(f^{*})\big)> \int_{\Omega} u (g(\omega)(d\omega)) - \mathcal{C}\big(\sigma(g)\big).
            \end{equation*}
            By $\mathcal{C}^{\prime}\big(\sigma(g)\big)=\mathcal{C}\big(\sigma(g)\big)$, we have 
            \begin{equation*}
	        \int_{\Omega} u (f^{*}(\omega)(d\omega)) - \mathcal{C}\big(\sigma(f^{*})\big)> \int_{\Omega} u (g(\omega)(d\omega)) - \mathcal{C}^{\prime}\big(\sigma(g)\big).
            \end{equation*}
            If $\mathcal{C}^{\prime}\big(\sigma(f)\big)>>\mathcal{C}\big(\sigma(f^{*})\big)$, we may have 
            \begin{equation*}
	        \int_{\Omega} u (f^{*}(\omega)(d\omega)) - \mathcal{C}^{\prime}\big(\sigma(f^{*})\big)< \int_{\Omega} u (g(\omega)(d\omega)) - \mathcal{C}^{\prime}\big(\sigma(g)\big).
            \end{equation*}
            Until now, we still can find a $\mathcal{C}^{\prime}$ that represents the same preference $f^{*}\succ g$. However, consider following
            \begin{equation*}
	        \begin{split}
	        &\int_{\Omega} u (f^{\prime}(\omega)(d\omega)) - \mathcal{C}\big(\sigma(f^{\prime})\big)= \int_{\Omega} u (g(\omega)(d\omega)) - \mathcal{C}\big(\sigma(g)\big).\\
	        &\Longleftrightarrow \int_{\Omega} u (f^{*}(\omega)(d\omega)) - u(\epsilon) - \mathcal{C}\big(\sigma(f^{*})\big) = \int_{\Omega} u (g(\omega)(d\omega)) - \mathcal{C}\big(\sigma(g)\big).\\
	        &\Longleftrightarrow \mathcal{C}\big(\sigma(f^{*})\big) + u(\epsilon) - \mathcal{C}\big(\sigma(g)\big) > \mathcal{C}^{\prime}\big(\sigma(f^{*})\big) - \mathcal{C}^{\prime}\big(\sigma(g)\big)\\
	        &\Longleftrightarrow  \mathcal{C}\big(\sigma(f^{*})\big) + u(\epsilon) > \mathcal{C}^{\prime}\big(\sigma(f^{*})\big)\\
	        \end{split}
            \end{equation*}
            This holds for any very small $u(\epsilon)$, so we have $\mathcal{C}\big(\sigma(f^{*})\big)=  \mathcal{C}^{\prime}\big(\sigma(f^{*})\big)$, a contradiction. The same logic to prove for a complexity cost function $\mathcal{C}^{\prime}$ that gives more different costs of acts compared to $\mathcal{C}$. 
            \end{proof}
            
\subsection{Proof of Corollary 2.}  \label{appA.3}

            \begin{proof}
            Suppose $\langle u,\mu, \mathcal{C}\rangle$ and $\langle u^{\prime}, \mu^{\prime}, \mathcal{C}^{\prime}\rangle$ represent the same preferences relations, and $\mathcal{C}$ and $\mathcal{C}^{\prime}$ are canonical.  By \hyperlink{lem2}{Lemma 2}, the preference relation has an expected utility representation, so $\mu=\mu^{\prime}$ and $\exists \beta_{1}>0$ and $\beta_{2}\in \mathbb{R}$ such that $u=\beta_{1} u^{\prime} + \beta_{2}$.\footnote{For the proof of the uniqueness of $\mu$ and $u$, we refer to Fishburn(1970).}  

            \indent Then we turn to prove $\mathcal{C} = \alpha \mathcal{C}$. Suppose $u$ and $u^{\prime}$ represent the same preference relations. Consider act $f\in\mathcal{F}$ such that there exists $x\in\overline{\mathcal{F}}$ with $f\succsim x$. By \hyperlink{lem1}{Lemma 1} there exist a $x_{f}\in\overline{\mathcal{F}}$ such that $x_{f}\sim f$. So we have 
            \begin{equation*}
            \int_{\Omega} u (x) \mu(d\omega) = \int_{\Omega} u \big( f(\omega)\big) \mu(d\omega)- \mathcal{C}\big(\sigma(f)\big)
            \end{equation*}
            and 
            \begin{equation*}
		    \begin{split}
	        & \int_{\Omega} u^{\prime} (x) \mu(d\omega) = \int_{\Omega} u^{\prime} \big( f(\omega)\big) \mu(d\omega)- \mathcal{C}^{\prime}\big(\sigma(f)\big)\\
	        & \Longrightarrow \alpha\int_{\Omega} u (x) \mu(d\omega) + \beta = \alpha\int_{\Omega} u \big( f(\omega)\big) \mu(d\omega) + \beta- \mathcal{C}^{\prime}\big(\sigma(f)\big)\\
	        & \Longrightarrow \alpha\big(\int_{\Omega} u (x) \mu(d\omega) - \int_{\Omega} u \big( f(\omega)\big) \mu(d\omega) \big) = \mathcal{C}^{\prime}\big(\sigma(f)\big)\\
	        & \Longrightarrow \alpha\mathcal{C}^{\prime}\big(\sigma(f)\big) = \mathcal{C}^{\prime}\big(\sigma(f)\big). 
		    \end{split}
            \end{equation*}
            \end{proof}

\subsection{Proofs of Theorem 2 } \label{appA.4}

            \begin{proof}
            \textsl{Only if part}. Suppose $\succsim^{1}$ has higher degree of complexity aversion than $\succsim^{2}$. If $x\succ^{1} f$, then  
            \begin{equation*}
	        \int_{\Omega} u^{1} (x(\omega)) \mu^{1}(d\omega) - \mathcal{C}^{1}\big(\sigma(x)\big)>  \int_{\Omega} u^{1} (f)(\omega)) \mu^{1}(d\omega)  - \mathcal{C}^{1}\big(\sigma(f)\big)
            \end{equation*}
            Since $x\succ^{1} f$ implies $x\succ^{2} f$, then
            \begin{equation*}
	        \int_{\Omega} u^{2} (x(\omega)) \mu^{2}(d\omega) - \mathcal{C}^{2}\big(\sigma(x)\big)>  \int_{\Omega} u^{2} (f(\omega)) \mu^{2}(d\omega)  - \mathcal{C}^{2}\big(\sigma(f)\big)
            \end{equation*}
            We know $(u^{1}, \mu^{1}) = (u^{2}, \mu^{2})$ and $\mathcal{C}\big(\sigma(x)\big)=0$ for all $x\in \overline{\mathcal{F}}$, then
            \begin{equation*}
	        \int_{\Omega} u^{1} (x(\omega)) \mu^{1}(d\omega) >  \int_{\Omega} u^{1} (f(\omega)) \mu^{1}(d\omega)  - \mathcal{C}^{1}\big(\sigma(f)\big) \geq  \int_{\Omega} u^{2} (f(\omega)) \mu^{1}(d\omega)  - \mathcal{C}^{2}\big(\sigma(f)\big)
            \end{equation*}
            Thus, we have $\mathcal{C}^{1}\big(\sigma(f)\big)\leq \mathcal{C}^{2}\big(\sigma(f)\big)$. 
            
            \noindent\textsl{If part}. If $\mathcal{C}^{1}\big(\sigma(f)\big)\leq \mathcal{C}^{2}\big(\sigma(f)\big)$, then
            \begin{equation*}
	        \int_{\Omega} u^{1} (x(\omega)) \mu^{1}(d\omega) >  \int_{\Omega} u^{1} (f(\omega)) \mu^{1}(d\omega)  - \mathcal{C}^{1}\big(\sigma(f)\big) \geq  \int_{\Omega} u^{2} (f(\omega)) \mu^{1}(d\omega)  - \mathcal{C}^{2}\big(\sigma(f)\big)
            \end{equation*}
            \end{proof}

\subsection{Proof of Theorem 3 } \label{appA.5}

            \begin{proof}
            \textsl{Only if part}. Suppose $\succsim^{1}$ has higher capacity for more complex acts than $\succsim^{2}$. If $supp(\mathcal{C}^{1})\nsubseteq supp(\mathcal{C}^{2})$, then we can find an act $h\in supp(\mathcal{C}^{2})\backslash supp(\mathcal{C}^{1})$. Since $h\notin supp(\mathcal{C}^{1})$, we cannot find an act $h^{\prime}\in supp(\mathcal{C}^{1})$ such that $\sigma(h)\subset \sigma(h^{\prime}) $, which implies $\mathcal{C}^{1}\big(\sigma(h^{\prime})\big)\geq \mathcal{C}^{2}\big(\sigma(h)\big)$.

            \indent Since $\alpha f + (1-\alpha) x \succ^{1}  \alpha f + (1-\alpha) g$ implies $\alpha f + (1-\alpha) x \succ^{2}  \alpha f + (1-\alpha) g$, we have
            \begin{equation*}
	        \mathcal{C}^{1}\Big(\sigma\big(\alpha f + (1-\alpha) x\big)\Big) - \mathcal{C}^{1}\Big(\sigma\big(\alpha f + (1-\alpha) g\big)\Big) \geq  \mathcal{C}^{2}\Big(\sigma\big(\alpha f + (1-\alpha) x\big)\Big) - \mathcal{C}^{2}\Big(\sigma\big(\alpha f + (1-\alpha) g\big)\Big).
            \end{equation*}

            Adding $\mathcal{C}^{1}\Big(\sigma\big(\alpha f + (1-\alpha) g\big)\Big)$ to both sides, we get
            \begin{equation*}
	        \begin{split}
		    & \mathcal{C}^{1}\Big(\sigma\big(\alpha f + (1-\alpha) g\big)\Big) + 
		    \mathcal{C}^{1}\Big(\sigma\big(\alpha f + (1-\alpha) x\big)\Big) -
		    \mathcal{C}^{1}\Big(\sigma\big(\alpha f + (1-\alpha) g\big)\Big) \\
		    & \ \ \ \ \ \ \geq   \mathcal{C}^{2}\Big(\sigma\big(\alpha f + (1-\alpha) x\big)\Big) + 
		    \mathcal{C}^{1}\Big(\sigma\big(\alpha f + (1-\alpha) g\big)\Big) - 
		    \mathcal{C}^{2}\Big(\sigma\big(\alpha f + (1-\alpha) g\big)\Big).\\
		    & \Longrightarrow \mathcal{C}^{1}\Big(\sigma\big(\alpha f + (1-\alpha) g\big)\Big) \geq  \mathcal{C}^{2}\Big(\sigma\big(\alpha f + (1-\alpha) x\big)\Big)  \\
		    & \ \ \ \ \ \  + \mathcal{C}^{1}\Big(\sigma\big(\alpha f + (1-\alpha) g\big)\Big)-\mathcal{C}^{2}\Big(\sigma\big(\alpha f + (1-\alpha) g\big)\Big) \\
		    & \ \ \ \ \ \  + \mathcal{C}^{1}\Big(\sigma\big(\alpha f + (1-\alpha) g\big)\Big) -	\mathcal{C}^{1}\Big(\sigma\big(\alpha f + (1-\alpha) x\big)\Big) ,
	        \end{split}
            \end{equation*}

            \noindent If $\sigma\big(\alpha f + (1-\alpha) x\big)\subset \sigma\big(\alpha f + (1-\alpha) g\big)$, we have $\mathcal{C}^{1}\Big(\sigma\big(\alpha f + (1-\alpha) g\big)\Big)\geq  	\mathcal{C}^{1}\Big(\sigma\big(\alpha f + (1-\alpha) x\big)\Big) $. Thus, above inequality can be rewrite as
            \begin{equation*}
	        \begin{split}
	        \mathcal{C}^{1}\Big(\sigma\big(\alpha f + (1-\alpha) g\big)\Big) & \geq \mathcal{C}^{2}\Big(\sigma\big(\alpha f + (1-\alpha) x\big)\Big) \\
		    & +\mathcal{C}^{1}\Big(\sigma\big(\alpha f + (1-\alpha) x\big)\Big) - 
	        \mathcal{C}^{2}\Big(\sigma\big(\alpha f + (1-\alpha) g\big)\Big)  \\
		    & + \mathcal{C}^{1}\Big(\sigma\big(\alpha f + (1-\alpha) g\big)\Big) - 
		     \mathcal{C}^{1}\Big(\sigma\big(\alpha f + (1-\alpha) x\big)\Big)\\
		    & =   \mathcal{C}^{2}\Big(\sigma\big(\alpha f + (1-\alpha) x\big)\Big) + 
		    \mathcal{C}^{1}\Big(\sigma\big(\alpha f + (1-\alpha) g\big)\Big) \\
            & - \mathcal{C}^{2}\Big(\sigma\big(\alpha f + (1-\alpha) g\big)\Big),
	        \end{split}
            \end{equation*}

            \noindent By \hyperlink{thm2}{Theorem 2}, we get $\mathcal{C}^{1}\Big(\sigma\big(\alpha f + (1-\alpha) x\big)\Big)\leq  	\mathcal{C}^{2}\Big(\sigma\big(\alpha f + (1-\alpha) g\big)\Big)$. Thus, we have
            \begin{equation*}
	        \begin{aligned}
		    \mathcal{C}^{1}\Big(\sigma\big(\alpha f + (1-\alpha) g\big)\Big)\geq  \mathcal{C}^{2}\Big(\sigma\big(\alpha f + (1-\alpha) x\big)\Big).
	        \end{aligned}
            \end{equation*}

            \indent Therefore, we find this act $h^{\prime}$.\\
            \noindent \textsl{If part}. It is obvious. 
            \end{proof}
            
\section{proofs of Section 4}\label{appB}
\subsection{Proof of Theorem 4.}  \label{appB.1}

            \begin{proof}
            \textsl{Only if} part.  \\
            \indent Step 1. The first part of the proof is the same as the proof of \hyperlink{thm1}{Theorem 1}. 

            \indent Step 2. \textsl{Minimal complexity cost function.} By \hyperlink{mcu}{Axiom 3}, given any $E\in\Sigma^{\prime}$, any $f\in\mathcal{F}$, and $x,z\in\overline{\mathcal{F}} $. If $fEz\succsim_{E} x$, then we have $fEz^{\prime}\succsim_{E} x$ for any $z^{\prime}\in\overline{\mathcal{F}}$ such that $\sigma(fEz)\subset \sigma(fEz^{\prime})$. Then by \hyperlink{lem5}{Lemma 5}, 
            \begin{equation*}
	        \begin{split}
		    & \int_{\Omega} u \big( (fEz)(\omega)\big) \mu_{E}(d\omega)- \mathcal{C}_{E,\mu}\big(\sigma(fEz)\big)\geq \ \int_{\Omega} u ( x(\omega)) \mu_{E}(d\omega)- \mathcal{C}_{E,\mu}\big(\sigma(x)\big)\\
		    & \Longrightarrow \int_{\Omega} u \big( (fEz^{\prime})(\omega)\big) \mu_{E}(d\omega)- \mathcal{C}_{E,\mu}\big(\sigma(fEz^{\prime})\big)\geq \ \int_{\Omega} u ( x(\omega)) \mu_{E}(d\omega)- \mathcal{C}_{E,\mu}\big(\sigma(x)\big)\\
	        \end{split}
            \end{equation*}
            Since $\int_{\Omega} u \big( (fEz)(\omega)\big) \mu_{E}(d\omega)=\int_{\Omega} u \big( f(\omega)\big) \mu_{E}(d\omega)$, we have 
            \begin{equation*}
		    \mathcal{C}_{E,\mu}\big(\sigma(fEz)\big)\geq \mathcal{C}_{E,\mu}\big(\sigma(fEz^{\prime})\big).
            \end{equation*}
            
            However, by monotonicity of cost function, $\sigma(fEz)\subset \sigma(fEz^{\prime})$ implies $\mathcal{C}_{E,\mu}\big(\sigma(fEz)\big)\leq \mathcal{C}_{E,\mu}\big(\sigma(fEz^{\prime})\big)$. Thus, we have $\mathcal{C}_{E,\mu}\big(\sigma(fEz)\big)= \mathcal{C}_{E,\mu}\big(\sigma(fEz^{\prime})\big)$.

            \indent The result holds for any $x,z\in\overline{\mathcal{F}}$ with $\sigma(fEz)\subset \sigma(fEz^{\prime})$. Therefore, we can find another $z^{\prime\prime}$ such that $\sigma(fEz^{\prime\prime})\subset \sigma(fEz)$. If $fEz^{\prime\prime}\succsim_{E} x$, then we have $fEz^{\prime}\succsim_{E} x$. That is $\mathcal{C}_{E,\mu}\big(\sigma(fEz^{\prime\prime})\big)= \mathcal{C}_{E,\mu}\big(\sigma(fEz^{\prime})\big)$. To conclude, there exists $fEz^{*}\in \mathcal{F}^{min}(f,E)$ such that $\mathcal{C}_{E,\mu}\big(\sigma(fEz^{\prime})\big)= \mathcal{C}_{E,\mu}\big(\sigma(fEz^{*})\big)$ for any $z^{\prime}\in\overline{\mathcal{F}}$ with $\sigma(fEz^{*})\subset \sigma(fEz^{\prime})$.  $\hfill \square$
            
            \indent Step 3. \textsl{Ordinal preference consistency.} Follows from the following lemma. 

            \begin{lemma} 
            \hypertarget{lem7}{Consider} a collection of preferences $\big(\succsim, \{\succsim_{E}\}_{E\in \Sigma^{\prime}}\big)$ such that $\succsim$  satisfies \hyperlink{wo}{Axiom 1} and \hyperlink{wm}{Axiom 4} and $\{\succsim_{E}\}_{E\in \Sigma}$ satisfies \hyperlink{dca}{Axiom 9}. Then for all $x,y\in\overline{\mathcal{F}}$ and $E\in\Sigma^{\prime}$, we have
            \begin{equation*} 
            x\succsim y \Longleftrightarrow x\succsim_{E} y.
            \end{equation*} 
            \end{lemma}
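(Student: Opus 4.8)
The plan is to restrict everything to constant acts, where both $\succsim$ and $\succsim_E$ are complete weak orders, so that it suffices to match the two induced orders on $\overline{\mathcal{F}}$, and to use Dynamic Complexity Aversion (Axiom 9) as the only bridge between the ex ante and conditional rankings. The first observation is that for a constant act the minimal-complexity representative degenerates: since $y$ is constant, $\sigma(y|E)$ is trivial, so $\mathcal{F}(y,E)$ consists of all acts constant on $E$, its coarsest member is the global constant act $y$, and the outcome required by Axiom 9 is the unique element of $y(E)=\{y\}$, giving $gEx=yEy=y$. Hence Axiom 9 specialized to $f=x$, $g=y$ reads $xEy\succsim y\Rightarrow x\succsim_E y$, and symmetrically $yEx\succsim x\Rightarrow y\succsim_E x$.

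Second, I would extract from Axiom 9 a consequentialism-type collapse on the conditional order. Applying the axiom with $f=x$, $g=xEy$, the forced outcome is $x$ (the value of $g$ on $E$), and both $xEx$ and $(xEy)Ex$ collapse to the constant act $x$; the premise is then the reflexive statement $x\succsim x$, which fires automatically and yields $x\succsim_E xEy$. Running it with $f=xEy$, $g=x$ gives $xEy\succsim_E x$, so $xEy\sim_E x$; the same device gives $yEx\sim_E y$. By transitivity of $\succsim_E$ these identities reduce the target equivalence to a comparison of the two equal-complexity acts, namely $x\succsim_E y \Leftrightarrow xEy\succsim_E yEx$.

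Third, I would assemble the ex ante inputs and close by completeness. Weak Monotonicity (Axiom 4) gives, whenever $x\succsim y$, the coarsening bounds $x\succsim xEy$ and $x\succsim yEx$, since the constant $x$ is coarser than these two-valued acts and dominates them state by state. Feeding the nontrivial direction of Axiom 9 converts an ex ante comparison of $xEy$ and $yEx$ into the conditional one, and completeness of $\succsim$ and $\succsim_E$ together with the symmetric application (roles of $x$ and $y$ swapped) yields the full equivalence, the case $x\sim y\Rightarrow x\sim_E y$ following by antisymmetry.

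The main obstacle is precisely this last bridge: to conclude $x\succsim_E y$ from $x\succsim y$ one needs the ex ante premise $xEy\succsim y$ (equivalently $xEy\succsim_E yEx$), yet Weak Monotonicity only delivers the opposite-direction bound $x\succsim xEy$. Because $xEy$ is strictly finer than the constant $y$, its complexity cost is not controlled by monotonicity and can in principle reverse the ex ante comparison, so the trivially-firing applications of Axiom 9 are not by themselves enough. The route I would take around this is to invoke the ex ante representation already obtained in Steps 1--2 together with the conditional CAR representation obtained by applying \hyperlink{thm1}{Theorem 1} to $\succsim_E$: under the minimal updating rule a constant act matches, on $E$, a globally constant act of zero cost, so $\mathcal{C}_{E,\mu}$ vanishes on constants exactly as $\mathcal{C}$ does ex ante. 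Consequently both $\succsim$ and $\succsim_E$ restrict on $\overline{\mathcal{F}}$ to the order $u(x)\ge u(y)$ of the common affine index, and the equivalence is immediate; the behavioral steps above are what force the ex ante and conditional utility indices to coincide, making this cost-free comparison legitimate.
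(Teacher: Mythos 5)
Your opening steps are sound and in fact track the paper's own mechanism: the paper's proof is exactly the direct route you first attempt, asserting $xEy\succsim y$ from $x\succsim y$ ``by \hyperlink{wm}{Weak Monotonicity},'' then applying \hyperlink{dca}{Dynamic Complexity Aversion} with the degenerate minimal representative $h=yEy=y$ to conclude $x\succsim_{E}y$, with the converse argued by contradiction. Your derivation of the conditional collapses $xEy\sim_{E}x$ and $yEx\sim_{E}y$ from Axiom 9 alone, using reflexive premises, is a genuine refinement, since \hyperlink{conse}{Consequentialism} is not among the lemma's hypotheses. More importantly, your diagnosis of the obstacle is correct and exposes the soft spot in the paper's own one-line step: the antecedent of \hyperlink{wm}{Axiom 4} requires the pointwise-dominant act to be weakly \emph{coarser}, and $\sigma(xEy)\not\subset\sigma(y)$; what the axiom actually delivers is $x\succsim xEy$, and under the representation one can have $y\succ xEy$ whenever $\mathcal{C}\big(\sigma(xEy)\big)>\mu(E)\big(u(x)-u(y)\big)$. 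Moreover, as your Step-2 collapse shows, any instantiation of Axiom 9 whose conclusion bears on the constants $x,y$ is forced to have premise $xEy\succsim y$, so there is no purely behavioral detour around this comparison.

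The genuine gap is your repair, which is circular. You invoke ``the conditional CAR representation $\ldots$ under the minimal updating rule'' to conclude that $\mathcal{C}_{E,\mu}$ vanishes on constants and that $\succsim$ and $\succsim_{E}$ share a common affine index on $\overline{\mathcal{F}}$. But a conditional representation featuring the \emph{same} $u$, the Bayesian $\mu_{E}$, and the minimal $\mathcal{C}_{E,\mu}$ is the conclusion of \hyperlink{thm4}{Theorem 4}, and this lemma is Step 3 of its proof, feeding the Step 4 in which that representation is derived. Applying \hyperlink{thm1}{Theorem 1} to $\succsim_{E}$ is legitimate (each $\succsim_{E}$ satisfies Axioms 1--6), but it yields only some triple $\langle u_{E},\mu^{E},\mathcal{C}^{E}\rangle$ with no link between $u_{E}$ and $u$; the coincidence of the two indices on constants is precisely the ``ordinal preference consistency'' the lemma must establish, and your behavioral steps never compare $\succsim$ and $\succsim_{E}$ on a pair where Axiom 9's premise actually fires, so they cannot force it. A non-circular completion would use the ex ante representation together with \hyperlink{unb}{Unboundedness} to fire the premise $xEy\succsim y$ for pairs with $u(x)-u(y)>\mathcal{C}\big(\sigma(xEy)\big)/\mu(E)$, obtain agreement of $u$ and $u_{E}$ on such large-gap pairs, and then exploit affinity of both indices to propagate agreement to all pairs --- noting that this uses axioms beyond the lemma's stated hypotheses (Axioms 1, 4, 9). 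Note also that the biconditional requires strict preservation ($y\succ x\Rightarrow y\succ_{E}x$), which Axiom 9's weak conclusion does not supply; this is a defect your sketch shares with the paper's converse step, which asserts $y\succ_{E}x$ where Dynamic Complexity Aversion gives only $y\succsim_{E}x$. As written, your proposal closes neither hole.
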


            \begin{proof} 
            By completeness, suppose $x\succsim y$. By \hyperlink{wm}{Axiom 4}, $x\succsim y$ implies $xEy\succsim y$ for any $E\in\Sigma$. Then, by \textsl{Dynamic Complexity Aversion}, $xEy\succsim yEy$ implies $x\succsim_{E} y$. Conversely, suppose $x\succsim_{E} y$ and $y\succ x$. Again, by \hyperlink{wm}{Axiom 4}, $y\succ x$ implies $yEx\succsim x$ for any $E\in\Sigma$. With \hyperlink{dca}{Dynamic Complexity Aversion}, we can conclude $y\succ_{E}x$, which contradicts with $x\succsim_{E}y$. Thus, $x\succsim y$. 
            \end{proof}

            \indent Step 4. \textsl{Bayesian Updating.} Suppose for any $E\in\Sigma^{\prime}$, and any $f,g\in\mathcal{F}$ such that $x_{f}=x_{g}$ where $fEx_{f}\in \mathcal{F}^{min}(f,E)$ and $gEx_{g}\in \mathcal{F}^{min}(g,E)$. Then for $x_{g}\in\overline{\mathcal{F}}$ we have
            \begin{equation*}
	        \begin{split}
		    f\succsim_{E} g& \Longleftrightarrow fEx_{g}\succsim gEx_{g}\\
		    &  \Longleftrightarrow \int_{\Omega} u \big( (fEx)(\omega)\big) \mu(d\omega)- \mathcal{C}\big(\sigma(fEx_{g})\big) \\
		    & \ \ \ \ \ \  \ \geq \int_{\Omega} u \big( (gEx)(\omega)\big) \mu(d\omega)- \mathcal{C}\big(\sigma(gEx_{g})\big) \\
		    &  \Longleftrightarrow \int_{E} u\big( f(\omega)\big) \mu(d\omega) +  \int_{\Omega\backslash E} u\big( x(\omega)\big) \mu(d\omega)- \mathcal{C}\big(\sigma(fEx_{g})\big) \\
		    & \ \ \ \ \ \  \ \geq\int_{E} u\big( g(\omega)\big) \mu(d\omega) +  \int_{\Omega\backslash E} u\big( x(\omega)\big) \mu(d\omega)- \mathcal{C}\big(\sigma(gEx_{g})\big) \\
		    &  \Longleftrightarrow \int_{E} u\big( f(\omega)\big) \mu(d\omega) - \mathcal{C}\big(\sigma(fEx_{g})\big) \geq\int_{E} u\big( g(\omega)\big) \mu(d\omega) - \mathcal{C}\big(\sigma(gEx_{g})\big) \\
		    &  \Longleftrightarrow \frac{1}{\mu(E)}\int_{E} u\big( f(\omega)\big) \mu_{E}(d\omega) - \mathcal{C}_{E,\mu}\big(\sigma(f)\big) \geq  \frac{1}{\mu(E)}\int_{E} u\big( g(\omega)\big) \mu_{E}(d\omega) - \mathcal{C}_{E,\mu}\big(\sigma(g)\big) \\
	        \end{split}
            \end{equation*}
            The last equation comes from step 2.

            \noindent \textsl{If} part.  

            \textsl{Minimal Complexity Updating.} With the existence of minimal complexity cost function, we have
            \begin{equation*}
		    \begin{split}
	        \mathcal{C}_{E,\mu}\big(\sigma(fEz)\big) &= \mathcal{C}_{E,\mu}\big(\sigma(fEz^{*})\big).\\
	        & = \mathcal{C}_{E,\mu}\big(\sigma(fEz^{*\prime})\big)\\
		    \end{split}
            \end{equation*}
            for any $z,z^{\prime}\in \overline{\mathcal{F}}$, $\sigma(fEz)\subset \sigma(fEz^{\prime})$, and $fEz^{*}\in \mathcal{F}^{min}(f,E)$. Thus, if $fEz\succsim_{E} fEz^{\prime}$, then 
            \begin{equation*}
	        \begin{split}
            & \int_{\Omega} u \big( (fEz)(\omega)\big) \mu_{E}(d\omega)- \mathcal{C}_{E,\mu}\big(\sigma(fEz)\big)\geq \ \int_{\Omega} u ( x(\omega)) \mu_{E}(d\omega)- \mathcal{C}_{E,\mu}\big(\sigma(x)\big)\\
            & \Rightarrow \int_{\Omega} u \big( f(\omega)\big) \mu_{E}(d\omega)- \mathcal{C}_{E,\mu}\big(\sigma(fEz^{*})\big)\geq \ \int_{\Omega} u ( x(\omega)) \mu_{E}(d\omega)- \mathcal{C}_{E,\mu}\big(\sigma(x)\big)\\
            & \Rightarrow \int_{\Omega} u \big( (fEz^{\prime})(\omega)\big) \mu_{E}(d\omega)- \mathcal{C}_{E,\mu}\big(\sigma(fEz^{*})\big)\geq \ \int_{\Omega} u ( x(\omega)) \mu_{E}(d\omega)- \mathcal{C}_{E,\mu}\big(\sigma(x)\big)\\
            & \Rightarrow \int_{\Omega} u \big( (fEz^{\prime})(\omega)\big) \mu_{E}(d\omega)- \mathcal{C}_{E,\mu}\big(\sigma(fEz^{\prime})\big)\geq \ \int_{\Omega} u ( x(\omega)) \mu_{E}(d\omega)- \mathcal{C}_{E,\mu}\big(\sigma(x)\big)\\
	        \end{split}
            \end{equation*}
            Hence, we have $fEz^{\prime}\succsim_{E}x$. 

            \textsl{Dynamic Complexity Aversion.} For any $E\in \Sigma^{\prime}$, and any $f,g\in \mathcal{F}$, we can find $h\in \mathcal{F}^{min}(g,E)$. Then, there exists $x\in g(E)$ such that $h=gEx$. If $fEx\succsim gEx$, then by \hyperlink{lem5}{Lemma 5} and \hyperlink{mcu}{Axiom 8} we have
            \begin{equation*}
	        \begin{split}
		    & \int_{\Omega} u \big( (fEx)(\omega)\big) \mu(d\omega)- \mathcal{C}\big(\sigma(fEx)\big)\geq \int_{\Omega} u \big( (gEx)(\omega)\big) \mu(d\omega)- \mathcal{C}\big(\sigma(gEx)\big)\\
		    & \Rightarrow  \int_{E} u \big( f(\omega)\big) \mu(d\omega) + \int_{\Omega\backslash E} u \big( x(\omega)\big) \mu(d\omega) - \mathcal{C}\big(\sigma(fEx)\big)\\
		    & \ \ \ \ \geq \int_{E} u \big( g(\omega)\big) \mu(d\omega) + \int_{\Omega\backslash E} u \big( x(\omega)\big) \mu(d\omega) - \mathcal{C}\big(\sigma(gEx)\big)\\
		    & \Rightarrow \int_{E} u \big( f(\omega)\big) \mu(d\omega) - \mathcal{C}\big(\sigma(fEx)\big) \geq \int_{E} u \big( g(\omega)\big) \mu(d\omega) - \mathcal{C}\big(\sigma(gEx)\big)\\
		    & \Rightarrow \int_{E} u \big( f(\omega)\big) \mu_{E}(d\omega) - \frac{1}{\mu(E)}\mathcal{C}\big(\sigma(fEx)\big) \geq \int_{E} u \big( g(\omega)\big) \mu_{E}(d\omega) - \frac{1}{\mu(E)}\mathcal{C}\big(\sigma(gEx)\big)\\
		    & \Rightarrow \int_{E} u \big( f(\omega)\big) \mu_{E}(d\omega) - \int_{E} u \big( g(\omega)\big) \mu_{E}(d\omega)\\
		    & \ \ \ \ \ \geq \frac{1}{\mu(E)}\mathcal{C}\big(\sigma(fEx)\big)- \frac{1}{\mu(E)}\mathcal{C}\big(\sigma(gEx)\big)\\
		    & \ \ \ \ \ \geq\mathcal{C}_{E,\mu}\big(\sigma(f)\big)- \mathcal{C}_{E,\mu}\big(\sigma(g)\big)\\
		    & \Rightarrow \int_{E} u \big( f(\omega)\big) \mu_{E}(d\omega) - \mathcal{C}_{E,\mu}\big(\sigma(f)\big)\geq \int_{E} u \big( g(\omega)\big) \mu_{E}(d\omega) - \frac{1}{\mu(E)}\mathcal{C}_{E,\mu}\big(\sigma(g)\big)\\
	        \end{split}
            \end{equation*}
            The last inequality comes from the two facts: $\mathcal{C}_{E,\mu}\big(\sigma(g)\big)=\frac{1}{\mu(E)}\mathcal{C}\big(\sigma(gEx)\big)$  and $\mathcal{C}_{E,\mu}\big(\sigma(f)\big)=\frac{1}{\mu(E)}\mathcal{C}\big(\sigma(fEx_{f})\big)\geq \frac{1}{\mu(E)}\mathcal{C}\big(\sigma(fEx)\big)$ where $fEx_{f}\in \mathcal{F}^{min}(f,E)$. Therefore, we can conclude $fEx\succsim gEx \Longrightarrow f\succsim_{E} g.$ 

            This completes the proof of necessity of \hyperlink{thm4}{Theorem 4}.
            \end{proof}
            
\subsection{Proofs of Corollary 3} \label{appB.2}
            \begin{proof}
            \noindent \textsl{No act is costless}. By the definition of $\mathcal{C}_{E,\mu}$, given $E\in\Sigma^{\prime}$ and $\mu\in \Delta(\Omega)$, $\mathcal{C}_{E,\mu}(x)=min\{\mathcal{C}(h)/\mu(E): h\in \mathcal{F}, \text{ and }\sigma(h|E)=\sigma(x|E)\}=0$. And $\mathcal{C}_{E,\mu}(f)>0$ for all $f\in\mathcal{F}\backslash \overline{\mathcal{F}}$ and $\sigma(f|E)\neq \{E\}$  

            \indent \textsl{Monotonicity}. Suppose $\sigma(g|E)\subset \sigma(f|E)$, it is easy to see that $\sigma(gEx)\subset \sigma(fEx)$ for any $x\in\overline{\mathcal{F}}$.By the definition of $\mathcal{C}_{E,\mu}$, we have $\sigma(fEx_{f})\subset \sigma(fEx_{g})$ where $fEx_{f}\in \mathcal{F}^{min}(f,E)$. Since $\sigma(gEx_{f})\subset \sigma(fEx_{f})$ and $\sigma(gEx_{g})\subset \sigma(gEx_{f})$ where $gEx_{g}\in \mathcal{F}^{min}(g,E)$, we have $\sigma(gEx_{g})\subset \sigma(fEx_{f})$, which implies $\mathcal{C}\big(\sigma(gEx_{g})\big)\leq \mathcal{C}\big(\sigma(fEx_{f})\big)$. Hence, we have $\mathcal{C}_{E,\mu}\big(\sigma(f)\big)\geq \mathcal{C}_{E,\mu}\big(\sigma(g)\big)$.

            \indent \textsl{Uniqueness}.  Then by the proof of \hyperlink{thm4}{Theorem 4}, $\int_{\Omega} u (f(\omega)\mu_{E}(d\omega)) - \mathcal{C}_{E,\mu}\big(\sigma(f)\big)$ represents $\succsim_{E}$. Suppose for contradiction that there exists another conditional complexity cost function $\mathcal{C}^{\prime}_{E,\mu}$ where $\int_{\Omega} u (f(\omega)\mu_{E}(d\omega)) - \mathcal{C}^{\prime}_{E,\mu}\big(\sigma(f)\big)$ represents $\succsim$. Suppose $\mathcal{C}^{\prime}_{E,\mu}\big(\sigma(f)\big)<\mathcal{C}_{E,\mu}\big(\sigma(f))\big)$ for all $f\in\mathcal{F}^{S^{f}}$ and $\mathcal{C}^{\prime}\big(\sigma(g)\big)=\mathcal{C}\big(\sigma(g)\big)$ for all other acts $g\notin\mathcal{F}^{S^{f}}$.

            \indent For an act $f^{*}\in\mathcal{F}^{S^{f}}$, we construct a new act $f^{\prime}\in\mathcal{F}$ such that $f^{\prime}(\omega) = f^{*}(\omega) + \epsilon$ for all $\omega\in\Omega$ and a very small but positive $u(\epsilon)$. Suppose $f^{\prime}\sim_{E} g$ where $g\in\mathcal{F}$ but $g\notin\mathcal{F}^{S^{f}}$, by $\sigma(f^{*})=\sigma(f^{\prime})$, we know $g\succ_{E} f^{*}$. That is
            \begin{equation*}
	        \begin{split}
		    &\int_{\Omega} u (g(\omega)\mu_{E}(d\omega)) - \mathcal{C}_{E,\mu}\big(\sigma(g)\big)= \int_{\Omega} u (f^{\prime}(\omega)\mu_{E}(d\omega)) - \mathcal{C}_{E,\mu}\big(\sigma(f^{\prime})\big)\\
		    &\Longleftrightarrow \int_{\Omega} u (g(\omega)\mu_{E}(d\omega)) - \mathcal{C}_{E,\mu}\big(\sigma(g)\big) = 
		    \int_{\Omega} u (f^{*}(\omega)\mu_{E}(d\omega)) + u(\epsilon) - \mathcal{C}_{E,\mu}\big(\sigma(f^{*})\big)\\
		    &\Longleftrightarrow \mathcal{C}_{E,\mu}\big(\sigma(f^{*})\big) - u(\epsilon) - \mathcal{C}_{E,\mu}\big(\sigma(g)\big) < \mathcal{C}^{\prime}_{E,\mu}\big(\sigma(f^{*})\big) - \mathcal{C}^{\prime}_{E,\mu}\big(\sigma(g)\big)\\
		    &\Longleftrightarrow  \mathcal{C}_{E,\mu}\big(\sigma(f^{*})\big) - u(\epsilon) < \mathcal{C}^{\prime}_{E,\mu}\big(\sigma(f^{*})\big)\\
	        \end{split}
            \end{equation*}
            This holds for any very small $u(\epsilon)$, so we have $\mathcal{C}_{E,\mu}\big(\sigma(f^{*})\big)=  \mathcal{C}^{\prime}_{E,\mu}\big(\sigma(f^{*})\big)$, a contradiction. 
            \end{proof}

\subsection{Proofs of Corollary 4} \label{appB.3}
            
            \begin{proof} 
            Suppose $\langle u,\mu, \mathcal{C}, \{\mathcal{C}_{E,\mu}\}_{E\in\Sigma^{\prime}}\rangle$ and $\langle u^{\prime}, \mu^{\prime}, \mathcal{C}^{\prime}, \{\mathcal{C}^{\prime}_{E,\mu}\}_{E\in\Sigma^{\prime}}\rangle$ represent the same preferences relations, and $\mathcal{C}$ and $\mathcal{C}^{\prime}$ are defined as in \hyperlink{def1}{Definition 1}.  By \hyperlink{lem2}{Lemma 2}, the preference relation has an expected utility representation, so $\mu=\mu^{\prime}$ and $\exists \beta_{1}>0$ and $\beta_{2}\in \mathbb{R}$ such that $u=\beta_{1} u^{\prime} + \beta_{2}$. 

            \indent Then we turn to prove $\mathcal{C}_{E,\mu} = \alpha \mathcal{C}_{E,\mu}$. Suppose $u$ and $u^{\prime}$ represent the same preference relations. Given any $E\in\Sigma^{\prime}$, consider act $f\in\mathcal{F}$ such that there exists $x\in\overline{\mathcal{F}}$ with $f\succsim_{E} x$. By \hyperlink{lem1}{Lemma 1}, there exist a $x_{f}\in\overline{\mathcal{F}}$ such that $x_{f}\sim_{E} f$. So we have 
            \begin{equation*}
            \int_{\Omega} u (x) \mu_{E}(d\omega) = \int_{\Omega} u \big( f(\omega)\big) \mu_{E}(d\omega)- \mathcal{C}_{E,\mu}\big(\sigma(f)\big)
            \end{equation*}
            and 
            \begin{equation*}
		    \begin{split}
	        & \int_{\Omega} u^{\prime} (x) \mu_{E}(d\omega) = \int_{\Omega} u^{\prime} \big( f(\omega)\big) \mu_{E}(d\omega)- \mathcal{C}^{\prime}_{E,\mu}\big(\sigma(f)\big)\\
	        & \Longrightarrow \alpha\int_{\Omega} u (x) \mu_{E}(d\omega) + \beta = \alpha\int_{\Omega} u \big( f(\omega)\big) \mu_{E}(d\omega) + \beta- \mathcal{C}^{\prime}_{E,\mu}\big(\sigma(f)\big)\\
	        & \Longrightarrow \alpha\big(\int_{\Omega} u (x) \mu_{E}(d\omega) - \int_{\Omega} u \big( f(\omega)\big) \mu_{E}(d\omega) \big) = \mathcal{C}^{\prime}\big(\sigma(f)\big)\\
	        & \Longrightarrow \alpha\mathcal{C}^{\prime}_{E,\mu}\big(\sigma(f)\big) = \mathcal{C}^{\prime}_{E,\mu}\big(\sigma(f)\big). 
		    \end{split}
            \end{equation*}
            \end{proof}
            
\section{Proofs of Section 5} \label{appC}

            \noindent \textbf{Proof of \hyperlink{prop1}{Proposition 1}.} The principal solves the problem with two steps. 

            \begin{proof}
            \noindent Step1: choosing the optimal wage scheme for different complexity level. Consider following three cases.
            
            Case1: $|S^{w}|=1$, that is $W(\omega_{1})=W(\omega_{2})=W(\omega_{3})=W$

            $\sum_{\omega}\mu_{1}(\omega)\mu(\omega)\sqrt{W}=0<c$ violates constraint (2).

            Case2: $|S^{w}|=3$, that is $W(\omega_{1})\neq W(\omega_{2})\neq W(\omega_{3})$

            The Lagrangian for this this problem is 
            \begin{equation*}
	        \begin{split}
	        \mathcal{L} &= -\sum_{\omega}\mu_{1} (\omega)W(\omega)- 3\delta- \lambda \left[\bar{u} - \sum_{\omega}\mu_{1}(\omega)\sqrt{W(\omega)}\right]\\
	        &\ \quad - \beta\left[c-\sum_{\omega}\mu_{1}(\omega)\mu(\omega)\sqrt{W(\omega)} \right]
	        \end{split}
            \end{equation*}
            \indent The first conditions are 
            \begin{equation*}
	     	\frac{\partial \mathcal{L}}{\partial W(\omega)} = -\mu_{1}(\omega) +  \lambda \mu_{1}(\omega)\frac{1}{2\sqrt{W(\omega)}} + \beta \mu_{1}(\omega)\mu(\omega)\frac{1}{2\sqrt{W(\omega)}} = 0, \tag{c1}
            \end{equation*}
            \begin{equation*}
		    \frac{\partial \mathcal{L}}{\partial \lambda} = \bar{u} - \sum_{\omega}\mu_{1}(\omega)\sqrt{W(\omega)} \leq0, \tag{c2}
            \end{equation*}		
            \begin{equation*}
		    \frac{\partial \mathcal{L}}{\partial \beta} = c-\sum_{\omega}\mu_{1}(\omega)\mu(\omega)\sqrt{W(\omega)} \leq0. \tag{c3}
            \end{equation*}
            \indent where (c2) and (c3) hold with equality if $\lambda\neq 0$ and $\beta\neq 0$.

            \indent(c1) can be rewritten as 
            \begin{equation*}
	        \frac{1}{2\sqrt{W(\omega)}}  = \frac{1}{ \lambda +\beta \mu(\omega)}. \tag{c4}
            \end{equation*}
 
            \indent Suppose  that $\beta = 0$. Then (c4) implies $W(\omega)$ is a constant. This means (c3) fails. Thus, $\beta\neq 0$.

            \indent Suppose that $\lambda = 0$.  Then (c4) implies $\sqrt{W(\omega)}$ is negative for at least one $\omega$. Thus, $\lambda \neq 0$.

            \indent Hence, (c2) and (c3) hold with equality. From (c1), we have 
            \begin{equation*}
	        \lambda = 2\bar{u}; \beta =\frac{-2c}{\sum_{\omega}\mu_{0}(\omega)\mu(\omega)}. 
            \end{equation*}
            \indent So we have 
            \begin{equation*}
	        W(\omega) = (\bar{u} - c\mu(\omega)(\sum_{\omega}\mu_{0}(\omega)\mu(\omega))^{-1})^{2},
            \end{equation*}
            and the minimal cost:
            \begin{equation*}
	        \bar{u}^{2} + \frac{c^{2}\sum_{\omega}\mu_{1}(\omega)\mu^{2}(\omega)}{\sum_{\omega}\mu_{0}(\omega)\mu(\omega)} + 3\delta
            \end{equation*}

            Case3: $|S^{w}|=2$, that is $W(\omega_{i})= W(\omega_{j})\neq W(\omega_{k})$ for any $i,j,k\in \{1,2,3\}$

            The Lagrangian for this this problem is 
            \begin{equation*}
	        \begin{split}
		    \mathcal{L} & = -(1-\mu_{1}(\omega_{k}))W - \mu_{1}(\omega_{k})W(\omega_{k})- 2\delta \\ &\ \quad - \lambda \left[\bar{u} - (1-\mu_{1}(\omega_{k}))\sqrt{W} - \mu_{1}(\omega_{k})\sqrt{W(\omega_{k})}\right] \\
		    &\ \quad - \beta\left[c-\mu_{1}(\omega_{k})\mu(\omega_{k})(\sqrt{W(\omega_{k})} - \sqrt{W}) \right]
	        \end{split}
            \end{equation*}

            We have 
            \begin{equation*}
	        (\lambda,\beta) = (2\bar{u}, \beta =\frac{2c}{\mu^{2}(\omega_{k})/(1-\mu_{1}(\omega_{k}))}),
            \end{equation*}
            So we have 
            \begin{equation*}
	        W(\omega_{i}) = W(\omega_{j})= (\bar{u} - c\mu_{1}(\omega_{k})/\mu(\omega_{k}))^{2}. 
            \end{equation*}

            By $\mu_{1}(\omega)\neq \mu_{0}(\omega)$ for at least one $\omega$, there must exist $k\in\{1,2,3\}$ such that $\mu(\omega_{k})<0$. Hence, above result must exist. Moreover, we have
            \begin{equation*}
	        W(\omega_{k}) = (\bar{u} + c(1-\mu_{1}(\omega_{k}))/\mu(\omega_{k}))^{2}. 
            \end{equation*}
            and the minimal cost:
            \begin{equation*}
	        \bar{u}^{2} +\frac{c^{2}\mu_{1}(\omega_{k})(1-\mu_{1}(\omega_{k}))}{\mu^{2}(\omega_{k})} + 2\delta
            \end{equation*}

            \noindent Step2: comparing the design cost among above three cases. 
            The optimal wage scheme is $|S^{W}|=2$ if 
            \begin{equation*}
	        \begin{split}
	        \bar{u}^{2} +\frac{c^{2}\mu_{1}(\omega_{k})(1-\mu_{1}(\omega_{k}))}{\mu^{2}(\omega_{k})} + 2\delta & \leq  \bar{u}^{2} + \frac{c^{2}\sum_{\omega}\mu_{1}(\omega)\mu^{2}(\omega)}{\sum_{\omega}\mu_{0}(\omega)\mu(\omega)} + 3\delta\\
	        & = \frac{c^{2}\sum_{\omega}\mu_{1}(\omega)\mu^{2}(\omega)}{\sum_{\omega}\mu_{0}(\omega)\mu(\omega)} + c^{2} +3\delta - c^{2}\\
	        & = 3\delta - c^{2}
	        \end{split}
            \end{equation*}
            That is
            \begin{equation*}
            \frac{\mu_{1}(\omega_{k})(1-\mu_{1}(\omega_{k}))}{\mu^{2}(\omega_{k})} \leq \delta/c^{2} - 1. 
            \end{equation*}
            \end{proof}

            \noindent \textbf{Proof of \hyperlink{prop2}{Proposition 2}.} 
            \begin{proof}
            It is immediately from the proof of \hyperlink{prop1}{Proposition 1}.
            \end{proof}
            
\section{An Application to Competitive Markets with Uncertainty} \label{appD}

            \indent In this section, I show how the model is embedded in the general equilibrium framework. The model provides an interpretation for extreme prices. Following the general equilibrium literature, I represent uncertianty by assuming that endowments and preferences depend on the state of the world. 

            Formally, consider an economy consisting of $I$ consumers (a typical consumer is denoted $i \in \{1,\cdots, I\}$), and $L$ goods (indexed by $l=1,\cdots,L$). Again, here we take $\Omega$ to be a finite set of states of the world, a typical element is denoted $\omega\in\{\omega_{1},\cdots, \omega_{n}\}$. Let $\mu^{i}_{\omega}$ denote the probability of the state $\omega$ (which could be objective or subjective). Then a state-contingent comsumption vector of consumer $i$ is specified by 
            \begin{equation*}
	        c^{i} = (c^{i}_{1\omega_{1}},\cdots,c^{i}_{L\omega_{1}},\cdots,c^{i}_{1\omega_{n}},\cdots,c^{i}_{L\omega_{n}}).
            \end{equation*}

            It is clearer if we rewrite $c^{i}$ as $c^{i} = (c^{i}_{\omega_{1}},\cdots,c^{i}_{\omega_{n}})$, where $c^{i}_{\omega_{1}}=(c^{i}_{1\omega_{1}},\cdots,c^{i}_{L\omega_{1}})$, i.e., $c^{i}_{\omega}$ is the consumer $i$'s consumption vector under state $\omega$. Similarly, we have consumer $i$'s endowment vector
            \begin{equation*}
	        e^{i} = (e^{i}_{1\omega_{1}},\cdots,e^{i}_{L\omega_{1}},\cdots,e^{i}_{1\omega_{n}},\cdots,e^{i}_{L\omega_{n}}),
            \end{equation*}

            \indent A complexity averse consumer's utility of the consumption vector $c^{i}\in \mathbb{R}^{LS}_{+}$ is 
            \begin{equation*}
	        U(c^{i}) =  \underset{\omega\in \Omega}{\sum}u^{i}_{\omega}(c^{i}_{\omega})\mu^{i}_{\omega} - \mathcal{C}^{i}(c^{i}),
            \end{equation*}
            where $ \mathcal{C}^{i}(\cdot)$ is consumer $i$'s complexity cost function. Here a consumption vector can be view as an act. Preference relations are defined on consumption vectors as what we discussed in section 3. With complexity cost function, a complexity averse consumer $i$ is unable to choose every consumption vector that a standard consumer can choose. Note that this embeds the standard model as a special case, where $\mathcal{C}^{i}(c^{i})=0$ for all $c^{i}\in \mathbb{R}^{LS}_{+} = C$ and $i \in \{1,\cdots, I\}$. This also embeds CCE (\cite{gul2017coarse}) as a special case, where $\mathcal{C}^{i}(c^{i})=0$ for all crude comsumption vectors and $\mathcal{C}^{i}(c^{i})=\infty$ for all others. 

            \begin{definition} 
            \textsl{Given an economy} $\mathcal{E}$ \textsl{specified by} $\{u^{i},e^{i},\mathcal{C}^{i},\mu^{i}\}^{I}_{i=1}$, \textsl{an allocation} $(\hat{c}^{1}, \cdots, \hat{c}^{I})$, \textsl{and a price vector} $p=(p_{1\omega_{1}},\cdots,p_{L\omega_{n}})$ \textsl{constitute a} \textbf{complexity averse competitive equilibrium}(CACE)  \textsl{if} 
            \begin{enumerate}
	        \renewcommand{\labelenumi}{(\theenumi)}
	        \item \textsl{For every} $i$, $\hat{c}^{i}\in C^{i}$ \textsl{is maximal for} $\succsim^{i}$ \textsl{in the budget set}
	        \begin{equation*}
		    \{c^{i}\in C^{i}: p\cdot c^{i}\leq p\cdot e^{i}\};
	        \end{equation*}
	        \item $\sum_{i}\hat{c}^{i} = \sum_{i}e^{i}$.\\
            \end{enumerate}
            \end{definition}

            \indent To illustrate the idea, consider the following example of consumption under uncertainty. Consider an economy $\mathcal{E}$ with  $I= 3,\ L=1$, and  $\Omega=\{\omega_{1},\omega_{2},\omega_{3}\}$. Suppose we have $e^{1}=(1,0,0)$, $e^{2}=(0,2,0)$, $e^{3}=(0,0,3)$, and utility index of the form $u^{i}=\ln c^{i}_{\omega}$. Thus, an agent's utility of her consumption vector $c^{i}$ is $U(c^{i})=\mu^{i}_{\omega_{1}}\ln c^{i}_{\omega_{1}}+\mu^{i}_{\omega_{2}}\ln c^{i}_{\omega_{2}}+\mu^{i}_{\omega_{3}}\ln c^{i}_{\omega_{3}}$, where $(\mu^{i}_{\omega_{1}},\mu^{i}_{\omega_{2}},\mu^{i}_{\omega{3}})=(\frac{1}{3},\frac{1}{3},\frac{1}{3})$ are the agent $i$'s subjective probabilities of the three states. 

            \indent Since there are three states, consumers could have five different information structures 
            \begin{equation*}
	        (\{1,2,3\}), (\{1\},\{2\},\{3\}), (\{1,2\},\{3\}), (\{1\},\{2,3\}), (\{1,3\},\{2\}). 
            \end{equation*}
            Thus, every agent has three types of consumption vectors $C\in \{C_{1},C_{2},C_{3}\}$. $C_{1}$ means that the agent chooses to make a most complex consumption vector where $c^{i}_{\omega}\neq c^{i}_{\omega^{\prime}}$ for any $\omega,\omega^{\prime}\in \Omega$ corresponding to information structure $(\{\omega_{1}\},\{\omega_{2}\},\{\omega_{3}\})$; $C_{2}$ is the coarsest consumption vector where $c^{i}_{\omega}= c^{i}_{\omega^{\prime}}$ for any $\omega,\omega^{\prime}\in \Omega$ corresponding to information structure $(\{\omega_{1},\omega_{2},\omega_{3}\})$; $C_{3}$ indicates that the agent is able to make some complex consumption vectors where $c^{i}_{\omega}=c^{i}_{\omega^{\prime}}\neq c^{i}_{\omega^{\prime\prime}}$ for any $\omega,\omega^{\prime},\omega^{\prime\prime}\in \Omega$. Preferences are well defined, so the agent is capable of identifying the cost for all consumption vectors. By assumption, we have $\mathcal{C}(\hat{c}^{ij})=\mathcal{C}(\hat{c}^{ik})\ \forall \hat{c}^{ij},\hat{c}^{ik}\in C_{1}$, $\mathcal{C}(\bar{c}^{ij})=\mathcal{C}(\bar{c}^{ik})\ \forall \bar{c}^{ij},\bar{c}^{ik}\in C_{3}$, and $\mathcal{C}(\tilde{c}^{ij})=\mathcal{C}(\tilde{c}^{ik})=0\ \forall \hat{c}^{ij},\tilde{c}^{ik}\in C_{2}$. Moreover, $\mathcal{C}(\hat{c}^{i})>\mathcal{C}(\bar{c}^{i})>  \mathcal{C}(\tilde{c}^{i})$ for any $\hat{c}^{i}\in C_{1 },\bar{c}^{i}\in C_{3},\tilde{c}^{i}\in C_{2}$. 

            \indent If all consumers choose consumption vectors in $C_{1}$, this is the standard competitive equilibrium (SCE), we have $c^{*} = (c^{i}_{1},c^{i}_{2}, c^{i}_{3})=(\frac{1}{3},\frac{2}{3},1)$ for all $i\in I$, $\delta=(1,0,0)$, and the prices are given by $p^{*}=(p_{1}, p_{2}, p_{3})=(1,\frac{1}{2},\frac{1}{3})$. Note that when consumer 1 chooses a plan in $C_{1}$, her utility is $\frac{1}{3}\ln\big(\frac{(p_{1})^{2}}{27p_{2}p_{3}}\big)-\mathcal{C}(C_{1})$. If consumer 1 chooses a consumption vector where $c^{1}_{1}= c^{1}_{2}\neq c^{1}_{3}$, it is easy to verify that her utility is  $\frac{1}{3}\ln\big(\frac{4(p_{1})^{3}}{27(p_{1}+p_{2})^{2}p_{3}}\big)-\mathcal{C}(C_{3})$. If we let $\mathcal{C}(C_{1})-\mathcal{C}(C_{3}) > \frac{1}{3}\ln(\frac{(p_{1}+p_{2})^{2}}{4p_{1}p_{2}})$, then $(c^{*},p^{*})$ is not an equilibrium any more. Suppose consumer 1 is complexity averse and her best choice is to pick consumption vectors in $C_{3}$ with specific cost function $\mathcal{C}^{1}$, and consumer 2 and 3 are standard agents. Then, we can find a CACE $(\hat{c}, \hat{p})=\Big(\big((\frac{1}{2},\frac{1}{2},1.2),(\frac{2}{9},\frac{2}{3},\frac{4}{5}),(\frac{5}{18},\frac{5}{6},1)\big);\big(1,\frac{1}{3},\frac{5}{18}\big)\Big)$. There is a significant difference between SCE price $p^{*}$ and CACE price $\hat{p}$. We can see that $\frac{p^{*}_{1}}{p^{*}_{3}}<\frac{\hat{p}_{1}}{\hat{p}_{3}}$, which indicates that CACE prices might be more extreme than SCE prices.
\end{appendices}

\end{document}